\newtheorem{theorem}{Theorem}
\newtheorem{lemma}{Lemma}
\newtheorem{corollary}{Corollary}
\newtheorem{proposition}{Proposition}
\newtheorem{remark}{\bf Remark}
\def\phi{\varphi}
\def\l{\left}
\def\r{\right}
\def\({\left(}
\def\){\right)}
\def\b0{{\mathbf{0}}}
\newcommand{\nn}{\nonumber}
\begin{document}

\title{\huge Energy-Efficient  Resource Allocation for \\ Mobile-Edge Computation Offloading} 
\author{Changsheng You, Kaibin Huang, Hyukjin Chae and Byoung-Hoon Kim   \thanks{\noindent C. You and K. Huang are  with the Dept. of EEE at The  University of  Hong Kong, Hong Kong (Email: csyou@eee.hku.hk, huangkb@eee.hku.hk). H. Chae  and B.-H. Kim are with LG Electronics,  S. Korea (Email: hyukjin.chae@lge.com, bh.kim@lge.com). This work was supported by a grant from LG Electronics. Part of this work has been submitted to IEEE Globecom 2016.}}
\maketitle

\begin{abstract} 
\emph{Mobile-edge computation offloading} (MECO) offloads intensive mobile computation to clouds located at the edges of cellular networks. Thereby, MECO is envisioned as a  promising technique for  prolonging  the battery lives and enhancing the computation capacities of mobiles.  In this paper, we study resource allocation for a multiuser MECO system based on time-division multiple access (TDMA) and orthogonal frequency-division multiple access (OFDMA). First, for the TDMA MECO system with infinite or finite cloud computation capacity, the optimal resource allocation is formulated as a convex optimization problem  for minimizing  the weighted sum   mobile energy consumption under the constraint on computation latency. The optimal policy is proved to have a threshold-based structure with respect to a derived \emph{offloading priority function}, which  yields priorities  for users according  to their channel gains and local computing energy consumption.  As a result, users with priorities above and below  a given  threshold perform complete  and minimum offloading, respectively. Moreover, for the cloud with finite capacity, a sub-optimal resource-allocation algorithm  is proposed to reduce the computation complexity for computing the threshold. Next, we consider the OFDMA MECO system, for which the optimal resource allocation is formulated as a  mixed-integer problem. To solve this challenging problem and characterize its policy structure, a low-complexity sub-optimal  algorithm is proposed by transforming the OFDMA problem to its TDMA counterpart. The corresponding resource allocation  is derived by defining an \emph{average offloading priority function} and shown to have close-to-optimal performance in simulation. 
\end{abstract}
\begin{IEEEkeywords}
Mobile-edge computing, resource allocation, mobile computation offloading, energy-efficient computing.
\end{IEEEkeywords}

\section{Introduction}
The realization of Internet of Things (IoT) \cite{Melanie:Sensor:2012} will connect tens of billions of resource-limited  mobiles, e.g., mobile devices, sensors and wearable computing devices, to Internet via cellular networks. The finite battery lives and limited computation capacities of mobiles pose  significant challenges for designing  IoT. One promising solution is to leverage mobile-edge computing \cite{patel2014mobile} and offload intensive mobile computation to nearby clouds at the edges of cellular networks, called \emph{edge clouds},  with short latency, referred to as \emph{mobile-edge computation offloading} (MECO).  In this paper, we consider a MECO system with a single edge cloud serving multiple users and investigate the energy-efficient resource allocation.  

\subsection{Prior Work}
Mobile computation offloading (MCO) \cite{kumar:CanOffloadSave2010}  (or mobile cloud computing) has been extensively studied in  computer science, including system architectures (e.g., MAUI \cite{cuervo:MAUI:2010}), virtual machine migration \cite{xiao2013dynamic} and power management \cite{van2010performance}.  It is commonly assumed that the implementation of MCO relies on a network architecture with a central cloud (e.g., a data center). This architecture has the drawbacks of high overhead  and long backhaul latency \cite{Ahmed2015ahmedsurvey}, and will soon encounter the performance bottleneck of finite backhaul capacity in view of exponential mobile traffic  growth. These issues can be overcome by MECO based on a network architecture supporting distributed mobile-edge computing. Among others, designing energy-efficient control policies is a key challenge for the MECO system.

Energy-efficient MECO requires the joint design of MCO and wireless communication techniques. Recent years have seen research progress on this topic for both single-user \cite{zhang:MobileMmodel:2013, you2015energyJSAC,dynamic2016zhang,xiang2014energy} and multiuser \cite{chen2015efficient,Barbarossa:MobileCloud:2014,zhao2015cooperative,ge2012game,kaewpuang2013framework } MECO systems.  For a single-user MECO system, the optimal offloading decision policy was derived in \cite{zhang:MobileMmodel:2013} by comparing the energy consumption of optimized local computing (with variable CPU cycles) and offloading (with  variable transmission rates). This framework was  further developed in \cite{you2015energyJSAC} and \cite{dynamic2016zhang} to enable adaptive offloading powered by wireless  energy transfer and energy harvesting, respectively. Moreover,  dynamic offloading was integrated with adaptive LTE/WiFi link selection in \cite{xiang2014energy} to achieve higher energy efficiency.  For  multiuser MECO systems, the control policies for energy savings are more complicated. In \cite{chen2015efficient}, distributed computation  offloading for multiuser MECO at a single cloud was designed using  game theory for both energy-and-latency minimization at mobiles.  A multi-cell MECO system was considered in \cite{Barbarossa:MobileCloud:2014}, where the radio and computation resources were jointly allocated to minimize the mobile energy consumption under offloading latency  constraints.  With the coexistence of central and edge clouds, the optimal user scheduling for offloading to different clouds was studied in \cite{zhao2015cooperative}. In addition to total mobile energy consumption,  cloud energy consumption for computation was also minimized  in \cite{ge2012game} by designing the mapping between clouds and mobiles for offloading  using game theory.  The cooperation among clouds  was further investigated in \cite{kaewpuang2013framework} to maximize the revenues of clouds and meet mobiles' demands via resource pool sharing.   Prior work on MECO resource allocation focuses on complex algorithmic designs and yields little insight into the optimal policy structures. In contrast, for a multiuser MECO system based on time-division multiple access (TDMA), the optimal resource-allocation policy is shown in the current work to have a simple threshold-based structure with respect to a derived offloading priority function. This insight is used for designing the low-complexity resource-allocation policy for a orthogonal frequency-division multiple access (OFDMA) MECO system.

Resource allocation for traditional multiple-access communication systems has been widely studied,  including TDMA (see e.g., \cite{wang2008power}),  OFDMA (see e.g., \cite{wong1999multiuser}) and code-division multiple access (CDMA) (see e.g., \cite{oh2003optimal}). Moreover, it has been designed for existing networks such as cognitive radio \cite{le2008resource} and heterogenous networks \cite{choi2010joint}. Note that all of them only focus on the radio resource allocation. In contrast, for the newly proposed MECO systems,  both the computation and radio resource allocation at the edge cloud are jointly  optimized for the maximum mobile energy savings, making the algorithmic design more complex.

\subsection{Contribution and Organization}
This paper considers resource allocation in  a multiuser MECO system based on TDMA and OFDMA. Multiple mobiles are required to compute different computation loads with the same latency constraint. Assuming that computation data can be split for separate computing, each mobile can simultaneously perform local computing and offloading. Moreover,  the edge cloud is assumed to have perfect knowledge of local computing energy consumption, channel gains and fairness factors at all users, which is used for designing centralized resource allocation to achieve the minimum weighted sum mobile energy consumption. In the TDMA MECO system, the optimal threshold-based policy is derived for both the cases of infinite and finite cloud capacities. For the OFDMA MECO system, a low-complexity sub-optimal  algorithm is proposed to solve the mixed-integer resource allocation problem.

The contributions of current work are  as follows.
\begin{itemize}
\item{\emph{TDMA MECO with infinite cloud capacity:} For TDMA MECO with infinite (computation) capacity, a convex optimization problem is formulated to minimize the weighted sum mobile energy consumption under the time-sharing constraint.  To solve it, an \emph{offloading priority function} is derived  that yields priorities for users and depends on their channel gains and local computing energy consumption. Based on this, the optimal  policy is proved to have a threshold-based structure that determines complete and minimum offloading for users with priorities above and below a given threshold, respectively.}
\item{\emph{TDMA MECO with finite cloud capacity:} The above results are extended to the case of finite capacity. Specifically, the optimal resource allocation policy is derived by defining an \emph{effective offloading priority function} and modifying the threshold-based policy as derived for the infinite-capacity cloud. To reduce the complexity arising from a two-dimension search for Lagrange multipliers, a simple and low-complexity algorithm is proposed based on the \emph{approximated} offloading priority order. This reduces the said search to a one-dimension search, shown by simulation to have close-to-optimal performance. }
\item{\emph{OFDMA MECO:} For a infinite-capacity cloud based on OFDMA, the insight of \emph{priority-based} policy structure of  TDMA is used for optimizing its resource allocation. Specifically, to solve the corresponding mixed-integer optimization problem, a low-complexity sub-optimal  algorithm is proposed. Using average sub-channel gains, the OFDMA  resource allocation problem  is transformed into its TDMA counterpart. Based on this, the initial resource allocation and offloaded data allocation can be determined by defining an \emph{average offloading priority function}. Moreover, the integer sub-channel assignment is performed according to the offloading priority order, followed by adjustments of offloaded data allocation over assigned sub-channels. The proposed algorithm is shown to have close-to-optimal performance by simulation and can be extended to the finite-capacity cloud case.  }
\end{itemize}

The reminder of this paper is organized as follows. Section II introduces the system model. Section III presents the problem formulation for multiuser MECO based on TDMA. The corresponding resource allocation policies are characterized in Section IV and Section V for both the cases of infinite and finite cloud capacities, respectively. The above results are extended in Section VI for the OFDMA system. Simulation results and discussion are given in Section VII, followed by the conclusion in Section VIII.

\section{System Model}\label{Sec:Sys}
Consider a multiuser MECO system shown in Fig.~\ref{Fig:Sys_Multiuser_MEC} with  $K$ single-antenna mobiles, denoted by a set $\mathcal{K}=\{1,2,\cdots, K\}$, and one single-antenna base station (BS) that is the gateway of an  edge cloud.  These mobiles are required to compute different computation loads under the same latency constraint. \footnote{For asynchronous computation offloading among users, the maximum additional latency for each user is one time slot. Moreover, this framework can be extended to predictive computing by designing control policies for the coming data. 
}
Assume that the BS has perfect knowledge of multiuser channel gains, local computing energy  per bit  and sizes of input data at all users, which can be obtained by feedback. Using these information, the BS selects offloading users, determines the offloaded data sizes and allocates radio resource to offloading users with the criterion of minimum weighted sum mobile energy consumption.

\subsection{Multiple-Access Model}\label{Section:MA}
 Both the TDMA and OFDMA systems are considered as follows. For the TDMA system, time is divided into slots each with a duration of $T$ seconds where $T$ is chosen to meet the user-latency requirement. As shown in Fig.~\ref{Fig:Sys_Multiuser_MEC}, each time slot comprises two sequential phases for 1) mobile offloading or local computing and 2) cloud computing and downloading of computation results from the edge cloud to mobiles. Cloud computing has small latency; the downloading  consumes negligible mobile energy and furthermore is much faster than offloading due to relative smaller sizes of computation results. For these reasons, the second phase is assumed to have a negligible duration compared to the first phase  and not considered in resource allocation. For the OFDMA system, the total bandwidth  is divided into multiple orthogonal sub-channels and each sub-channel can be assigned to  at most one user.  The offloading mobiles will be allocated with one or more sub-channels.

Considering an arbitrary  slot in TDMA/OFDMA, the BS schedules a subset of users for complete/partial offloading. The user with partial or no offloading computes a fraction of or all input data, respectively, using a local CPU.

\subsection{Local-Computing Model}
 Assume that the CPU frequency is fixed at each user and may vary over users. Consider an arbitrary time slot. Following the model in \cite{chen2015efficient}, let $C_k$ denote the number of CPU cycles required for computing $1$-bit of input data at the $k$-th mobile,  and $P_k$ the  energy consumption per  cycle for local computing at this user. Then the product $C_kP_k$ gives  computing energy per bit.  As shown in Fig.~\ref{Fig:Sys_MCC}, mobile  $k$ is required to compute $R_k$-bit input data within the time slot, out of which $\ell_k$-bit is offloaded and $(R_k -\ell_k)$-bit is computed locally.  Then the total energy consumption for local computing at mobile  $k$, denoted as $E_{\text{loc},k}$, is given by $E_{\text{loc},k}= (R_k-\ell_k)C_k P_k$. Let $F_k$  denote the computation capacity of  mobile $k$ that is measured by the number of CPU cycles per second. Under the computation latency constraint, it has $C_k (R_k-\ell_k)/F_k \le T.$ As a result, the offloaded data at mobile $k$ has the minimum size of   $\ell_k \ge m_k^+$ with $m_k=R_k-F_k T/C_k$, where $(x)^+=\max\{x, 0\}.$

\subsection{Computation-Offloading Model}
First, consider the TDMA system for an arbitrary time slot. Let $h_k$ denote the channel gain for mobile $k$ that is constant during offloading duration, and $p_k$ its transmission power.  Then the achievable rate (in bits/s), denoted by $r_k$, is:
\begin{equation}
r_k=B \log_2\l(1+\dfrac{p_k h_k^2}{N_0}\r) \label{Eq:TDMARate}
\end{equation} 
where  $B$ and $N_0$ are  the bandwidth and  the variance of complex white Gaussian channel noise, respectively. The fraction of slot allocated to mobile $k$ for offloading is denoted as $t_k$ with $t_k\geq 0$,   where $t_k =0$ corresponds to no offloading. For the case of offloading ($t_k > 0$), under the assumption of negligible cloud computing and result downloading time (see Section~\ref{Section:MA}), the transmission rate is fixed as $r_k=\ell_k/t_k$ since this is the most energy-efficient transmission policy under a deadline constraint \cite{PrabBiyi:EenergyEfficientTXLazyScheduling:2001}. Define a function $f(x)=N_0 (2^{\frac{x}{B}}-1)$. It follows from \eqref{Eq:TDMARate} that the energy consumption  for offloading at  mobile $k$ is
\begin{equation}
E_{\text{off},k}=p_k t_k=\dfrac{t_k}{h_k^2} f\!\l(\dfrac{\ell_k}{t_k}\r). \label{Eq:OffEgy}
\end{equation} 
Note that if either $\ell_k=0$ or $t_k = 0$, $E_{\text{off},k}$ is equal to zero.

Next, consider an OFDMA system with $N$ sub-channels, denoted by a set $\mathcal{N}=\{1,2,\cdots, N\}$.  Let $p_{k,n}$ and $h_{k,n}$ denote the transmission power and channel gain of mobile $k$ on the $n$-th sub-channel.   Define $\rho_{k,n} \!\!\in\! \{0,1\}$ as the sub-channel assignment indicator variable where $\rho_{k,n}=1$ indicates that   sub-channel $n$ is assigned to mobile $k$, and verse vice. Then the achievable rate (in bits/s) follows:
\begin{equation}\label{Eq:OFDMRate}
r_{k,n}= \rho_{k,n} \bar{B} \log_2\l(1+\dfrac{p_{k,n} h_{k,n}^2}{\bar{N_0}}\r) 
\end{equation} 
where $\bar{B}$ and $\bar{N_0}$ are the bandwidth and noise power for each sub-channel, respectively. Let  $\ell_{k,n}=r_{k,n} T$ denote the offloaded data size over the offloading duration time $T$ that can be set as the OFDMA symbol duration. The corresponding offloading  energy consumption can be expressed as below, which is similar to that in \cite{wong1999multiuser}, namely,
\begin{equation}
E_{\text{off},k,n}=\rho_{k,n}   p_{k,n} T=\dfrac{ \rho_{k,n}}{\bar{h}_{k,n}^2} \bar{f}\!\l(\dfrac{\ell_{k,n}}{ \rho_{k,n}}\r) \label{Eq:OffEgy2}
\end{equation} 
 where $\bar{h}_{k,n}^2=h_{k,n}^2/T$ and $\bar{f}(x)=\bar{N}_0 (2^{\frac{x}{\bar{B}T}}-1)$.

\begin{figure}[t]
\begin{center}
\includegraphics[width=8cm]{./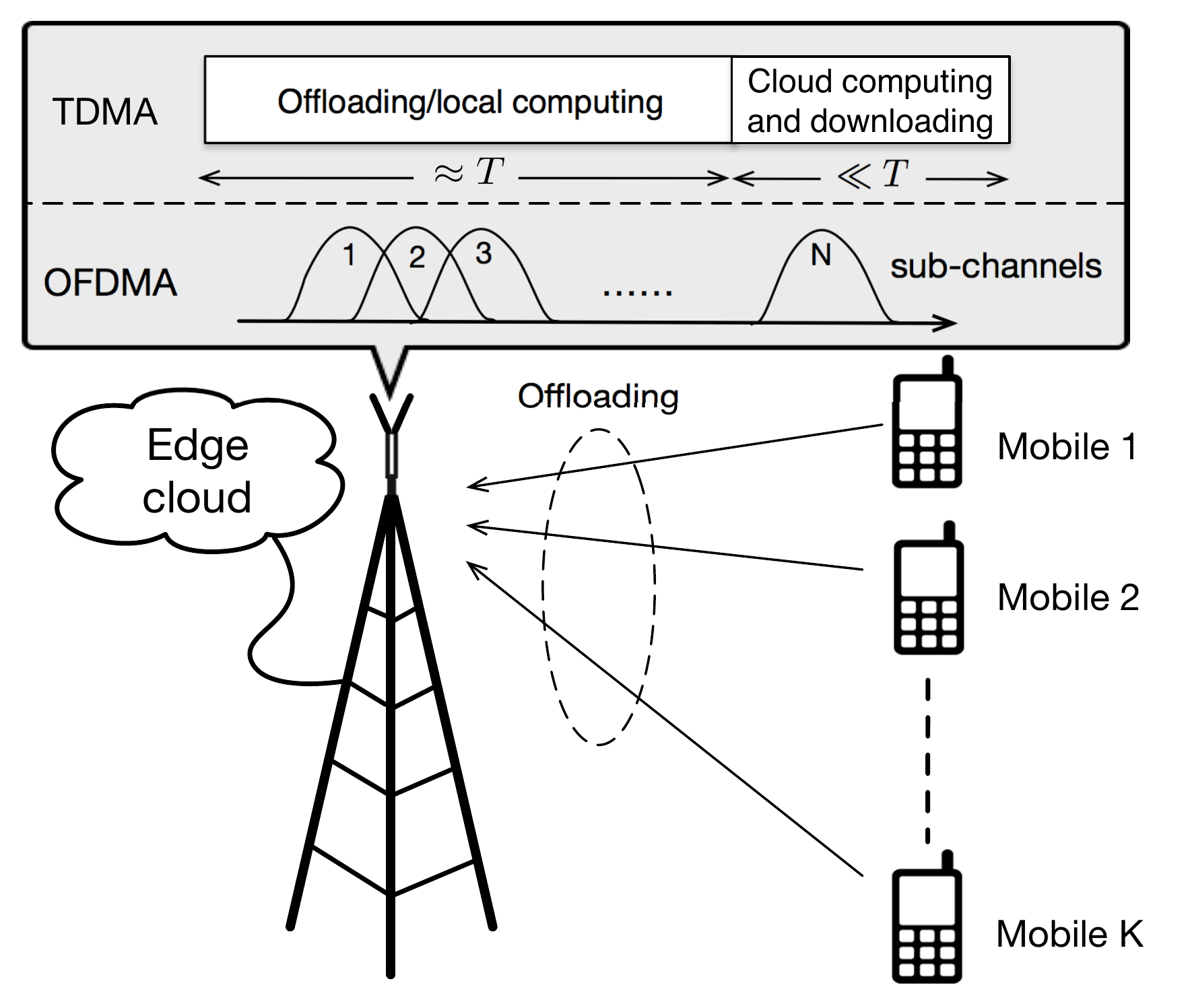}
\caption{Multiuser MECO systems based on TDMA and OFDMA.}
\label{Fig:Sys_Multiuser_MEC}
\end{center}
\end{figure}

\begin{figure}[t]
\begin{center}
\includegraphics[width=8cm]{./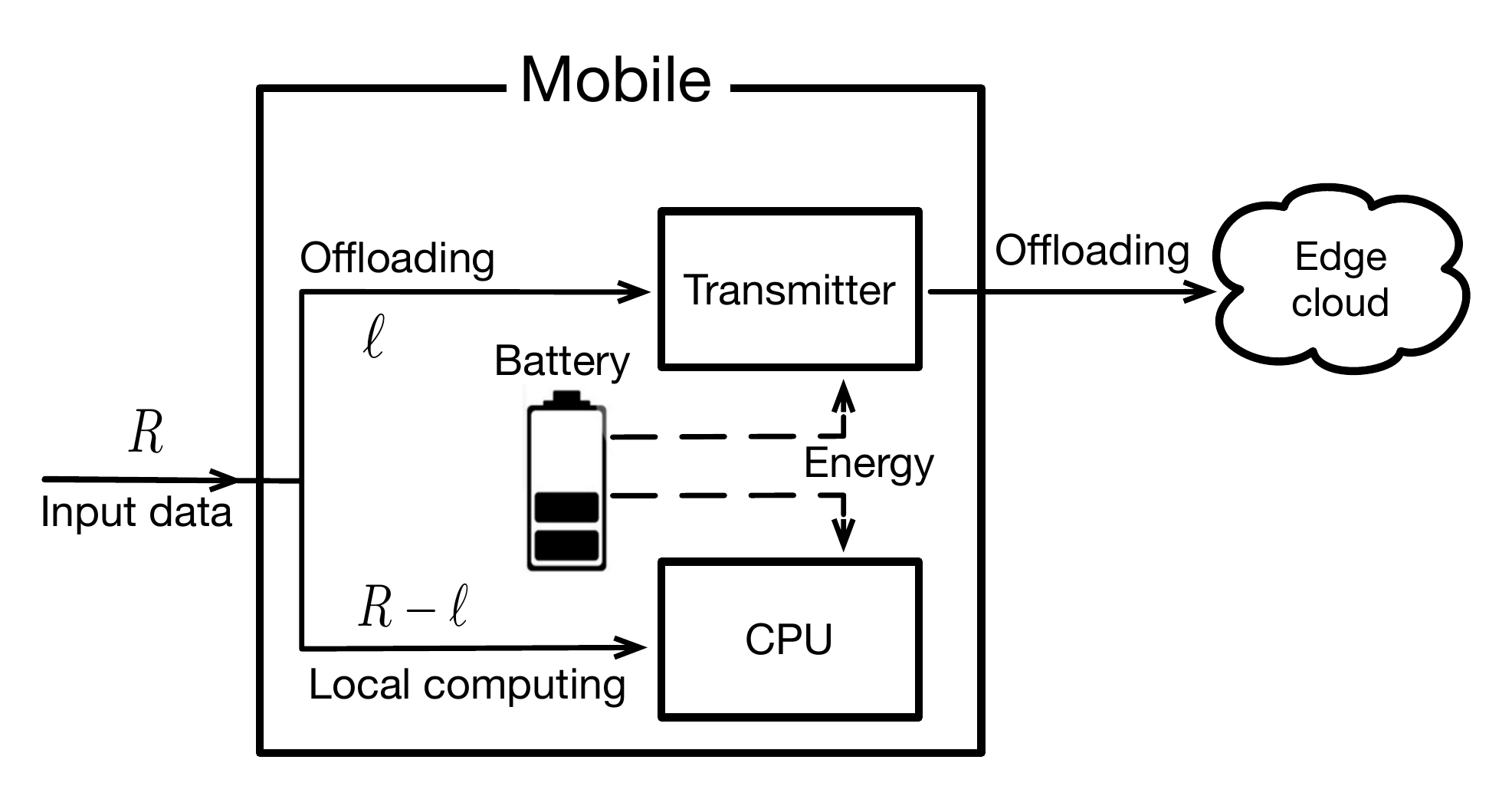}
\caption{Mobile computation offloading.}
\label{Fig:Sys_MCC}
\end{center}
\end{figure}

\subsection{Cloud-Computing Model}\label{Sec:CCM}
Considering an edge cloud with finite (computation) capacity, for simplicity, the finite capacity is reflected in one of the following two constraints. \footnote{For simplicity, we consider either a computation-load or a computation-time constraint at one time but not both simultaneously. However, note that the two constraints can be considered equivalent. Specifically, limiting the cloud computation load allows the computation to be completed within the required time and vice versa. The current resource-allocation policies can be extended to account for more elaborate constraints, which are outside the scope of the paper.} The first one upper-bounds CPU cycles of sum offloaded data that can be handled by the cloud in each time slot. Let $F$ represent the cloud computation capacity measured by CPU cycles per time slot. Then it follows: $\sum_{k=1}^{K} C_k \ell_k \le F $. This constraint ensures negligible  cloud computing latency. The other one considers non-negligible computing time at the cloud  that performs load balancing as in \cite{wang2010towards}, given as  $t_{\text{comp}}=(\sum_{k}^{K} \ell_k C_k)/F^{'},$ where $F^{'}$ is the cloud computation capacity measure by  CPU cycles per second. Note that $t_{\text{comp}}$ is factored into the latency constraint in the sequel.

\section{Multiuser  MECO for TDMA: \\Problem Formulation}
In this section, resource allocation for multiuser MECO based on TDMA is formulated as an optimization problem. The objective is to minimize the  weighted sum mobile energy consumption: $\sum_{k=1}^K \beta_k (E_{\text{off},k}+E_{\text{loc},k})$,  where the positive weight factors $\{\beta_k\}$ account for fairness among mobiles. Under the constraints on time-sharing, cloud computation capacity and computation latency, the resource allocation problem is formulated as follows: 
\begin{equation}\tag{$\textbf{P1}$} 
\begin{aligned}
\min_ {\{\ell_k, t_k\} }   ~ &\sum_{k=1}^{K} \beta_k \l[ \dfrac{t_k}{h_k^2}  f\!\l(\frac{\ell_k}{t_k}\r) + (R_k-\ell_k)C_k P_k \r] \\
\text{s.t.}\quad 
& \sum_{k=1}^K t_k \le T, \quad \sum_{k=1}^{K} C_k \ell_k \le F,\\ 
& t_k \ge 0, \quad m_k^+\le \ell_k \le R_k, \quad  \ k\in \mathcal{K}. 
\end{aligned}
\end{equation}

First, it is easy to observe that the feasibility condition for Problem P1 is: $\sum_{k=1}^K m_k^+ C_k \le F$. It shows that whether the cloud capacity constraint is satisfied  determines the feasibility of this optimization problem, while the time-sharing constraint can always be satisfied and only affects the mobile energy consumption. Next, one basic characteristic of Problem P1 is given in  the following  lemma, proved in Appendix~\ref{App:ConvexProblem}. 

\begin{lemma}\label{Lem:ConvexProblem} \emph{Problem P1 is a convex optimization problem.}
\end{lemma}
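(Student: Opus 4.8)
The plan is to verify the two defining ingredients of a convex program: that the feasible set is convex and that the objective is convex on it. The constraints of P1 --- the time-sharing constraint $\sum_{k} t_k \le T$, the cloud-capacity constraint $\sum_{k} C_k \ell_k \le F$, and the box constraints $t_k \ge 0$ and $m_k^+ \le \ell_k \le R_k$ --- are all affine in the optimization variables $\{\ell_k, t_k\}$, so the feasible set is a polyhedron and hence convex; this part is immediate. The work is therefore concentrated in showing that the objective $\sum_{k=1}^K \beta_k\big[\tfrac{t_k}{h_k^2} f(\ell_k/t_k) + (R_k-\ell_k)C_kP_k\big]$ is a convex function.

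Since each weight $\beta_k$ is positive and a nonnegative combination of convex functions is convex, and since each local-computing term $(R_k-\ell_k)C_kP_k$ is affine, it suffices to prove that each offloading-energy term $E_{\text{off},k}(\ell_k,t_k)=\tfrac{t_k}{h_k^2} f(\ell_k/t_k)$ is jointly convex in $(\ell_k,t_k)$. The key observation is that this is the perspective of a one-dimensional convex function. Define $g_k(x) := \tfrac{1}{h_k^2}f(x) = \tfrac{N_0}{h_k^2}(2^{x/B}-1)$. Because $2^{x/B}=\exp((\ln 2)x/B)$ is the composition of the convex exponential with an affine map, $g_k$ is convex on $\mathbb{R}$ (equivalently $g_k''(x)=\tfrac{N_0(\ln 2)^2}{h_k^2 B^2}2^{x/B}>0$). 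Then $E_{\text{off},k}(\ell_k,t_k)=t_k\,g_k(\ell_k/t_k)$ is precisely the perspective of $g_k$, which is jointly convex on $\{(\ell_k,t_k): t_k>0\}$ by the standard perspective-of-a-convex-function result; I would include the short epigraph argument for completeness (the epigraph of the perspective is the image of the epigraph of $g_k$ under a linear map, hence convex).

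The one point that needs care is the boundary $t_k=0$, where $E_{\text{off},k}$ was defined to equal $0$. I would handle this by passing to the closure of the perspective: $\lim_{t_k\downarrow 0} t_k\, g_k(\ell_k/t_k)$ equals $0$ when $\ell_k=0$ and $+\infty$ when $\ell_k>0$, so extending $E_{\text{off},k}$ by this closure yields a proper closed convex function, consistent both with the stated convention and with the fact that $m_k^+\le\ell_k$ together with $\ell_k>0$ forces $t_k>0$. With the objective shown convex and the feasible set polyhedral, P1 is a convex program. The only mildly delicate step is this boundary behaviour of the perspective function; the rest is routine.
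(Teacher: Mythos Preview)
Your proposal is correct and follows essentially the same route as the paper: both identify the constraints as affine and handle the offloading-energy term $t_k f(\ell_k/t_k)$ via the perspective of the convex function $f$, with separate care for the boundary $t_k=0$. Your treatment is in fact more self-contained --- you spell out the closure argument for the perspective at $t_k=0$, whereas the paper simply cites an external reference for that step --- but the underlying idea is identical.
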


Assume that Problem P1 is feasible. The direct solution for Problem P1 using the dual-decomposition approach (the  Lagrange method)  requires iterative computation and yields little insight into the structure of the optimal policy. To address these issues, we adopt a two-stage solution approach  that requires first solving Problem P2  below, which  follows from  Problem P1 by relaxing the constraint on cloud  capacity: 
\begin{equation}\tag{$\textbf{P2}$} 
\begin{aligned}
\min_ {\{\ell_k, t_k\} }   ~ &\sum_{k=1}^{K} \beta_k \l[ \dfrac{t_k}{h_k^2}  f\!\l(\frac{\ell_k}{t_k}\r) + (R_k-\ell_k)C_k P_k \r] \\
\text{s.t.}\quad 
& \sum_{k=1}^K t_k \le T,\\ 
& t_k \ge 0,  \quad m_k^+\le \ell_k \le R_k, \quad  \ k\in \mathcal{K}. 
\end{aligned}
\end{equation}
If the solution for Problem P2 violates the constraint on cloud  capacity, Problem P1 is then incrementally solved building on the solution for Problem P2. This approach allows the optimal policy to be  shown to have the said  threshold-based structure and also facilitates the design of low-complexity close-to-optimal algorithm. It is interesting to note that Problem P2  corresponds to the case where the edge cloud has infinite capacity.  The detailed procedures for solving Problems P1 and P2 are presented in the two subsequent sections.

\section{Multiuser  MECO for TDMA:\\ Infinite Cloud Capacity}
In this section, by solving Problem P2 using the Lagrange method, we derive a threshold-based policy for the optimal resource allocation. 

To solve Problem P2, the  partial Lagrange function is defined as 
\begin{equation*}
L=\sum_{k=1}^{K} \beta_k \l[ \dfrac{t_k}{h_k^2}  f\l(\frac{\ell_k}{t_k}\r) + (R_k-\ell_k) C_k P_k \r]+ \lambda\l(\sum_{k=1}^K t_k - T\r)
\end{equation*}
where $\lambda \geq 0$ is the Lagrange multiplier associated with the time-sharing constraint. For ease of notation, define a function $g(x)=f(x)-x f^{'}(x)$. Let $\{\ell_k^{*(2)}, t_k^{*(2)}\}$ denote the optimal solution for Problem P2 that always exists satisfying 
the feasibility condition. 

Then applying KKT conditions leads to the following necessary and sufficient conditions:
\begin{subequations}
\begin{align}\label{Eq:OptL}
 &\dfrac{\partial L}{\partial \ell_k^{*(2)}}=\frac{\beta_k f^{'}\!\!\l(\frac{\ell_k^{*(2)}}{t_k^{*(2)}}\r)}{h_k^2}-\beta_k C_k P_k
\begin{cases}
>0, &\ell_k^{*(2)}=m_k^+\cr =0, \!&\ell_k^{*(2)} \in (m_k^+, R_k) \cr <0, &\ell_k^{*(2)}=R_k
\end{cases},\\
&\dfrac{\partial L}{\partial t_k^{*(2)}}=\dfrac{ \beta_k g\!\l(\frac{\ell_k^{*(2)}}{t_k^{*(2)}}\r)}{h_k^2}+\lambda^*
\begin{cases}
  >0, & t_k^{*(2)}=0 \cr =0, &t_k^{*(2)}>0 
\end{cases}, \quad \forall k \in \mathcal{K}, \label{Eq:OptT}\\
& \sum_{k=1}^K t_k^{*(2)}\le T, ~~~~~~~~~~~~~~\lambda^* \l(\sum_{k=1}^K t_k^{*(2)} - T\r)=0. \label{Eq:OptT:a}
\end{align}
\end{subequations}
Note that for $\ell^{*(2)}_k\in(m_k^+, R_k)$ and $t^{*(2)}_k>0$, it can be derived from \eqref{Eq:OptL} and \eqref{Eq:OptT} that   
\begin{equation}\label{Eq:LandT}
\dfrac{\ell_k^{*(2)}}{ t_k^{*(2)}}=f^{'-1}\!\l(C_k P_k h_k^2 \r)=g^{-1}\!\l(\frac{-h_k^2 \lambda^*}{\beta_k} \r).
\end{equation}

Based on these conditions,  the optimal policy for resource allocation is characterized in the following sub-sections. 
\subsection{Offloading Priority Function}
Define  a (mobile)  \emph{offloading priority function}, which is essential for the optimal resource allocation, as follows: 
\begin{equation}\label{Eq:OffPriority}
\phi(\beta_k, C_k,  P_k, h_k)=\begin{cases}
   \dfrac{\beta_k N_0}{h_k^2} \l( \upsilon_k  \ln \upsilon_k\!-\!\upsilon_k\!+\!1\r),  \!\!\!&\mbox{$\upsilon_k\ge 1$}\\
   0,  &\mbox{$\upsilon_k<1$}
   \end{cases},
\end{equation}
with the constant $\upsilon_k$ defined as 
\begin{equation}\label{Eq:Const}
\upsilon_k=\dfrac{B C_k  P_k h_k^2 }{N_0\ln2 }. 
\end{equation}
This function is derived by solving a useful equation as shown in the following lemma.
\begin{lemma}\label{Lem:PriStrongLA}\emph{
Given $\upsilon_k\ge1$, the offloading priority function $\phi(\beta_k, C_k,  P_k, h_k)$ in \eqref{Eq:OffPriority} is   the root of following equation with respect to $x$: $$f^{'-1}\!\l(C_k P_k h_k^2\r)=g^{-1}\!\l(\frac{-h_k^2 x}{\beta_k} \r).$$ 
}
\end{lemma}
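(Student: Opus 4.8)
The plan is to prove the identity by direct computation, exploiting the closed form $f(x)=N_0(2^{x/B}-1)$ so that both $f^{'-1}$ and $g^{-1}$ can be written out explicitly. First I would differentiate: $f'(x)=\frac{N_0\ln 2}{B}\,2^{x/B}$ is strictly increasing, hence invertible, and solving $f'(x)=C_kP_kh_k^2$ gives $r_k^{*}:=f^{'-1}\!\l(C_kP_kh_k^2\r)=B\log_2\upsilon_k=\frac{B}{\ln 2}\ln\upsilon_k$, where the constant $\upsilon_k$ of \eqref{Eq:Const} reappears on its own. The hypothesis $\upsilon_k\ge 1$ is exactly what forces $r_k^{*}\ge 0$, which will be needed below.

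Next I would pin down the domain on which $g^{-1}$ is meaningful. Since $g(x)=f(x)-xf'(x)$, we have $g'(x)=-x\,f''(x)$, and $f''(x)=\frac{N_0(\ln 2)^2}{B^2}\,2^{x/B}>0$; hence $g$ is strictly decreasing on $[0,\infty)$ with $g(0)=f(0)=0$, so it maps $[0,\infty)$ bijectively onto $(-\infty,0]$ and $g^{-1}$ is unambiguously defined there. Using $r_k^{*}\ge 0$, the equation in the statement, $f^{'-1}\!\l(C_kP_kh_k^2\r)=g^{-1}\!\l(-h_k^2 x/\beta_k\r)$, is then equivalent to $g(r_k^{*})=-h_k^2 x/\beta_k$, i.e. $x=-\beta_k\, g(r_k^{*})/h_k^2$. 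It remains only to evaluate $g(r_k^{*})$: writing $g(x)=N_0\,2^{x/B}\l(1-\tfrac{x\ln 2}{B}\r)-N_0$ and substituting $2^{r_k^{*}/B}=2^{\log_2\upsilon_k}=\upsilon_k$ together with $\tfrac{r_k^{*}\ln 2}{B}=\ln\upsilon_k$ yields $g(r_k^{*})=N_0\bigl(\upsilon_k(1-\ln\upsilon_k)-1\bigr)=-N_0(\upsilon_k\ln\upsilon_k-\upsilon_k+1)$. Plugging this back gives $x=\beta_k N_0(\upsilon_k\ln\upsilon_k-\upsilon_k+1)/h_k^2$, which is precisely $\phi(\beta_k,C_k,P_k,h_k)$ for $\upsilon_k\ge 1$ as defined in \eqref{Eq:OffPriority}.

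The claim is essentially a verification, so no step is a serious obstacle; the only points demanding care are (i) the monotonicity argument for $g$, so that ``the root'' is unique and $g^{-1}$ genuinely inverts $g$ at $r_k^{*}$, and (ii) keeping the change of base consistent, writing $2^{x/B}=e^{(x/B)\ln 2}$ so that $e^{\ln\upsilon_k}$ collapses to $\upsilon_k$. It is also worth recording in passing that $\upsilon\mapsto\upsilon\ln\upsilon-\upsilon+1$ vanishes at $\upsilon=1$ and has derivative $\ln\upsilon\ge 0$ for $\upsilon\ge1$, so $\upsilon_k\ge 1$ makes $\phi\ge 0$ and all signs above are consistent.
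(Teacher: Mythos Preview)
Your proof is correct and follows essentially the same route as the paper: both compute $f'^{-1}(C_kP_kh_k^2)=B\log_2\upsilon_k$, then apply $g$ to both sides to reduce the equation to $x=-\beta_k\,g(r_k^{*})/h_k^2$ and evaluate $g(r_k^{*})$ directly from the closed form of $f$. Your treatment is actually slightly more careful than the paper's, which does not explicitly justify the invertibility of $g$ on $[0,\infty)$; otherwise the two arguments coincide.
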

Lemma~\ref{Lem:PriStrongLA} is proved in Appendix~\ref{App:PriStrongLA}. The function  generates an offloading priority value, $\phi_k = \phi(\beta_k, C_k,  P_k, h_k)$,   for mobile $k$,  depending on corresponding variables quantifying  fairness, local computing and channel. The amount of offloaded data by a mobile grows with an increasing offloading priority as shown in the next sub-section. It is useful to understand the effects of parameters on the offloading priority that are characterized as follows. 

\begin{lemma}\label{Lem:Phi}\emph{
Given $\upsilon \geq 1$, $\phi(\beta, C, P, h)$ is a \emph{monotone increasing function} for  $\beta$, $C$, $P$ and $h$.  
}
\end{lemma}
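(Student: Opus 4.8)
The plan is to reduce the whole statement to elementary monotonicity of a single scalar function and then propagate it through the four parameters by composition, the only real subtlety being $h$, which enters $\phi$ both through the explicit factor $1/h^{2}$ and through $\upsilon$.

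First I would isolate $\psi(\upsilon):=\upsilon\ln\upsilon-\upsilon+1$ on $[1,\infty)$ and record that $\psi'(\upsilon)=\ln\upsilon\ge 0$ there, with $\psi(1)=0$; hence $\psi$ is nonnegative and nondecreasing on $[1,\infty)$, strictly increasing on $(1,\infty)$. I would also note from \eqref{Eq:Const} that, with the other arguments held at positive values, $\upsilon=BCPh^{2}/(N_0\ln2)$ is strictly increasing in each of $C$, $P$ and $h$ (linear in $C$ and $P$, proportional to $h^{2}$) and does not depend on $\beta$. On the region $\upsilon\ge 1$ where $\phi=(\beta N_0/h^{2})\,\psi(\upsilon)$, monotonicity in $\beta$ is then immediate: $\phi$ is linear in $\beta$ with the nonnegative coefficient $(N_0/h^{2})\psi(\upsilon)$, which itself does not involve $\beta$.

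For $C$ and $P$ the prefactor $\beta N_0/h^{2}$ is a fixed positive constant, so $\phi$ is a positive multiple of the composition $\psi\circ\upsilon$ of two nondecreasing maps, hence nondecreasing (strictly for $\upsilon>1$, since then $\psi'(\upsilon)>0$). The step that needs an extra idea is $h$, and here I would first absorb the awkward factor: using $N_0/h^{2}=BCP/(\upsilon\ln 2)$ from \eqref{Eq:Const}, for $\upsilon\ge 1$ one can rewrite
\begin{equation*}
\phi=\frac{\beta BCP}{\ln 2}\cdot\frac{\psi(\upsilon)}{\upsilon}=\frac{\beta BCP}{\ln 2}\l(\ln\upsilon-1+\frac{1}{\upsilon}\r).
\end{equation*}
Setting $q(\upsilon):=\ln\upsilon-1+1/\upsilon$, one has $q'(\upsilon)=(\upsilon-1)/\upsilon^{2}\ge 0$ on $[1,\infty)$ and $q(1)=0$, so $q$ is nonnegative and nondecreasing; since $\upsilon$ is strictly increasing in $h>0$, $\phi$ is (strictly, for $\upsilon>1$) increasing in $h$. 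This representation incidentally settles $\beta$, $C$, $P$ again at once, since $\beta BCP/\ln2$ is nonnegative and nondecreasing in each of them while $q(\upsilon)\ge 0$ is nondecreasing in $C$, $P$ and constant in $\beta$, so $\phi$ is a product of nonnegative nondecreasing factors.

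The only remaining point is continuity at the boundary $\upsilon=1$, where $\phi=0$ agrees with the value assigned for $\upsilon<1$, so the monotonicity conclusions hold throughout the parameter range (in the nonstrict sense, strictly on $\{\upsilon>1\}$). The main obstacle is purely notational: recognizing the cancellation of $1/h^{2}$ against the $h^{2}$ buried inside $\upsilon$; after that every claim reduces to checking the sign of a one-variable derivative.
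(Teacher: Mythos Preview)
Your argument is correct and follows essentially the same route as the paper: both proofs handle $\beta$ by linearity, pull the factor $BCP$ out of $\upsilon$ to rewrite $\phi$ (your $q(\upsilon)=\ln\upsilon-1+1/\upsilon$ is exactly the paper's form $\beta BCP[\log_2\upsilon-1/\ln 2]+\beta N_0/h^2$), and then verify monotonicity by checking that the relevant one-variable derivative is nonnegative on $\upsilon\ge 1$. The only cosmetic difference is that for $h$ the paper computes $\partial\phi/\partial h=2\beta(BCPh^2-N_0\ln 2)/(h^3\ln 2)$ directly, whereas you route it through $q'(\upsilon)=(\upsilon-1)/\upsilon^2$ and the chain rule; these are equivalent computations.
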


Lemma~\ref{Lem:Phi} is proved in Appendix~\ref{App:Phi}, by deriving the first derivatives of $\phi$ with respect to each parameter. This lemma is consistent  with the intuition that, to reduce  energy consumption  by offloading, the BS should schedule those  mobiles having  high computing energy  consumption per bit (i.e., large $C$ and $P$) or good channels (i.e., large $h$). 

\begin{remark}[Effects of Parameters on the Offloading Priority]\emph{It can be observed from \eqref{Eq:OffPriority} and \eqref{Eq:Const} that the offloading priority scales with local computing energy per bit $CP$ approximately as $(CP)\ln(CP)$ and with the channel gain $h$ approximately as $\ln h$. The former scaling is much faster than the latter. This shows that the  computing energy per bit is dominant  over the  channel on determining whether to offload. }
\end{remark}

\subsection{Optimal Resource-Allocation Policy}
Based on  conditions in \eqref{Eq:OptL}-\eqref{Eq:OptT:a} and Lemma~\ref{Lem:PriStrongLA}, the main result of this section is derived,  given in the following theorem.
\begin{theorem}[Optimal Resource-Allocation Policy]\label{Theo:OptiPolicyP2}\emph{Consider the case of infinite cloud computation capacity. The optimal policy solving Problem P2 has the following structure. 
\begin{enumerate}
\item If $\upsilon_k \leq 1 $ and the minimum offloaded data size $m_k^+=0$ for all $k$,   none of these  users performs offloading, i.e., $$\ell_k^{*(2)}=t_k^{*(2)}=0 \quad  k \in \mathcal{K}.$$ 
\item If there exists mobile $k$ such that $\upsilon_k > 1$ or $m_k^+>0$, for $k \in \mathcal{K}$, 
\begin{align*}
\ell_k^{*(2)}&
   \begin{cases}
   =m_k^+,   &\mbox{$\phi_k<\lambda^*$}\\
   \in [m_k^+, R_k],  &\mbox{$\phi_k=\lambda^*$}\\
   =R_k , &\mbox{$\phi_k>\lambda^*$}
   \end{cases},
   \end{align*}
and
 \begin{equation*}
t_k^{*(2)}=\frac{ \ln 2}{B\l[W_0\l(\frac{\lambda^{*} h_k^2/\beta_k-N_0}{N_0 e}\r)+1\r]} \times \ell_k^{*(2)}
\end{equation*}
where $W_0(x)$ is the Lambert function and $\lambda^*$ is the optimal value of the Lagrange multiplier satisfying the active time-sharing constraint: $\sum_{k=1}^{K} t_k^{*(2)} =  T$. 
\end{enumerate}
}
\end{theorem}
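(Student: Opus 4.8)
The plan is to read the optimal policy directly off the KKT system. By Lemma~\ref{Lem:ConvexProblem}, Problem~P2 is convex, and Slater's condition holds (take any feasible $\ell_k$ together with small positive $t_k$ whose sum is below $T$), so the conditions \eqref{Eq:OptL}--\eqref{Eq:OptT:a} are necessary and sufficient for optimality and it remains only to solve them. Two preliminary facts are recorded first. (i) Since $f'(x)=\tfrac{N_0\ln 2}{B}2^{x/B}$ is a strictly increasing bijection of $(0,\infty)$ onto $(\tfrac{N_0\ln 2}{B},\infty)$, the quantity $f^{'-1}(C_kP_kh_k^2)$ is finite and positive exactly when $\upsilon_k>1$; and since $g'(x)=-x f''(x)<0$ on $(0,\infty)$ with $g(0)=0$, the map $g$ is a strictly decreasing bijection of $(0,\infty)$ onto $(-\infty,0)$, so $g^{-1}(-h_k^2\lambda/\beta_k)$ is well defined and positive whenever $\lambda>0$. (ii) $t_k^{*(2)}>0\iff\ell_k^{*(2)}>0$: if $\ell_k^{*(2)}>0$ while $t_k^{*(2)}=0$, then $\ell_k^{*(2)}/t_k^{*(2)}\to\infty$, and because $g(x)\to-\infty$ the left side of \eqref{Eq:OptT} tends to $-\infty$, contradicting its required nonnegativity; conversely $\ell_k^{*(2)}=0$ makes $E_{\text{off},k}=0$ irrespective of $t_k$, so one may take $t_k^{*(2)}=0$ (note $\ell_k^{*(2)}=0$ is feasible only if $m_k^+=0$).

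For statement~1 a direct comparison is cleanest. For a user with $\upsilon_k\le1$, offloading is never beneficial: for any $\ell_k>0$, $t_k>0$, since $f(x)/x$ is strictly increasing in $x>0$ with $\inf_{x>0}f(x)/x=f'(0)=\tfrac{N_0\ln 2}{B}\ge C_kP_kh_k^2$, we get $\tfrac{t_k}{h_k^2}f(\ell_k/t_k)=\tfrac{f(\ell_k/t_k)}{h_k^2(\ell_k/t_k)}\,\ell_k> C_kP_k\ell_k$, i.e. the offloading energy strictly exceeds the local-computing energy of the same $\ell_k$ bits (and, when $m_k^+=0$, computing all $R_k$ bits locally is feasible). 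Hence $\ell_k^{*(2)}=m_k^+=0$ and, by fact~(ii), $t_k^{*(2)}=0$; when this holds for all $k$, the all-zero policy is the optimum.

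For statement~2 I would first show the time-sharing constraint is active. If some $m_k^+>0$, then $\ell_k^{*(2)}\ge m_k^+>0$; otherwise all $m_k^+=0$ but $\upsilon_{k_0}>1$ for some $k_0$, and perturbing the all-zero point by offloading $\epsilon$ bits of user $k_0$ over time $\delta$ (using the slack in $\sum_k t_k$) changes the objective by $\beta_{k_0}\epsilon\big[h_{k_0}^{-2}f(\epsilon/\delta)/(\epsilon/\delta)-C_{k_0}P_{k_0}\big]$, which as $\epsilon/\delta\to0$ tends to $\beta_{k_0}\epsilon C_{k_0}P_{k_0}(\upsilon_{k_0}^{-1}-1)<0$, so the all-zero point is not optimal. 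In either case $\sum_k\ell_k^{*(2)}>0$, hence some $t_k^{*(2)}>0$, hence $\lambda^*>0$ (if $\lambda^*=0$, \eqref{Eq:OptT} forces $g(\ell_k^{*(2)}/t_k^{*(2)})=0$ and thus that rate, and $\ell_k^{*(2)}$, to vanish), and \eqref{Eq:OptT:a} gives $\sum_k t_k^{*(2)}=T$. With $\lambda^*>0$, \eqref{Eq:OptT} yields the common offloading rate $r_k^*:=\ell_k^{*(2)}/t_k^{*(2)}=g^{-1}(-h_k^2\lambda^*/\beta_k)>0$ for every offloading user; substituting this into \eqref{Eq:OptL} shows $\ell_k^{*(2)}=m_k^+$ iff $f'(r_k^*)>C_kP_kh_k^2$, i.e. iff $r_k^*>f^{'-1}(C_kP_kh_k^2)$ (vacuously true when $\upsilon_k\le1$, matching $\phi_k=0<\lambda^*$), $\ell_k^{*(2)}=R_k$ iff $r_k^*<f^{'-1}(C_kP_kh_k^2)$, and $\ell_k^{*(2)}\in[m_k^+,R_k]$ when equality holds. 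Because $g^{-1}$ is decreasing, $r_k^*>f^{'-1}(C_kP_kh_k^2)\iff-h_k^2\lambda^*/\beta_k<g\big(f^{'-1}(C_kP_kh_k^2)\big)\iff\lambda^*>-\tfrac{\beta_k}{h_k^2}g\big(f^{'-1}(C_kP_kh_k^2)\big)$, and by Lemma~\ref{Lem:PriStrongLA} the last quantity equals $\phi_k$ (for $\upsilon_k\ge1$; and $\phi_k=0$ otherwise), which delivers the stated threshold rule comparing $\phi_k$ with $\lambda^*$.

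It remains to solve $g(r_k^*)=-h_k^2\lambda^*/\beta_k$ in closed form: setting $u=r_k^*\tfrac{\ln 2}{B}$ gives $e^{u}(1-u)=1-\tfrac{h_k^2\lambda^*}{\beta_kN_0}$, i.e. $(u-1)e^{u-1}=\tfrac{h_k^2\lambda^*/\beta_k-N_0}{N_0e}$, and since the argument lies in $[-1/e,\infty)$ and $u>0$ the principal branch applies, so $u=W_0\big(\tfrac{h_k^2\lambda^*/\beta_k-N_0}{N_0e}\big)+1$ and hence $t_k^{*(2)}=\tfrac{\ln 2}{B[W_0(\,\cdot\,)+1]}\,\ell_k^{*(2)}$; moreover a $\lambda^*$ with $\sum_k t_k^{*(2)}=T$ exists because $\lambda\mapsto\sum_k\ell_k^{*(2)}(\lambda)/r_k^*(\lambda)$ is continuous, non-increasing, tends to $+\infty$ as $\lambda\to0^+$ and to $0$ as $\lambda\to\infty$. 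The main obstacle I anticipate is organizational rather than conceptual: threading the inequality directions correctly through the two decreasing inverse maps $f^{'-1}$ and $g^{-1}$ so as to arrive at the clean comparison between $\phi_k$ and $\lambda^*$ (this is exactly where Lemma~\ref{Lem:PriStrongLA} enters), and disposing uniformly of the degenerate cases $t_k^{*(2)}=0$, $\ell_k^{*(2)}=0$ and $\upsilon_k=1$ so that the single Lambert-$W$ formula and the single threshold rule cover every user. Verifying the change of variables into Lambert-$W$ form and the choice of the principal branch is routine but must be done with care about the domain $[-1/e,\infty)$ and the sign of $r_k^*$.
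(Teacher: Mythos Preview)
Your proposal is correct and follows essentially the same route as the paper: read off the KKT conditions \eqref{Eq:OptL}--\eqref{Eq:OptT:a}, use the monotonicity of $g^{-1}$ together with Lemma~\ref{Lem:PriStrongLA} to translate the stationarity condition into the threshold comparison $\phi_k\lessgtr\lambda^*$, and invert $g$ via the Lambert function to obtain the closed-form $t_k^{*(2)}$. Your treatment is in fact somewhat more thorough than the paper's---you add a direct energy-comparison argument for Case~1, justify explicitly that $\lambda^*>0$ and that the time-sharing constraint binds, and derive the Lambert-$W$ inversion from scratch rather than quoting it as a lemma---but the skeleton is the same.
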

\begin{proof}
See Appendix~\ref{App:OptiPolicyP2}.
\end{proof}

Theorem~\ref{Theo:OptiPolicyP2} reveals that the optimal resource-allocation policy has a threshold-based structure when offloading saves energy. In other words,  since the exact case of $\phi_k=\lambda^*$ rarely occurs in practice, the optimal policy makes a \emph{binary offloading decision} for each mobile. Specifically, if the corresponding offloading priority exceeds a given threshold, namely $\lambda^*$, the mobile should offload all input data to the edge cloud; otherwise,  the mobile should offload only the minimum amount of data under the computation latency constraint. This result is consistent with the intuition that the  greedy method can lead to the optimal  resource allocation.  Note that there are two groups of users selected to  perform the minimum offloading. One is the group of users for which it has positive minimum offloading data, i.e., $m_k>0$,  and offloading cannot save energy consumption since they have bad channels or small local computing energy such that $\upsilon_k\le 1$ and $\phi_k=0$. The second group is the set of users for which offloading is energy-efficient, i.e., $\upsilon_k >1$, however, have  relatively small offloading priorities, i.e., $\phi_k < \lambda^*$; they cannot perform complete offloading due to the limited radio resource.

\begin{remark}[Offloading or Not?]\emph{For a conventional TDMA communication  system, continuous transmission by at least one  mobile  is always advantageous under the criterion of minimum sum energy consumption \cite{wang2008power}. However, this does not always hold for a TDMA  MECO system where no offloading for all users may be preferred  as shown in Theorem~\ref{Theo:OptiPolicyP2}. Offloading is not necessary expect for two cases.  First, there exists at least one   mobile whose input-data size is too large such that complete local computing fails to meet the latency constraint. Second,  some  mobile has a sufficient high value for the product  $ C_k P_k h_k^2$, indicating  that energy savings can be achieved by offloading because of high channel gain or large local computing energy consumption. }
\end{remark}

\begin{remark}[Offloading Rate]\emph{It can be observed from Theorem~\ref{Theo:OptiPolicyP2}  that the offloading rate, defined as $\ell_k^{*(2)}/t_k^{*(2)}$ for mobile $k$,  is determined only by the channel gain and fairness  factor while other factors, namely $C_k$ and  $P_k$, affect the offloading decision. The rate increases with a growing  channel gain and vice versa since a large channel gain supports a higher  transmission rate or reduces transmission power, making offloading desirable for reducing energy consumption. }
\end{remark}
\begin{algorithm}[t]
  \caption{Optimal Algorithm for Solving Problem P2.}
  \label{Alg:SLA:Opt}
  \begin{itemize}
\item{\textbf{Step 1} [Initialize]: 
Let $\lambda_\ell=0$ and $\lambda_h=\lambda_{\text{max}}$.
 According to Theorem~\ref{Theo:OptiPolicyP2}, obtain $T_{\ell}=\sum_{k=1}^{K} t_{k,\ell}^{*(2)}$  and  $T_{h}=\sum_{k=1}^{K} t_{k,h}^{*(2)}$, where $\{t_{k,\ell}^{*(2)} \}$ and $\{t_{k,h}^{*(2)}\}$ are the allocated fractions of slot for the cases of $\lambda_\ell$ and $\lambda_h$, respectively. }
\item{\textbf{Step 2} [Bisection search]:
While $T_{\ell} \neq T$ and  $T_{h} \neq T$, update $\{\lambda_\ell, \lambda_h\}$ as follows. \\(1) Define $\lambda_{m}=(\lambda_\ell+\lambda_h)/2$ and compute $T_{m}$. \\(2) If $T_m = T$, then $\lambda^*=\lambda_m$ and the optimal policy can be determined. Otherwise, if $T_{m}<T$, let $\lambda_{h}=\lambda_{m}$ and if $T_{m}>T$, let $\lambda_{\ell}=\lambda_{m}$.
}
\end{itemize}
  \end{algorithm}

\begin{remark}[Low-Complexity Algorithm]\emph{
The traditional method for solving Problem P2 is the block-coordinate descending algorithm which performs iterative optimization  of the two sets of variables,  $\{\ell_k\}$ and $\{t_k\}$,  resulting in high computation complexity.  In contrast,  by exploiting  the threshold-based structure of the optimal resource-allocation policy in Theorem~\ref{Theo:OptiPolicyP2}, the proposed solution approach, described in Algorithm~\ref{Alg:SLA:Opt}, needs to perform only a  \emph{one-dimension} search for  $\lambda^{*}$, reducing the computation complexity significantly. To facilitate the search, next lemma gives the range of $\lambda^{*}$, which can be easily proved from Theorem~\ref{Theo:OptiPolicyP2}.}
\end{remark}

\begin{lemma}\label{Lem:ProLambda}
\emph{When there is at least one offloading mobile,  $\lambda^*$ satisfies:
$0\le \lambda^* \le \lambda_{\max}=\max_k  \phi_k.$
}
\end{lemma}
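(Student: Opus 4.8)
The plan is to handle the two inequalities separately. The lower bound $\lambda^*\ge 0$ is immediate: $\lambda$ is the KKT multiplier of the inequality constraint $\sum_k t_k\le T$ in the partial Lagrangian $L$, and nonnegativity of such a multiplier is part of the optimality conditions (it is imposed when $L$ is introduced). For the upper bound I would argue by contradiction, using Theorem~\ref{Theo:OptiPolicyP2} to turn the hypothetical strict inequality $\lambda^*>\lambda_{\max}=\max_k\phi_k$ into rigid structural information about $\{\ell_k^{*(2)}\}$ and then contradicting the existence of an offloading mobile.

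Concretely, suppose $\lambda^*>\lambda_{\max}$. Then $\phi_k<\lambda^*$ for every $k$, so the ``$\phi_k<\lambda^*$'' branch of Theorem~\ref{Theo:OptiPolicyP2} forces $\ell_k^{*(2)}=m_k^+$ for all $k$; no user offloads more than its own latency constraint compels it to. Reading Theorem~\ref{Theo:OptiPolicyP2} in the other direction, any user $k_0$ with $\ell_{k_0}^{*(2)}>m_{k_0}^+$ must satisfy $\phi_{k_0}\ge\lambda^*$, hence $\lambda^*\le\phi_{k_0}\le\max_k\phi_k=\lambda_{\max}$ — the desired contradiction. Before that I would dispose of the degenerate case $m_k^+=0$ for all $k$: there Theorem~\ref{Theo:OptiPolicyP2} gives $\ell_k^{*(2)}=0$, and then the displayed formula $t_k^{*(2)}=\frac{\ln 2}{B[W_0(\cdot)+1]}\,\ell_k^{*(2)}$ (whose prefactor is finite and positive since $\lambda^*>0$) yields $t_k^{*(2)}=0$ for every $k$, i.e.\ there is no offloading mobile at all.

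The main obstacle is exactly the remaining case: some user has $m_{k_0}^+>0$ and is therefore an ``offloading mobile'' (it transmits $\ell_{k_0}^{*(2)}=m_{k_0}^+>0$), yet $\ell_{k_0}^{*(2)}=m_{k_0}^+$ is consistent with the KKT conditions for \emph{every} $\lambda^*\ge\phi_{k_0}$, so $\lambda^*$ cannot be bounded through $\phi_{k_0}$ alone. To close this gap I would pin $\lambda^*$ down from the active time-sharing constraint: since $g=f-xf'\le 0$ by convexity of $f$ (with $f(0)=0$), the offloading energy $E_{\mathrm{off},k}=\frac{t_k}{h_k^2}f(\ell_k/t_k)$ is nonincreasing in $t_k$, so whenever some $\ell_k^{*(2)}>0$ (forcing $t_k^{*(2)}>0$, as $t_k=0$ with $\ell_k>0$ costs infinite energy) the optimum must use the whole slot, $\sum_k t_k^{*(2)}=T$; this equation, together with the monotonicity of $\lambda\mapsto\sum_k t_k^{*(2)}(\lambda)$ that follows from Theorem~\ref{Theo:OptiPolicyP2} (offloading rates increase and offloaded sizes are nonincreasing in $\lambda$), determines $\lambda^*$. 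The genuinely delicate step is then to show that this sum has already dropped to $\le T$ once $\lambda$ reaches $\lambda_{\max}$, which is where the hypothesis ``at least one offloading mobile'' has to be used in its effective form (some user offloading strictly more than $m_k^+$); I expect this to be the heart of a fully rigorous argument, the rest being bookkeeping on Theorem~\ref{Theo:OptiPolicyP2}.
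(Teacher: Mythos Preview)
Your overall approach --- read the structure off Theorem~\ref{Theo:OptiPolicyP2} and argue by contradiction --- is exactly what the paper intends; the paper offers no proof beyond the remark that the lemma ``can be easily proved from Theorem~\ref{Theo:OptiPolicyP2}''. The lower bound and your contrapositive argument for the case $\ell_{k_0}^{*(2)}>m_{k_0}^+$ (giving $\phi_{k_0}\ge\lambda^*$ directly) are correct and complete.

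The gap is in your treatment of the remaining case. You correctly isolate the scenario where every user offloads only its minimum $m_k^+$ (with some $m_{k_0}^+>0$) and propose to close it by showing $\sum_k t_k^{*(2)}(\lambda_{\max})\le T$ via monotonicity. That plan cannot succeed, because in this scenario the inequality $\lambda^*\le\lambda_{\max}$ is in fact \emph{false}. Take $K=1$ with $\upsilon_1<1$ (so $\phi_1=0$ and $\lambda_{\max}=0$) but $m_1^+>0$. The user must offload $m_1^+$ bits; since $E_{\text{off},1}$ is strictly decreasing in $t_1$, the optimum uses the full slot $t_1^{*(2)}=T$, and \eqref{Eq:OptT} gives $\lambda^*=-\beta_1 g(m_1^+/T)/h_1^2>0=\lambda_{\max}$. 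So the ``delicate step'' you anticipate is not merely delicate --- it is unprovable in this generality.

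The resolution is the one you already flagged in passing: read ``at least one offloading mobile'' as ``some user with $\ell_k^{*(2)}>m_k^+$'' (offloading strictly beyond what the latency constraint forces). Under that reading your direct contrapositive from Theorem~\ref{Theo:OptiPolicyP2} is already a complete proof, and no detour through the time-sharing constraint is needed. The paper does not make this distinction explicit; you have uncovered an imprecision in the stated hypothesis rather than a missing idea in the argument.
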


Furthermore, with the assumption of infinite cloud capacity, the effects of finite radio resource (i.e., the TDMA time-slot duration) are characterized in the following two propositions  in terms of the number of offloading users, which can be easily derived using Theorem~\ref{Theo:OptiPolicyP2}.
\begin{proposition}[Exclusive Mobile Computation Offloading]\label{Pro:ExMCO}\emph{For TDMA MECO with offloading users, only one mobile can offload computation if  $T\le \dfrac{R_m}{B\log_2\l(\frac{B C_m P_m h_m^2}{N_0 \ln 2}\r)}$
 where $m=\arg \max_{k} \phi_k$.
}
\end{proposition}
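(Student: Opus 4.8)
The plan is to derive Proposition~\ref{Pro:ExMCO} from the threshold structure of Theorem~\ref{Theo:OptiPolicyP2}, the bound on $\lambda^{*}$ in Lemma~\ref{Lem:ProLambda}, and a monotonicity property of the optimal offloading rate as a function of the time-sharing multiplier. I work in the natural regime where offloading is energy-beneficial for the strongest candidate, i.e., $\upsilon_m>1$: otherwise $\phi_m=0$, hence $\phi_k=0$ for all $k$, and $R_m/(B\log_2\upsilon_m)$ is nonpositive so the hypothesis is vacuous; I also take $m=\arg\max_k\phi_k$ to be the unique maximizer. Since offloading users exist we are in case~2) of Theorem~\ref{Theo:OptiPolicyP2}, so the time-sharing constraint is active, $\sum_{k}t_k^{*(2)}=T$, and by \eqref{Eq:OptT} every offloading mobile $k$ transmits at the constant rate $r_k^{*}=\ell_k^{*(2)}/t_k^{*(2)}=g^{-1}\!\big(-h_k^2\lambda^{*}/\beta_k\big)$, equivalently $r_k^{*}=B\big[W_0\big(\tfrac{\lambda^{*}h_k^2/\beta_k-N_0}{N_0 e}\big)+1\big]/\ln 2$.

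Next I would show $r_m^{*}$ is increasing in $\lambda^{*}$ and pin its value at $\lambda^{*}=\phi_m$. From $g(x)=f(x)-xf'(x)$ we get $g'(x)=-xf''(x)<0$ for $x>0$ (as $f$ is convex), so $g^{-1}$ is strictly decreasing; composed with the strictly decreasing map $x\mapsto-h_m^2x/\beta_m$, the function $\lambda\mapsto g^{-1}(-h_m^2\lambda/\beta_m)$ is strictly increasing. At $\lambda=\phi_m$, Lemma~\ref{Lem:PriStrongLA} gives $g^{-1}(-h_m^2\phi_m/\beta_m)=f^{'-1}(C_mP_mh_m^2)$; since $f'(y)=\tfrac{N_0\ln 2}{B}2^{y/B}$, inverting yields $f^{'-1}(C_mP_mh_m^2)=B\log_2\!\big(\tfrac{BC_mP_mh_m^2}{N_0\ln 2}\big)=B\log_2\upsilon_m$. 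Combining with Lemma~\ref{Lem:ProLambda}, $\lambda^{*}\le\lambda_{\max}=\phi_m$, we obtain $r_m^{*}\le B\log_2\upsilon_m$.

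The crux is to exclude $\lambda^{*}<\phi_m$. If it held, strict monotonicity would give $r_m^{*}<B\log_2\upsilon_m$, while Theorem~\ref{Theo:OptiPolicyP2} (the subcase $\phi_m>\lambda^{*}$) forces complete offloading $\ell_m^{*(2)}=R_m$; then $t_m^{*(2)}=R_m/r_m^{*}>R_m/(B\log_2\upsilon_m)\ge T$, contradicting $t_m^{*(2)}\le\sum_k t_k^{*(2)}=T$. Hence $\lambda^{*}=\phi_m=\max_k\phi_k$, and, $m$ being the unique maximizer, every $k\ne m$ satisfies $\phi_k<\lambda^{*}$, so by Theorem~\ref{Theo:OptiPolicyP2} $\ell_k^{*(2)}=m_k^+$; when $m_k^+=0$ this further forces $t_k^{*(2)}=0$, since \eqref{Eq:OptT} with $\ell_k^{*(2)}=0$ and $t_k^{*(2)}>0$ would require $\lambda^{*}=0$, impossible once a positive amount is offloaded. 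Therefore mobile $m$ is the only mobile performing computation offloading; by the active time constraint it uses the whole slot, $t_m^{*(2)}=T$, offloading $\ell_m^{*(2)}=(B\log_2\upsilon_m)\,T$, which is at most $R_m$ precisely by the stated hypothesis.

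I expect the main obstacle to be the feasibility argument in the third paragraph together with the accompanying edge-case bookkeeping: one must confirm the rate inequality $r_m^{*}<B\log_2\upsilon_m$ is genuinely strict so that the contradiction also covers the boundary $T=R_m/(B\log_2\upsilon_m)$, and one must state precisely what ``only one mobile offloads'' means when some mobiles have $m_k^+>0$ and are forced to offload their minimum. Re-deriving $r_m^{*}=B\log_2\upsilon_m$ at $\lambda^{*}=\phi_m$ directly from the Lambert-$W$ form, instead of via Lemma~\ref{Lem:PriStrongLA}, gives a useful consistency check.
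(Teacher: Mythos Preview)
Your argument is correct and is exactly the route the paper has in mind: the paper does not spell out a proof of Proposition~\ref{Pro:ExMCO} at all, merely stating that it ``can be easily derived using Theorem~\ref{Theo:OptiPolicyP2},'' and what you have written is precisely that derivation. Your use of Lemma~\ref{Lem:ProLambda} to cap $\lambda^{*}$ at $\phi_m$, the monotonicity of $\lambda\mapsto g^{-1}(-h_m^2\lambda/\beta_m)$ via $g'(x)=-xf''(x)<0$, and the feasibility contradiction forcing $\lambda^{*}=\phi_m$ are all sound and constitute the intended mechanism. Your closing caveats about the boundary case and about mobiles with $m_k^{+}>0$ are apt: the paper implicitly reads ``only one mobile offloads'' as ``only mobile $m$ performs non-minimum offloading,'' and tacitly assumes a unique maximizer, so you have in fact been more careful than the source.
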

It indicates  that short time slot limits the number of offloading users. From another perspective, it means that if the \emph{winner} user $m$ has excessive data, it will take up all the resource.  

\begin{proposition}[Inclusive Mobile Computation Offloading]\label{Pro:InMCO}\emph{
All offloading-desired mobiles (defined as for which, it has $\phi_k>0$) will completely offload computation  if 
\begin{align*}
T\ge  \sum_{k \in \mathcal{O}_1 }& \frac{R_k \ln 2}{B\l[W_0(\frac{\lambda_{\min} h_k^2/\beta_k-N_0}{N_0 e})+1\r] }\\ & + \sum_{k \in \mathcal{O}_2 }\frac{m_k^+ \ln 2}{B\l[W_0(\frac{\lambda_{\min} h_k^2/\beta_k-N_0}{N_0 e})+1\r] }
\end{align*}
where   $\mathcal{O}_1=\{k ~ | \phi_k>0\}$, $\mathcal{O}_2=\{k ~| \phi_k=0\}$ and $\lambda_{\min}=\min_{k\in \mathcal{O}_1} \phi_k$.}
\end{proposition}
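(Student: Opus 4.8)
The plan is to read the claim off the threshold structure of Theorem~\ref{Theo:OptiPolicyP2}, combined with a monotonicity property of the total offloading time as a function of the Lagrange multiplier $\lambda$. If $\mathcal{O}_1=\emptyset$ the statement is vacuous, so assume $\mathcal{O}_1\neq\emptyset$; then $\lambda_{\min}=\min_{k\in\mathcal{O}_1}\phi_k>0$, and for every $k\in\mathcal{O}_2$ we have $\phi_k=0<\lambda_{\min}$. By Theorem~\ref{Theo:OptiPolicyP2}, a mobile $k\in\mathcal{O}_1$ offloads its entire input, $\ell_k^{*(2)}=R_k$, whenever $\phi_k>\lambda^*$, and is permitted to do so when $\phi_k=\lambda^*$. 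Hence it suffices to prove that the optimal multiplier obeys $\lambda^*\le\lambda_{\min}$, with the understanding that in the boundary case $\lambda^*=\lambda_{\min}$ the free choice of $\ell_k^{*(2)}$ at the mobile attaining the minimum is resolved as $R_k$.

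Next I would establish that $T(\lambda):=\sum_{k=1}^{K} t_k^{*(2)}(\lambda)$ is non-increasing on $(0,\lambda_{\max}]$, using the explicit formulas of Theorem~\ref{Theo:OptiPolicyP2}. This rests on two facts. First, the per-bit offloading time $\tau_k(\lambda):=\ln 2\,/\,\{B[W_0((\lambda h_k^2/\beta_k-N_0)/(N_0 e))+1]\}$ is non-increasing in $\lambda$, since the argument of $W_0$ is affine and increasing in $\lambda$, remains $\ge -1/e$ (so $W_0$ is real and $W_0(\cdot)+1>0$ for $\lambda>0$), and $W_0$ is increasing. Second, $\ell_k^{*(2)}(\lambda)$ is non-increasing in $\lambda$ by the threshold rule. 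As $t_k^{*(2)}=\tau_k\,\ell_k^{*(2)}$, each $t_k^{*(2)}(\lambda)$, and hence the sum, is non-increasing. On $(0,\lambda_{\min})$ in particular, $\phi_k>\lambda$ for all $k\in\mathcal{O}_1$ and $\phi_k<\lambda$ for all $k\in\mathcal{O}_2$, so there $T(\lambda)=\sum_{k\in\mathcal{O}_1}\tau_k(\lambda)R_k+\sum_{k\in\mathcal{O}_2}\tau_k(\lambda)m_k^+$ is continuous and strictly decreasing, tending to $\infty$ as $\lambda\to 0^+$ (because $W_0(-1/e)=-1$) and to $\bar T$ as $\lambda\to\lambda_{\min}^-$, where $\bar T$ is precisely the right-hand side of the stated inequality.

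Finally, suppose $T\ge\bar T$. Since $T(\cdot)$ is continuous and strictly decreasing from $\infty$ to $\bar T$ on $(0,\lambda_{\min}]$ (taking the $\ell_k=R_k$ endpoint at $\lambda=\lambda_{\min}$), the active-constraint equation $\sum_k t_k^{*(2)}(\lambda)=T$ that determines $\lambda^*$ in Theorem~\ref{Theo:OptiPolicyP2} is solved by some $\lambda^*\le\lambda_{\min}$; when $T=\bar T$ this forces $\lambda^*=\lambda_{\min}$ with the minimizing mobile assigned $\ell_k^{*(2)}=R_k$. In every case $\phi_k\ge\lambda^*$ for each $k\in\mathcal{O}_1$ and the threshold rule gives $\ell_k^{*(2)}=R_k$, while each $k\in\mathcal{O}_2$ keeps $\ell_k^{*(2)}=m_k^+$ --- exactly the two sums in the bound --- so all offloading-desired mobiles completely offload.

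I expect the main obstacle to be the bookkeeping at the threshold crossings: one must show that $T(\lambda)$ is simultaneously non-increasing and surjective onto the relevant range of total times despite the downward jumps of $\ell_k^{*(2)}(\lambda)$ at $\lambda=\phi_k$ (these jumps being filled exactly by the freedom $\ell_k^{*(2)}\in[m_k^+,R_k]$ there), and to settle the tie-breaking when $\lambda^*=\lambda_{\min}$ so that equality in the proposition still forces complete offloading rather than merely $\ell_k^{*(2)}\in[m_k^+,R_k]$. The monotonicities of $W_0$ and of the threshold rule themselves are routine; the care lies in carrying both on the same $\lambda$-intervals.
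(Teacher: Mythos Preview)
Your proposal is correct and aligns with the paper's approach: the paper does not give a separate proof of this proposition but simply states that it ``can be easily derived using Theorem~\ref{Theo:OptiPolicyP2}.'' Your argument is exactly the natural derivation --- using the threshold structure and the monotonicity of $\sum_k t_k^{*(2)}(\lambda)$ in $\lambda$ to conclude that $\lambda^*\le\lambda_{\min}$ whenever $T$ meets or exceeds the displayed bound --- and in fact supplies the details the paper elides, including the careful handling of the boundary case $\lambda^*=\lambda_{\min}$.
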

Proposition~\ref{Pro:InMCO} reveals that when $T$ exceeds a given threshold, the offloading-desired mobiles for which offloading brings energy savings, will offload all computation to the cloud.
\begin{remark}[Which Resource is Bottleneck?]\emph{Proposition~\ref{Pro:ExMCO} and \ref{Pro:InMCO} suggest that as the radio resource continuously increases, the cloud will become the performance bottleneck and the assumption of infinite cloud capacity will not hold.
For a short time-slot duration, only a few users can offload computation. This just requires a fraction of computation   such that the cloud can be regarded as having infinite capacity. However, when  the time-slot duration is large, it not only saves energy consumption by offloading but also allows more  users for offloading, which potentially exceeds the cloud  capacity. The case of finite-capacity cloud will be considered in the sequel.}
\end{remark}

\subsection{Special Cases}
The optimal resource-allocation policies for several special cases considering equal fairness factors are discussed as follows. 
\subsubsection{Uniform  Channels and Local Computing} Consider the simplest case where $\{ h_k, C_k, P_k\}$ are identical for all $k$. Then all mobiles have uniform offloading priorities. In this case, for the optimal resource allocation, all mobiles can offload arbitrary data sizes so long as the sum offloaded data size satisfies the following constraint:  $\sum_{k=1}^K \ell_k^{*(2)} \leq T B\log_2\l(\dfrac{  B C P h^2}{N_0 \ln 2}\r).$

\subsubsection{Uniform Channels} Consider the case of   $h_1 = h_2\cdots = h_K=h$. The offloading priority for each mobile, say mobile $k$,   is only affected by the corresponding  local-computing parameters $P_k$ and $C_k$. Without loss of generality, assume that $P_1 C_1 \leq P_2 C_2 \cdots \leq P_K C_K$. Then the optimal resource-allocation policy is given in the following corollary of Theorem~\ref{Theo:OptiPolicyP2}. 

\begin{corollary}\label{Coro: SpeCase2}\emph{Assume infinite cloud capacity,  $h_1 = h_2\cdots = h_K=h$ and $P_1 C_1 \leq P_2 C_2 \cdots \leq P_K C_K$. Let $k_t$ denote the index such that $\phi_k < \lambda^\ast$ for all $k < k_t $ and $\phi_k >  \lambda^\ast$ for all $k \geq  k_t $, neglecting the rare case where $\phi_k = \lambda^\ast$. The optimal resource-allocation policy is given as follows:  for $k \in \mathcal{K}$,
\begin{align*}
\ell_{k}^{*(2)}&=
   \begin{cases}
   R_k,  & k \geq k_t\\
   m_k^+ , &\mbox{otherwise}
   \end{cases},
\end{align*}
and $$ t_k^{*(2)}=\frac{ \ln 2}{B\l[W_0\l(\frac{\lambda^{*} h^2/\beta-N_0}{N_0 e}\r)+1\r]}\times \ell_{k}^{*(2)}.$$
}
\end{corollary}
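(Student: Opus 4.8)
The plan is to obtain Corollary~\ref{Coro: SpeCase2} as a direct specialization of Theorem~\ref{Theo:OptiPolicyP2}: the task reduces to showing that, under uniform channels and equal fairness factors, the priority threshold $\lambda^*$ induces a threshold on the user \emph{index}.

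First I would note that setting $h_k = h$ and $\beta_k = \beta$ for all $k$ makes the constant in \eqref{Eq:Const} equal to $\upsilon_k = B C_k P_k h^2/(N_0 \ln 2)$, an increasing affine function of the product $C_k P_k$ alone. Substituting into \eqref{Eq:OffPriority}, the priority takes the form $\phi_k = (\beta N_0/h^2)\,\psi(\upsilon_k)$, where $\psi(\upsilon) = \upsilon\ln\upsilon - \upsilon + 1$ for $\upsilon \ge 1$ and $\psi(\upsilon) = 0$ otherwise. Since $\psi$ is continuous, nonnegative, and has $\psi'(\upsilon) = \ln\upsilon \ge 0$ on $[1,\infty)$, it is nondecreasing on $[0,\infty)$; this is Lemma~\ref{Lem:Phi} read off for the single variable $CP$. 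Hence the assumption $P_1 C_1 \le P_2 C_2 \le \cdots \le P_K C_K$ forces $\phi_1 \le \phi_2 \le \cdots \le \phi_K$.

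With the priorities sorted, define $k_t = \min\{k : \phi_k > \lambda^*\}$, with $k_t = K+1$ if the set is empty, where $\lambda^*$ is the same Lagrange multiplier as in Theorem~\ref{Theo:OptiPolicyP2} (fixed by the active constraint $\sum_k t_k^{*(2)} = T$). Monotonicity then gives $\phi_k < \lambda^*$ for all $k < k_t$ and $\phi_k > \lambda^*$ for all $k \ge k_t$, once the measure-zero event $\phi_k = \lambda^*$ is neglected as in the theorem. Feeding this into the $\ell_k^{*(2)}$ rule of Theorem~\ref{Theo:OptiPolicyP2} yields $\ell_k^{*(2)} = R_k$ for $k \ge k_t$ and $\ell_k^{*(2)} = m_k^+$ otherwise --- the claimed data allocation --- and this also covers item~1 of the theorem (no offloading) as the degenerate case $k_t = K+1$ with all $m_k^+ = 0$. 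Substituting $h_k = h$, $\beta_k = \beta$ into the $t_k^{*(2)}$ expression of Theorem~\ref{Theo:OptiPolicyP2} reproduces the stated formula, the common Lambert-function prefactor multiplying each $\ell_k^{*(2)}$.

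The one point I would treat carefully is the passage from the two-variable monotonicity of $\phi$ in $C$ and $P$ (Lemma~\ref{Lem:Phi}) to monotonicity in the scalar $C_k P_k$: one must observe that $\phi$ depends on $(C_k, P_k)$ only through $\upsilon_k$, hence only through the product, so two users with equal $C_k P_k$ receive identical priorities and the ordering by products is well defined and compatible with the priority ordering. Beyond this, the corollary is a term-by-term rewriting of Theorem~\ref{Theo:OptiPolicyP2} with no new computation.
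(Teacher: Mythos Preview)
Your proposal is correct and matches the paper's intended approach: the paper presents this result simply as a corollary of Theorem~\ref{Theo:OptiPolicyP2} without giving a separate proof, and your argument fills in exactly the details one would expect---namely that under uniform $h_k=h$ and $\beta_k=\beta$ the priority $\phi_k$ depends on $(C_k,P_k)$ only through the product $C_kP_k$ via $\upsilon_k$, so the assumed ordering on $C_kP_k$ yields the ordering $\phi_1\le\cdots\le\phi_K$ by Lemma~\ref{Lem:Phi}, and the threshold on $\lambda^*$ becomes a threshold on the index $k$. Your care in noting that $\phi$ factors through the product $C_kP_k$ (not $C$ and $P$ separately) is appropriate and slightly more explicit than the paper itself.
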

The result shows that the  optimal resource-allocation policy follows  a \emph{greedy} approach that selects  mobiles in a descending order of energy consumption per bit for complete offloading until the time-sharing duration is fully utilized.

\subsubsection{Uniform Local Computing} Consider the case of   $C_1P_1 = C_2P_2\cdots = C_K P_K$. Similar to the previous case, the optimal resource-allocation policy can be shown to follow the greedy approach that selects mobiles for complete offloading in the descending order of channel gains.

\section{Multiuser  MECO for TDMA:\\ Finite  Cloud   Capacity}
In this section, we consider the case of finite cloud  capacity and analyze the optimal resource-allocation policy for solving Problem P1. The policy is shown to also have a threshold-based structure as the infinite-capacity counterpart derived in the preceding section.  Both the optimal and sub-optimal algorithms are presented for policy computation. The results are extended to the finite-capacity cloud with non-negligible computing time.

\subsection{Optimal Resource-Allocation Policy}
To solve  the convex Problem P1, the  corresponding partial Lagrange function is written as 
\begin{align}\nn
\tilde{L}=\sum_{k=1}^{K} &\beta_k \l[ \dfrac{t_k}{h_k^2}  f\l(\frac{\ell_k}{t_k}\r) + (R_k-\ell_k) C_k P_k \r]\\&+ \lambda\l(\sum_{k=1}^K t_k - T\r) +  \mu \l(\sum_{k=1}^{K} C_k \ell_k - F\r)
\end{align}
where $\mu \geq 0 $  is the Lagrange multiplier associated with the cloud  capacity constraint. Using the above Lagrange function,   it is straightforward to show that the corresponding KKT conditions can be modified from their infinite-capacity counterparts in \eqref{Eq:OptL}-\eqref{Eq:OptT:a} by  replacing $P_k$ with $\tilde{P}_k=P_k-\mu$, called the \emph{effective computation energy per cycle}. The resultant \emph{effective offloading priority function}, denoted as $\tilde{\phi}_{k}$,  can be modified accordingly from that in \eqref{Eq:OffPriority} as 
\begin{equation}\label{Eq:EffectiveOffPriority}
\tilde{\phi}(\beta_k, C_k, P_k, h_k,\tilde{\mu}^*)=\begin{cases}
   \dfrac{\beta_k N_0}{h_k^2} \l( \tilde{\upsilon}_k  \ln \tilde{\upsilon}_k-\tilde{\upsilon}_k+1\r),  \!\!\!\!\!\!&\mbox{$\tilde{v}_k\ge 1$}\\
   0,  &\mbox{$\tilde{v}_k<1$}
   \end{cases},
\end{equation} 
where $\tilde{\upsilon}_k=\dfrac{B C_k  (P_k-\tilde{\mu}^*) h_k^2 }{N_0\ln2 }.$

Moreover, it can be easily derived that a cloud with smaller capacity $F$ leads to a larger Lagrange multiplier $\tilde{\mu}^*$. It indicates that compared with $\phi_k$ in \eqref{Eq:OffPriority} for the case of infinite-capacity cloud, the effective offloading priority function here is also determined by the cloud capacity. Based on above discussion, the main result of this section follows.

\begin{theorem}
\label{Theo:OptiPolicyP1}\emph{Consider the finite-capacity cloud with upper-bounded offloaded computation.  The optimal policy solving Problem P1 has the same structure as that in Theorem~\ref{Theo:OptiPolicyP2} and is expressed in terms of  the priority function $\tilde{\phi}_{k}$ in 
\eqref{Eq:EffectiveOffPriority} and optimized Lagrange multipliers $\{\tilde{\lambda}^*, \tilde{\mu}^*\}$. }
\end{theorem}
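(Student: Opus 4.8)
The plan is to obtain Theorem~\ref{Theo:OptiPolicyP1} essentially as a corollary of Theorem~\ref{Theo:OptiPolicyP2}, by exploiting the structural parallel between the KKT systems of Problems P1 and P2 that the preceding discussion already anticipates. First I would invoke Lemma~\ref{Lem:ConvexProblem}: Problem~P1 is convex, and since all of its constraints are affine, the KKT conditions associated with $\tilde{L}$ are both necessary and sufficient for optimality whenever P1 is feasible, i.e.\ $\sum_k m_k^+ C_k\le F$. Writing these conditions out, the stationarity relations for $\ell_k^*$ and $t_k^*$ coincide in form with \eqref{Eq:OptL} and \eqref{Eq:OptT} once $P_k$ is replaced by the effective energy-per-cycle $\tilde{P}_k$: the partial derivative in $\ell_k$ of the new term $\mu(\sum_j C_j\ell_j-F)$ contributes $\mu C_k$, which simply shifts $P_k$ to $\tilde{P}_k$ inside \eqref{Eq:OptL}, while \eqref{Eq:OptT} (which does not involve $P_k$) is untouched. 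The only genuinely new ingredient is the extra complementary-slackness pair $\tilde{\mu}^*\ge 0$, $\tilde{\mu}^*(\sum_k C_k\ell_k^* - F)=0$.

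Next I would freeze the cloud multiplier at its optimal value $\tilde{\mu}^*$ and observe that, with $\tilde{\mu}^*$ held fixed, the remaining KKT system — stationarity in $\{\ell_k\}$ and $\{t_k\}$ together with the time-sharing slackness \eqref{Eq:OptT:a} — is exactly the system solved in the proof of Theorem~\ref{Theo:OptiPolicyP2}, with the single parameter substitution $P_k\mapsto \tilde{P}_k$ throughout. Consequently the consistency relation \eqref{Eq:LandT} becomes $f^{'-1}(C_k\tilde{P}_k h_k^2)=g^{-1}(-h_k^2\tilde{\lambda}^*/\beta_k)$, so $\upsilon_k$ becomes $\tilde{\upsilon}_k$ and, by Lemma~\ref{Lem:PriStrongLA} applied to $\tilde{P}_k$, the offloading priority $\phi_k$ becomes the effective priority $\tilde{\phi}_k$ of \eqref{Eq:EffectiveOffPriority}. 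Theorem~\ref{Theo:OptiPolicyP2} then delivers verbatim the threshold characterization of $\{\ell_k^*,t_k^*\}$ — including the degenerate no-offloading case — now phrased with $\tilde{\phi}_k$, threshold $\tilde{\lambda}^*$, and the same Lambert-$W$ formula for $t_k^*$, which carries over unchanged since it never involved $P_k$ but only $\lambda^*$, $h_k$ and $\beta_k$.

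It then remains to show that a consistent multiplier pair $(\tilde{\lambda}^*,\tilde{\mu}^*)$ with $\tilde{\lambda}^*,\tilde{\mu}^*\ge 0$ exists, i.e.\ that both complementary-slackness conditions can be met simultaneously. For each fixed $\mu\ge 0$, Theorem~\ref{Theo:OptiPolicyP2} applied with $\tilde{P}_k=P_k-\mu/\beta_k$ produces a threshold $\lambda(\mu)\ge 0$ enforcing $\sum_k t_k^*=T$, and hence a well-defined induced offloaded CPU load $\Lambda(\mu):=\sum_k C_k\ell_k^*(\mu)$. I would argue that $\Lambda(\mu)$ is continuous and nonincreasing in $\mu$: raising $\mu$ lowers every $\tilde{P}_k$, hence, by the monotonicity of $\phi$ in its $P$-argument (Lemma~\ref{Lem:Phi}), lowers every $\tilde{\upsilon}_k$ and $\tilde{\phi}_k$, which can only push users from complete offloading toward minimum offloading. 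If $\Lambda(0)\le F$ the cloud constraint is already slack and $\tilde{\mu}^*=0$ recovers Theorem~\ref{Theo:OptiPolicyP2}; otherwise, since for large $\mu$ one has $\tilde{P}_k<0$ and minimum offloading is optimal for every $k$, $\Lambda(\mu)\to\sum_k m_k^+ C_k\le F$, so the intermediate value theorem yields $\tilde{\mu}^*>0$ with $\Lambda(\tilde{\mu}^*)=F$; an analogue of Lemma~\ref{Lem:ProLambda} confines the search for $\tilde{\lambda}^*$ and $\tilde{\mu}^*$ to bounded intervals. I expect this monotonicity-and-continuity analysis of $\Lambda(\mu)$ — in particular the bookkeeping at the kinks where some $\tilde{\upsilon}_k$ crosses $1$ or some $\ell_k^*$ switches between $m_k^+$ and $R_k$ — to be the main technical obstacle; everything else is a direct transcription of the infinite-capacity argument under $P_k\mapsto\tilde{P}_k$.
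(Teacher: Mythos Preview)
Your approach is essentially the paper's own: the paper simply observes that adding the cloud-capacity multiplier $\mu$ to the Lagrangian modifies the KKT stationarity conditions \eqref{Eq:OptL}--\eqref{Eq:OptT:a} only through the substitution $P_k\mapsto\tilde P_k$, whence the threshold structure of Theorem~\ref{Theo:OptiPolicyP2} carries over verbatim with $\phi_k\mapsto\tilde\phi_k$ and $(\lambda^*)\mapsto(\tilde\lambda^*,\tilde\mu^*)$. Your monotonicity/continuity argument for the existence of $(\tilde\lambda^*,\tilde\mu^*)$ goes beyond what the paper actually proves---it takes existence for granted and relegates the computation of the multipliers to the two-dimensional bisection in Algorithm~\ref{Alg:WLA:Opt}---and, incidentally, your $\tilde P_k=P_k-\mu/\beta_k$ is the correct shift (the paper's $\tilde P_k=P_k-\mu$ tacitly drops the weight).
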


\begin{remark}[Variation of Offloading Priority Order]\emph{Since $\tilde{\mu}^* >0$, it has $\tilde{\phi}_k<\phi_k$ for all $k$. Therefore, the offloading priority order may be different with that of infinite-capacity cloud, due to the varying decreasing rates of offloading priorities. The reason is that the finite-capacity cloud should make the tradeoff between energy savings and computation burden. To this end, it will select mobiles for offloading that can save significant energy and require less computation
 for each bit of data.
 }
\end{remark}

Computing the threshold for the optimal resource-allocation policy requires  a \emph{two-dimension search} over the Lagrange multipliers  $\{\tilde{\lambda}^*, \tilde{\mu}^*\}$, described in Algorithm~\ref{Alg:WLA:Opt}. For an efficient search, it is useful to limit the range of $\tilde{\lambda}^*$ and $\tilde{\mu}^*$ shown as below, which can be easily proved.

\begin{lemma}\label{Lem:ProMu}
\emph{When there is at least one offloading mobile, the optimal Lagrange multipliers $\{\tilde{\lambda}^*, \tilde{\mu}^*\}$ satisfy:
\begin{align*}
 0\le \tilde{\lambda}^* \le \lambda_{\max}, ~\text{and}~  0\le \tilde{\mu}^* \le \mu_{\max}=\max_k \l\{P_k-\dfrac{N_0 \ln 2}{B C_k h_k^2}\r\}
\end{align*}
where $\lambda_{\max}$ is defined in Lemma~\ref{Lem:ProLambda}.
}
\end{lemma}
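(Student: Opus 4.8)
The plan is to derive both bounds directly from the KKT optimality conditions for Problem~P1 together with the characterization in Theorem~\ref{Theo:OptiPolicyP1}. For the bound on $\tilde\lambda^*$, I would argue exactly as in Lemma~\ref{Lem:ProLambda}: the time-sharing multiplier is nonnegative by definition, and when at least one mobile offloads, the active time-sharing constraint $\sum_k t_k^* = T$ forces $\tilde\lambda^*$ to lie below the largest threshold at which \emph{some} mobile is still assigned positive time. Since the effective priority $\tilde\phi_k$ controls the offloading decision in Theorem~\ref{Theo:OptiPolicyP1} in the same way $\phi_k$ did in Theorem~\ref{Theo:OptiPolicyP2}, and since $\tilde\phi_k\le\phi_k$ for every $k$ (as noted in the remark following Theorem~\ref{Theo:OptiPolicyP1}), the threshold cannot exceed $\max_k \phi_k = \lambda_{\max}$. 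If $\tilde\lambda^*$ were larger, no mobile would perform complete offloading and the only positive $t_k^*$ would come from mobiles with $m_k^+>0$; but then one checks that decreasing $\tilde\lambda^*$ keeps the solution feasible and cannot increase the objective, contradicting optimality unless $\tilde\lambda^*\le\lambda_{\max}$.

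For the bound on $\tilde\mu^*$, the key observation is that $\tilde\mu^*\ge 0$ by definition of the Lagrange multiplier for the inequality constraint $\sum_k C_k\ell_k\le F$. For the upper bound, recall the effective priority uses $\tilde\upsilon_k = BC_k(P_k-\tilde\mu^*)h_k^2/(N_0\ln 2)$, and a mobile is "offloading-desired" only when $\tilde\upsilon_k\ge 1$, i.e.\ when $\tilde\mu^* \le P_k - \frac{N_0\ln 2}{BC_k h_k^2}$. If $\tilde\mu^*$ exceeded $\mu_{\max}=\max_k\{P_k - \frac{N_0\ln 2}{BC_k h_k^2}\}$, then $\tilde\upsilon_k<1$ and $\tilde\phi_k=0$ for \emph{every} $k$, so by Theorem~\ref{Theo:OptiPolicyP1} no mobile would perform complete offloading and every offloading mobile would carry only its minimum load $m_k^+$. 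In that regime the cloud-capacity constraint reads $\sum_k C_k m_k^+\le F$, which is precisely the feasibility condition already assumed, hence the constraint is slack; complementary slackness then forces $\tilde\mu^*=0$, contradicting $\tilde\mu^*>\mu_{\max}\ge 0$ (the latter being nonnegative whenever at least one mobile is offloading-desired). Therefore $\tilde\mu^*\le\mu_{\max}$.

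I would organize the write-up as two short paragraphs mirroring the two bounds, in each case invoking complementary slackness to rule out the "too-large-multiplier" regime. The main obstacle is making the complementary-slackness argument for $\tilde\mu^*$ fully rigorous: one must confirm that when all $\tilde\phi_k=0$ the optimal offloaded sizes are exactly $\{m_k^+\}$ (not merely bounded by them), which follows from the $\ell_k$-stationarity condition in \eqref{Eq:OptL} with $P_k$ replaced by $\tilde P_k = P_k-\tilde\mu^*$ — the derivative is then strictly positive, pinning $\ell_k^*$ to its lower endpoint $m_k^+$ — and then noting that $\sum_k C_k m_k^+\le F$ is strict unless the problem is at the boundary of feasibility, a degenerate case one can treat separately or absorb by allowing $\tilde\mu^*=\mu_{\max}$ in the closed bound. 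The $\tilde\lambda^*$ bound is routine once Lemma~\ref{Lem:ProLambda} and the inequality $\tilde\phi_k\le\phi_k$ are in hand.
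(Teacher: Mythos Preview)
Your proposal is correct and follows the same reasoning the paper intends: the paper does not spell out a proof but simply states the lemma ``can be easily proved'' and remarks immediately afterward that $\tilde\mu^*=\mu_{\max}$ is precisely the point at which offloading yields no energy savings for any mobile, while the $\tilde\lambda^*$ bound is inherited from Lemma~\ref{Lem:ProLambda}. Your complementary-slackness argument for the $\tilde\mu^*$ bound and the use of $\tilde\phi_k\le\phi_k$ for the $\tilde\lambda^*$ bound are exactly what is needed to make those hints precise.
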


Note that $\tilde{\mu}^* = 0$ corresponds to the case of infinite-capacity cloud and $\tilde{\mu}^* = \mu_{\max}$ to the case where offloading yields no energy savings for any mobile.

\subsection{Sub-Optimal Resource-Allocation Policy}\label{Sec:SoptTDMA}
\begin{algorithm}[t]
  \caption{ Optimal Algorithm for Solving Problem P1.}
  \label{Alg:WLA:Opt}
  \begin{itemize}
\item {\textbf{Step 1} [Check solution for Problem P2]: Perform Algorithm~\ref{Alg:SLA:Opt}. If $\sum_{k=1}^K \ell_k^{*(2)} \le F$, the optimal policy is given in Theorem~\ref{Theo:OptiPolicyP2}. Otherwise, go to Step~$2$.
}
\item{\textbf{Step 2} [Initialize]: Let $\mu_\ell=0$ and $\mu_h=\mu_{\max}$.  Based on Theorem~\ref{Theo:OptiPolicyP1}, obtain $F_\ell=\sum_{k=1}^K C_k \ell_{k,\ell}^*$ and $F_h=\sum_{k=1}^K C_k \ell_{k,h}^*$, where $\{\ell_{k,\ell}^*\}$ and $\{\ell_{k,h}^*\}$ are the offloaded data sizes for $\mu_\ell$ and $\mu_h$, respectively, involving the one-dimension search for $\tilde{\lambda}^{*}$. }
\item{\textbf{Step 3} [Bisection search]: While $F_{\ell} \neq F$ and  $F_{h} \neq F$, update $\{\mu_\ell, \mu_h\}$ as follows.\\ (1) Define $\mu_{m}=(\mu_\ell+\mu_h)/2$ and compute $F_{m}$. \\(2) If $F_m = F$, then $\tilde{\mu}^*=\mu_m$  and the optimal policy can be determined. Otherwise, if $F_{m}<F$, let $\mu_{h}=\mu_{m}$ and if $F_{m}>F$, let $\mu_{\ell}=\mu_{m}$.}
\end{itemize}
\end{algorithm} 

\begin{algorithm}[t]
  \caption{ Sub-Optimal Algorithm for Solving  Problem P1.}
  \label{Alg:WLA:Sopt}
  \begin{itemize}
  \item{\textbf{Step 1}:  Perform Algorithm~\ref{Alg:SLA:Opt}. If $\sum_{k=1}^K \ell_k^{*(2)} \le F$, Theorem~\ref{Theo:OptiPolicyP2} gives the optimal policy. Otherwise, go to Step $2$.}
\item{\textbf{Step 2}: Based on  offloading priorities in \eqref{Eq:OffPriority}, offload the data from mobiles in the descending order of offloading  priority until  the cloud computation capacity is fully occupied, i.e., $\sum_{k=1}^{K} C_k \ell_k^* = F.$}
\item{\textbf{Step 3}: With $\{\ell_k^{*}\}$ derived in Step $2$, perform one-dimension search for $\lambda^{*}$ such that $\sum_{k=1}^K t_k^{*}=T$ where $t_k^{*}= \dfrac{\ell_k^{*}\ln 2}{B[W_0(\frac{\lambda^{*} h_k^2/\beta_k-N_0}{N_0 e})+1]}$.
}
\end{itemize}
  \end{algorithm}

To reduce the computation  complexity of Algorithm \ref{Alg:WLA:Opt} due to the two-dimension search,  one simple sub-optimal  policy is proposed as shown in Algorithm~\ref{Alg:WLA:Sopt}. The key idea is to decouple the computation and radio resource allocation.  In Step~$2$, based on the \emph{approximated} offloading priority in \eqref{Eq:OffPriority} for the case of infinite-capacity cloud, we allocate the computation resource to mobiles with high offloading priorities. Step $3$  optimizes  the  corresponding fractions of slot given offloaded data. {This sub-optimal algorithm has low computation complexity. Specifically, given a solution accuracy $\varepsilon>0$, the  iteration complexity for one-dimensional search can be given as $\mathcal{O}(\log(1/\varepsilon))$. For each iteration, the resource-allocation complexity is $\mathcal{O}(K)$. Thus, the total computation complexity for the sub-optimal algorithm is $\mathcal{O}(K\log(1/\varepsilon))$. Moreover,  its performance is shown by simulation  to be close-to-optimal in the sequel.

\subsection{Extension: MECO with Non-Negligible Computing Time }
Consider another finite-capacity cloud for which the computing time is non-negligible. Surprisingly, the resultant optimal policy is also threshold based,  with respect to a different offloading priority function.

Assume that the edge cloud performs load balancing for the uploaded computation as in \cite{wang2010towards}. In other words, the CPU cycles are proportionally allocated for each user such that all users experience the same computing time: $(\sum_{k=1}^K C_k \ell_k)/F^{'}$  (see Section~\ref{Sec:CCM}). Then the latency constraint is reformulated as $ (\sum_{k=1}^K C_k \ell_k)/F^{'}+\sum_{k=1}^K t_k \le T$, accounting for both the data transmission and  cloud computing time. The resultant optimization problem for minimizing  weighted sum mobile energy consumption is re-written by 
\begin{equation}\tag{$\textbf{P3}$}\label{Pro:CompuiingTime} 
\begin{aligned}
\min_ {\{\ell_k, t_k\} }   ~ &\sum_{k=1}^{K} \beta_k \l[ \dfrac{t_k}{h_k^2}  f\!\l(\frac{\ell_k}{t_k}\r) + (R_k-\ell_k)C_k P_k \r] \\
\text{s.t.}\quad 
& \frac{\sum_{k=1}^K C_k \ell_k}{F^{'}}+\sum_{k=1}^K t_k \le T,\\ 
& t_k \ge 0,  \quad m_k^+\le \ell_k \le R_k, ~~~ k \in \mathcal{K}.
\end{aligned}
\end{equation}

The key challenge of Problem P3 is that the amount of offloaded data size for each user has effects on offloading energy consumption, offloading duration and cloud computing time, making the problem more complicated. 

The feasibility condition for Problem P3 can be easily obtained as: $(\sum_{k=1}^K C_k m_k^+)/F^{'}<T.$ Note that the case $(\sum_{k=1}^K C_k m_k^+)/F^{'}=T$ makes Problem P3 infeasible since the resultant offloading time ($t_k=0$) cannot enable computation offloading.

Similarly, to solve Problem P3, the partial Lagrange function is written as 
\begin{align*}
\widehat{L}=\sum_{k=1}^{K} \beta_k &\l[ \dfrac{t_k}{h_k^2}  f\l(\frac{\ell_k}{t_k}\r) + (R_k-\ell_k) C_k P_k \r] \\&+ \lambda\l(\frac{\sum_{k=1}^K C_k\ell_k}{F^{'}}+\sum_{k=1}^K t_k - T\r).
\end{align*}
Define two sets of important constants: $a_k=\frac{F^{'} \ln 2}{B C_k}$ and $b_k=\frac{F^{'} P_k h_k^2}{N_0}$ for all $k$. Using KKT conditions, we can obtain the following offloading priority function
\begin{equation}\label{Eq:OffPriorityCompu}
\widehat{\phi}(\beta_k, C_k,  P_k, h_k, F^{'})\!=\!\!\begin{cases}
   \dfrac{\beta_k N_0}{h_k^2} \l( \widehat{\upsilon}_k  \ln \widehat{\upsilon}_k-\widehat{\upsilon}_k\!+\!1\r),  &\mbox{$\widehat{\upsilon}_k\ge 1$}\\
   0,  &\mbox{$\widehat{\upsilon}_k<1$}
   \end{cases},
\end{equation}
where 
\begin{equation}\label{Eq:Upsilon3}
\widehat{\upsilon}_k=\dfrac{b_k-1}{ W_0((b_k-1)e^{(a_k-1)})}.
\end{equation}

This function is derived by solving a equation in the following lemma, proved in Appendix~\ref{App:OffPrioNLA}.

\begin{lemma}\label{Lem:OffPrioNLA}\emph{Given $\widehat{\upsilon}_k\ge1$, the offloading priority function $\widehat{\phi}_k=\widehat{\phi}(\beta_k, C_k,  P_k, h_k,F^{'})$ in \eqref{Eq:OffPriorityCompu} is  the root of the following equation with respect to $x$: 
\begin{equation}\label{Eq:FunPriCloud2}
f^{'-1}\!\l(C_k P_k h_k^2-\frac{x C_k h_k^2}{\beta_k F^{'} }\r)=g^{-1}\!\l(\frac{-h_k^2 x}{\beta_k} \r).
\end{equation}
}
\end{lemma}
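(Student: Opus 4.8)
\textbf{Proof proposal for Lemma~\ref{Lem:OffPrioNLA}.}

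The plan is to follow the same blueprint that establishes Lemma~\ref{Lem:PriStrongLA}, namely to reduce the two-sided equation \eqref{Eq:FunPriCloud2} to a single scalar equation in the offloading rate, solve that equation in closed form, and then read off the value of $x=\widehat\phi_k$. First I would recall the explicit forms $f(y)=N_0(2^{y/B}-1)$, so $f'(y)=\frac{N_0\ln2}{B}2^{y/B}$ and $g(y)=f(y)-yf'(y)=N_0(2^{y/B}-1)-\frac{yN_0\ln2}{B}2^{y/B}$. Writing $y_k=\ell_k^{*(2)}/t_k^{*(2)}$ for the common offloading rate, the left-hand equality $y_k=f'^{-1}\!\big(C_kP_kh_k^2-\frac{xC_kh_k^2}{\beta_kF'}\big)$ gives $2^{y_k/B}=\frac{B}{N_0\ln2}\big(C_kP_kh_k^2-\frac{xC_kh_k^2}{\beta_kF'}\big)$, i.e. an affine relation between $2^{y_k/B}$ and $x$. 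The right-hand equality $g(y_k)=-h_k^2x/\beta_k$ then becomes a second relation between $x$, $y_k$, and $2^{y_k/B}$. Eliminating $x$ between the two yields one transcendental equation for $y_k$ alone; substituting the first relation into the second, the $x$-terms combine and (after dividing through by $\frac{N_0\ln2}{B}2^{y_k/B}$, which is legitimate since it is strictly positive) one is left with an equation of the schematic shape ``$y_k/B$ plus a constant equals a constant times $2^{-y_k/B}$'', which is exactly the form solved by the Lambert-$W$ function.

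Concretely, I expect the elimination to collapse to $\big(\tfrac{y_k}{B}+ \text{const}_1\big) = \text{const}_2\, 2^{-y_k/B}$ with the constants expressed through $a_k=\frac{F'\ln2}{BC_k}$ and $b_k=\frac{F'P_kh_k^2}{N_0}$; multiplying by $2^{y_k/B}$ and setting $u=(\ln2)\big(\tfrac{y_k}{B}+\text{const}_1\big)$ puts it in the canonical form $ue^{u}=(b_k-1)e^{a_k-1}$, whose positive solution is $u=W_0\big((b_k-1)e^{a_k-1}\big)$. Back-substituting gives $2^{y_k/B}=\widehat\upsilon_k$ with $\widehat\upsilon_k=\dfrac{b_k-1}{W_0((b_k-1)e^{(a_k-1)})}$ as claimed in \eqref{Eq:Upsilon3}, and then plugging $2^{y_k/B}=\widehat\upsilon_k$ back into the right-hand relation $x=-\beta_k g(y_k)/h_k^2$ and simplifying $g$ in terms of $\widehat\upsilon_k$ and $\ln\widehat\upsilon_k=(y_k/B)\ln2$ produces precisely $x=\frac{\beta_kN_0}{h_k^2}\big(\widehat\upsilon_k\ln\widehat\upsilon_k-\widehat\upsilon_k+1\big)$, i.e. $\widehat\phi_k$ in \eqref{Eq:OffPriorityCompu}. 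Finally I would check the hypothesis $\widehat\upsilon_k\ge1$ is exactly the condition that makes $y_k\ge0$ and hence makes this root admissible (and makes $\widehat\phi_k\ge0$, since $\upsilon\ln\upsilon-\upsilon+1\ge0$ for $\upsilon\ge1$), matching the case split in the definition.

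The main obstacle I anticipate is purely algebraic bookkeeping: one must be careful that the argument of $W_0$ that appears, $(b_k-1)e^{a_k-1}$, is nonnegative (true once $b_k\ge1$, which is implied by $\widehat\upsilon_k\ge1$ together with $a_k>0$), so that the principal branch $W_0$ is the correct and unique choice and the manipulation $u e^u = z \Leftrightarrow u=W_0(z)$ is valid; and that the division steps used in the elimination never divide by zero. There is no conceptual difficulty beyond Lemma~\ref{Lem:PriStrongLA} — the only new feature is that the cloud-computing-time term $\frac{C_k\ell_k}{F'}$ makes the left-hand argument of $f'^{-1}$ depend on $x$, which is what turns the clean logarithm of Lemma~\ref{Lem:PriStrongLA} into a Lambert-$W$; isolating that single change and pushing it through the same computation is the whole proof.
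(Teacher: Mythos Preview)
Your proposal is correct and follows essentially the same route as the paper's proof: both introduce an auxiliary variable for the common offloading rate (your $2^{y_k/B}$ is exactly the paper's $z_k=\tfrac{B}{N_0\ln 2}\cdot(\text{argument of }f'^{-1})$), eliminate $x$ to arrive at the scalar equation $z_k\ln z_k+(a_k-1)z_k=b_k-1$, solve it via the Lambert $W_0$ identity to obtain $z_k=\widehat\upsilon_k$, and then back-substitute into $x=-\beta_k g(\cdot)/h_k^2$ to recover $\widehat\phi_k$. The only cosmetic difference is that the paper packages the Lambert step as the ready-made fact ``$x\ln x+px=q\Rightarrow x=q/W_0(qe^p)$'' before specializing, whereas you re-derive the canonical $ue^u=(b_k-1)e^{a_k-1}$ form in place.
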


Recall that for a  cloud that upper-bounds the offloaded computation, its offloading priority (i.e., $\tilde{\phi}_k$ in 
\eqref{Eq:EffectiveOffPriority}) is function of a Lagrange multiplier $\tilde{\mu}^*$ which is determined by $F$. However, for the current cloud with non-negligible computing time, the offloading priority function $\widehat{\phi}_k$ in \eqref{Eq:OffPriorityCompu} is directly affected by the finite cloud capacity $F^{'}$ via $\widehat{\upsilon}_k$.

 In the following, the properties of $\widehat{\upsilon}_k$, which is the key component of $\widehat{\phi}_k$, are characterized. 

\begin{lemma}\label{Lem:WandAB}\emph{
$\widehat{\upsilon}>1$ if and only if $ \upsilon>1$,
 where $\upsilon$ is defined in \eqref{Eq:Const}.
}
\end{lemma}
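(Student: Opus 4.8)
The plan is to work directly from the defining formula $\widehat{\upsilon}_k = (b_k-1)/W_0\!\bigl((b_k-1)e^{a_k-1}\bigr)$ in \eqref{Eq:Upsilon3}, together with the constants $a_k = F'\ln 2/(BC_k)$ and $b_k = F'P_kh_k^2/N_0$, and relate the condition $\widehat{\upsilon}_k > 1$ to a condition on $b_k$ and $a_k$, which I will then translate back into the condition $\upsilon_k > 1$, i.e. $BC_kP_kh_k^2/(N_0\ln 2) > 1$. The first step is to observe that $a_k > 0$ and that $\upsilon_k = b_k/a_k$ (since $b_k/a_k = \tfrac{F'P_kh_k^2}{N_0}\cdot\tfrac{BC_k}{F'\ln 2} = \tfrac{BC_kP_kh_k^2}{N_0\ln 2} = \upsilon_k$). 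Hence the target equivalence ``$\widehat{\upsilon}_k > 1 \iff \upsilon_k > 1$'' is precisely ``$\widehat{\upsilon}_k > 1 \iff b_k > a_k$''.

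Next I would analyze the map $b \mapsto \widehat{\upsilon}(b) := (b-1)/W_0\bigl((b-1)e^{a-1}\bigr)$ for fixed $a > 0$, on the relevant domain. Using the defining identity of the Lambert $W$ function, $W_0(z)e^{W_0(z)} = z$, set $z = (b-1)e^{a-1}$ and $w = W_0(z)$; then $\widehat{\upsilon} = (b-1)/w = we^{w}e^{-(a-1)}/w = e^{w-a+1}$, so $\widehat{\upsilon}_k = e^{\,W_0((b_k-1)e^{a_k-1}) - a_k + 1}$. This closed form is the crux: $\widehat{\upsilon}_k > 1$ holds if and only if the exponent is positive, i.e. $W_0\bigl((b_k-1)e^{a_k-1}\bigr) > a_k - 1$. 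Since $W_0$ is strictly increasing on its domain and $W_0(u e^{u}) = u$ for $u \ge -1$, applying $W_0$-inverse monotonicity (equivalently, applying $x\mapsto xe^x$, which is increasing for $x\ge -1$, to both sides) gives $W_0\bigl((b_k-1)e^{a_k-1}\bigr) > a_k-1 \iff (b_k-1)e^{a_k-1} > (a_k-1)e^{a_k-1} \iff b_k - 1 > a_k - 1 \iff b_k > a_k$, which by the first step is exactly $\upsilon_k > 1$. One should also note the edge behavior: when $\upsilon_k \ge 1$ we have $b_k \ge a_k > 0$, so $(b_k-1)e^{a_k-1} \ge -e^{a_k-1} > -1/e$ when $a_k$ is such that $b_k-1\ge -1$, keeping the argument of $W_0$ in the principal branch's domain; this needs a brief check but is routine given $b_k > 0$.

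The main obstacle is handling the branch/domain subtleties of $W_0$ cleanly — specifically, verifying that $(b_k-1)e^{a_k-1} \ge -1/e$ so that $W_0$ is well-defined and single-valued, and that the inequality $u e^u > v e^v \iff u > v$ is applied only on the range $u,v \ge -1$ where $x\mapsto xe^x$ is monotone increasing. I expect this to require a short case split according to whether $b_k \ge 1$ or $b_k < 1$ (resp. $a_k \ge 1$ or $a_k < 1$), but in each case the sign of $b_k - a_k$ versus the sign of the $W_0$-exponent comparison lines up, so the equivalence survives. Once the closed form $\widehat{\upsilon}_k = e^{\,W_0((b_k-1)e^{a_k-1})-a_k+1}$ is in hand, the rest is immediate from monotonicity of the exponential and of $W_0$.
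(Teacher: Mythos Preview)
Your proposal is correct and follows essentially the same route as the paper: both arguments hinge on the identity $u = W_0(u e^{u})$ (for $u\ge -1$) together with the strict monotonicity of $W_0$ (equivalently, of $x\mapsto x e^{x}$ on $[-1,\infty)$), to reduce the inequality $\widehat{\upsilon}_k>1$ to the comparison $b_k>a_k$, which is exactly $\upsilon_k>1$ since $\upsilon_k=b_k/a_k$. The paper rewrites the numerator $b_k-1$ as $W_0\bigl((b_k-1)e^{b_k-1}\bigr)$ and then compares two $W_0$-values directly, whereas you first derive the closed form $\widehat{\upsilon}_k=e^{\,W_0((b_k-1)e^{a_k-1})-a_k+1}$ and then compare the exponent to zero; these are minor algebraic repackagings of the same monotonicity step, and your explicit attention to the branch/domain issue (which the paper leaves implicit) is a welcome addition rather than a departure.
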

It is proved in Appendix~\ref{App:WandAB} and indicates that the condition that offloading saves energy comsumption for this kind of finite-capacity cloud is same as that of infinite-capacity cloud.

\begin{lemma}\label{Lem:PhiCloudII}\emph{
Given $\widehat{\upsilon} \ge 1$, $\widehat{\phi}(\beta, C, P, h, F^{'})$ is a monotone increasing  function for $\beta$, $C$, $P$, $h$ and $F^{'}$, respectively.}
\end{lemma}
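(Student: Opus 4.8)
The plan is to work directly from the closed forms \eqref{Eq:OffPriorityCompu}--\eqref{Eq:Upsilon3}. Writing $\psi(v):=v\ln v-v+1$, we have $\widehat{\phi}=\frac{\beta N_0}{h^2}\psi(\widehat{\upsilon})$ on the regime $\widehat{\upsilon}\ge1$, which by Lemma~\ref{Lem:WandAB} is exactly $\upsilon\ge1$ with $\upsilon$ as in \eqref{Eq:Const}. Two elementary facts about $\psi$ on $[1,\infty)$ will be used repeatedly: $\psi'(v)=\ln v\ge0$, so $\psi$ is non-decreasing with $\psi(1)=0$; and $\psi(v)/v=\ln v-1+1/v$ has derivative $(v-1)/v^2\ge0$, so $\psi(v)/v$ is non-decreasing as well. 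The third ingredient is an algebraic reformulation of \eqref{Eq:Upsilon3}: with $w=W_0\!\big((b_k-1)e^{a_k-1}\big)$, combining $we^w=(b_k-1)e^{a_k-1}$ with $\widehat{\upsilon}_k=(b_k-1)/w$ gives $w=\ln\widehat{\upsilon}_k+a_k-1$, hence the identity $b_k=\psi(\widehat{\upsilon}_k)+a_k\widehat{\upsilon}_k$, equivalently (since $b_k=a_k\upsilon_k$) $\psi(\widehat{\upsilon}_k)=a_k(\upsilon_k-\widehat{\upsilon}_k)$; it holds whenever $\widehat{\upsilon}_k\ge1$.

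Dropping the index, introduce $G_a(v):=\psi(v)+av$, strictly increasing on $[1,\infty)$ (derivative $\ln v+a>0$), so that the identity reads $\widehat{\upsilon}=G_a^{-1}(b)$ with $a=\frac{F'\ln2}{BC}$, $b=\frac{F'Ph^2}{N_0}$. Then: for $\beta$, neither $a$ nor $b$ involves $\beta$, so $\widehat{\upsilon}$ is $\beta$-free and $\widehat{\phi}$ is linear in $\beta$ with nonnegative slope $\frac{N_0}{h^2}\psi(\widehat{\upsilon})$; for $P$, $a$ is $P$-free while $b$ increases in $P$, so $\widehat{\upsilon}=G_a^{-1}(b)$ increases, and since the prefactor and $\psi$ behave monotonically $\widehat{\phi}$ increases; for $C$, $b$ is $C$-free while $a$ decreases in $C$, and since $G_a(v)$ is increasing in $a$, lowering $a$ pushes the graph of $G_a$ down and $\widehat{\upsilon}=G_a^{-1}(b)$ up, so again $\widehat{\phi}$ increases.

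For $F'$ the map $G_a^{-1}(b)$ is inconclusive because $F'$ raises both $a$ and $b$ and these pull $\widehat{\upsilon}$ in opposite directions. Instead I would rewrite $\psi(\widehat{\upsilon})=a(\upsilon-\widehat{\upsilon})$ as $a=\psi(\widehat{\upsilon})/(\upsilon-\widehat{\upsilon})$. Here $a$ is $F'$ times an $F'$-free positive constant, $\upsilon$ is $F'$-free, and $\widehat{\upsilon}\in(1,\upsilon)$ whenever $\widehat{\upsilon}>1$ (then $\psi(\widehat{\upsilon})>0$ forces $\upsilon-\widehat{\upsilon}>0$), so it is enough to verify that $v\mapsto\psi(v)/(\upsilon-v)$ is increasing on $(1,\upsilon)$, which holds since its derivative there has the sign of $\psi'(v)(\upsilon-v)+\psi(v)>0$. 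Hence $a$, and thus $F'$, is strictly increasing in $\widehat{\upsilon}$, i.e.\ $\widehat{\upsilon}$ increases with $F'$; as the prefactor $\frac{\beta N_0}{h^2}$ is $F'$-free, $\widehat{\phi}$ increases in $F'$.

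The case of $h$ is the main obstacle, since $h$ enters $\widehat{\phi}$ through the decreasing prefactor $N_0/h^2$ and, via $b\propto h^2$, through the increasing factor $\psi(\widehat{\upsilon})$, making the naive product argument inconclusive. The key is to cancel the competition using the identity: $\widehat{\phi}=\frac{\beta N_0}{h^2}\big(b-a\widehat{\upsilon}\big)=\beta F'\Big(P-\frac{N_0\ln2}{BC}\cdot\frac{\widehat{\upsilon}}{h^2}\Big)$, so it suffices to show $\widehat{\upsilon}/h^2$ is non-increasing in $h$. From $b=\psi(\widehat{\upsilon})+a\widehat{\upsilon}$ with $b=\frac{F'P}{N_0}h^2$ one gets $\frac{F'P}{N_0}\cdot\frac{h^2}{\widehat{\upsilon}}=\frac{\psi(\widehat{\upsilon})}{\widehat{\upsilon}}+a$, whose right-hand side is non-decreasing in $\widehat{\upsilon}$ (because $\psi(v)/v$ is and $a$ is $h$-free); hence $h^2/\widehat{\upsilon}$ is non-decreasing in $\widehat{\upsilon}$, i.e.\ $\widehat{\upsilon}/h^2$ is non-increasing in $\widehat{\upsilon}$. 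Since $\widehat{\upsilon}=G_a^{-1}(b)$ is increasing in $h$ (with $a$ being $h$-free), $\widehat{\upsilon}/h^2$ decreases in $h$, completing the argument. Two routine loose ends should be recorded in the write-up: the degenerate case $\widehat{\upsilon}=1$ (where $\widehat{\phi}=0$ and every claim is trivial), and the legitimacy of the principal-branch manipulations of $W_0$ throughout the regime $\widehat{\upsilon}\ge1$.
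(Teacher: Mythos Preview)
Your argument is correct. The paper's own proof of this lemma is a single sentence: it simply says the result follows ``similar to Lemma~\ref{Lem:Phi}'' by computing the first derivatives of $\widehat{\phi}$ with respect to each parameter, and gives no further details. Your write-up therefore supplies considerably more than the paper does. In spirit the two approaches agree---both establish monotonicity by tracking how $\widehat{\upsilon}$ responds to each parameter---but your key device is different: you extract the algebraic identity $b=\psi(\widehat{\upsilon})+a\widehat{\upsilon}$ (equivalently $\psi(\widehat{\upsilon})=a(\upsilon-\widehat{\upsilon})$) and reason via inverses of the strictly increasing maps $G_a$ and $v\mapsto\psi(v)/(\upsilon-v)$, rather than differentiating through the Lambert function explicitly. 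This buys you clean, derivative-free handling of the delicate $h$ and $F'$ cases, where a direct computation of $\partial\widehat{\upsilon}/\partial h$ or $\partial\widehat{\upsilon}/\partial F'$ via $W_0$ would be messy; the paper does not record this identity. Your rewriting $\widehat{\phi}=\beta F'\big(P-\frac{N_0\ln2}{BC}\cdot\frac{\widehat{\upsilon}}{h^2}\big)$ for the $h$-case is a genuinely nice trick that sidesteps the competing prefactor entirely.
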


Similar to Lemma~\ref{Lem:Phi}, Lemma~\ref{Lem:PhiCloudII} can be proved by deriving the first derivatives of $\widehat{\phi}$ with respect to each parameter.  It shows that enhancing the cloud capacity will increase the offloading priority for all users that is same as the result of a cloud with upper-bounded offloaded computation.

Based on above discussion, the main result of this section are presented in the following theorem.  
\begin{theorem}\label{Theo:OptiPolicyP3}\emph{Consider the finite-capacity cloud with non-negligible computing time. The optimal resource allocation policy solving Problem P3 has the same structure as that in Theorem~\ref{Theo:OptiPolicyP2} and is expressed in terms of  the priority function $\widehat{\phi}_{k}$ in 
\eqref{Eq:OffPriorityCompu} and optimized Lagrange multipliers $\widehat{\lambda}^*$. }
\end{theorem}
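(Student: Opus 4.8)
The plan is to follow the same route as the proof of Theorem~\ref{Theo:OptiPolicyP2}, with the single time-sharing constraint replaced by the combined transmission-plus-computing latency constraint of Problem P3. First I would note that P3 is convex: its objective coincides with that of P1/P2 (convex by the argument behind Lemma~\ref{Lem:ConvexProblem}) and the new constraint $\sum_k C_k\ell_k/F' + \sum_k t_k \le T$ is affine. Hence, under the feasibility condition $\sum_k C_k m_k^+/F' < T$, strong duality holds and the KKT conditions built from $\widehat{L}$ are necessary and sufficient for optimality.

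Next I would write out the stationarity conditions. Differentiating $\widehat{L}$ gives $\partial\widehat{L}/\partial\ell_k = \beta_k f'(\ell_k/t_k)/h_k^2 - \beta_k C_k P_k + \lambda C_k/F'$ and $\partial\widehat{L}/\partial t_k = \beta_k g(\ell_k/t_k)/h_k^2 + \lambda$, together with the usual complementary-slackness sign patterns at the endpoints $\ell_k \in \{m_k^+, R_k\}$ and at $t_k = 0$. Because $f$ grows exponentially, letting $t_k \to 0$ with $\ell_k > 0$ blows up $E_{\text{off},k}$, so a user with $t_k^* = 0$ must have $\ell_k^* = 0$; this is feasible only when $m_k^+ = 0$ and reproduces the no-offloading case. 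For a user with $t_k^* > 0$, the $t_k$-condition $\beta_k g(\ell_k^*/t_k^*)/h_k^2 + \widehat{\lambda}^* = 0$ has exactly the form seen in P2, so the offloading rate $\ell_k^*/t_k^* = g^{-1}(-h_k^2\widehat{\lambda}^*/\beta_k)$ admits the same Lambert-function closed form and yields the stated expression for $t_k^*$ with $\widehat{\lambda}^*$ in place of $\lambda^*$.

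The threshold structure then follows from the $\ell_k$-condition. For an interior $\ell_k^* \in (m_k^+, R_k)$, setting $\partial\widehat{L}/\partial\ell_k = 0$ and eliminating $\ell_k^*/t_k^*$ via the $t_k$-condition reproduces exactly \eqref{Eq:FunPriCloud2} with $x = \widehat{\lambda}^*$; by Lemma~\ref{Lem:OffPrioNLA}, whenever $\widehat{\upsilon}_k \ge 1$ the unique root is $\widehat{\phi}_k$, so interior offloading forces $\widehat{\lambda}^* = \widehat{\phi}_k$. To obtain the binary decision away from this knife-edge, I would substitute the $t_k$-stationary rate $g^{-1}(-h_k^2\lambda/\beta_k)$ into $\partial\widehat{L}/\partial\ell_k$ and observe that, since $f'$ is strictly increasing and $g$ strictly decreasing (elementary for $f(x) = N_0(2^{x/B}-1)$), the resulting function of $\lambda$ is strictly increasing and vanishes at $\lambda = \widehat{\phi}_k$. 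Therefore $\widehat{\lambda}^* > \widehat{\phi}_k$ implies $\partial\widehat{L}/\partial\ell_k > 0$, hence $\ell_k^* = m_k^+$, while $\widehat{\lambda}^* < \widehat{\phi}_k$ implies $\ell_k^* = R_k$. When $\widehat{\upsilon}_k < 1$, Lemma~\ref{Lem:WandAB} gives $\upsilon_k < 1$, so $\widehat{\phi}_k = 0$ and the same computation evaluated at $\lambda = 0$ shows $\partial\widehat{L}/\partial\ell_k > 0$, pinning $\ell_k^* = m_k^+$; in particular, if $\widehat{\upsilon}_k < 1$ and $m_k^+ = 0$ for all $k$, no user offloads, matching Part~1 of Theorem~\ref{Theo:OptiPolicyP2}. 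Finally, $\widehat{\lambda}^*$ is determined by complementary slackness: if at least one user offloads, the latency constraint is active (otherwise $\widehat{\lambda}^* = 0$ would force an infinite offloading duration), so $\widehat{\lambda}^* > 0$ is the unique solution of $\sum_k C_k\ell_k^*/F' + \sum_k t_k^* = T$.

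The main obstacle is the coupling created by the single multiplier $\lambda$. Unlike Problem P1, where a separate multiplier $\tilde{\mu}$ absorbs the cloud constraint, here $\lambda$ simultaneously prices transmission time and cloud computing time, so $\partial\widehat{L}/\partial\ell_k$ picks up the extra term $\lambda C_k/F'$ and the relation fixing the offloading rate becomes the two-sided transcendental equation \eqref{Eq:FunPriCloud2} rather than a single inversion of $g$. That is precisely the difficulty already dispatched by Lemma~\ref{Lem:OffPrioNLA}, which shows the root is the clean quantity $\widehat{\phi}_k$ with $\widehat{\upsilon}_k$ given by the Lambert-function formula \eqref{Eq:Upsilon3}, together with Lemma~\ref{Lem:WandAB}; the remainder of the proof is the bookkeeping sketched above.
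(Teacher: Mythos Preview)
Your proposal is correct and follows essentially the same approach as the paper: the paper does not give a separate detailed proof of Theorem~\ref{Theo:OptiPolicyP3} but simply states that it follows from the preceding KKT analysis together with Lemmas~\ref{Lem:OffPrioNLA}--\ref{Lem:PhiCloudII}, which is precisely the template you have filled in. Your monotonicity argument for the sign of $\partial\widehat{L}/\partial\ell_k$ as a function of $\lambda$ is a clean variant of the $\min$/$\max$ bookkeeping used in Appendix~\ref{App:OptiPolicyP2}, and your identification of the active constraint $\sum_k C_k\ell_k^*/F' + \sum_k t_k^* = T$ matches the paper's one-dimensional search for $\widehat{\lambda}^*$.
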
 

The optimal policy can be computed with a one-dimension search for $\widehat{\lambda}^*$, following a similar procedure in Algorithm~\ref{Alg:SLA:Opt}.

\section{Multiuser MECO for OFDMA}
In this section, consider  resource allocation for MECO OFDMA. Both OFDM sub-channels and offloaded data sizes are optimized for the energy-efficient multiuser MECO. To solve the formulated  mixed-integer  optimization problem, a sub-optimal algorithm is proposed by defining an average offloading priority function from its TDMA counterpart and shown to have close-to-optimal performance in simulation.

\subsection{Multiuser MECO for OFDMA: Infinite Cloud Capacity}\label{Sec:OFDMAInfinite}
Consider an OFDMA system (see Section~\ref{Sec:Sys}) with $K$ mobiles and $N$ sub-channels. The cloud is assumed with infinite cloud capacity. Given time-slot duration $T$, the latency constraint for local computing  is rewritten as $C_k (R_k-\sum_{n=1}^N \ell_{k,n})/F_k \le T$. Moreover, the time-sharing constraint is replaced by sub-channel constraints, expressed as $\sum_{k=1}^K \rho_{k, n} \le 1$ for all $n$. Then the corresponding optimization problem 
for the minimum weighted sum mobile energy consumption based on OFDMA is readily re-formulated as: 
\begin{equation}\tag{$\textbf{P4}$} 
\begin{aligned}
\min_ {\{\ell_{k,n}, \rho_{k,n}\} }  &\sum_{k=1}^{K} \beta_k \l[\sum_{n=1}^N  \dfrac{\rho_{k,n}}{\bar{h}_{k,n}^2}  \bar{f}\!\l(\frac{\ell_{k,n}}{\rho_{k,n}}\r) + (R_k\!-\!\sum_{n=1}^N \ell_{k,n})C_k P_k \r] \\
\text{s.t.}\qquad 
&  \sum_{k=1}^K \rho_{k,n} \le 1,  ~~~~~~~~~~~~~n\in\mathcal{N}; \\
 & m_k^+\le \sum_{n=1}^ N \ell_{k,n} \le R_k,  ~~~~k \in \mathcal{K};\\
& \rho_{k,n}\in \{0,1\},  ~~~~~~~~~~~~~n\in \mathcal{N} ~\text{and}~ k\in \mathcal{K}.
\end{aligned}
\end{equation}

Observe that Problem P4 is a  mixed-integer programming problem that  is difficult to solve. It involves the joint optimization of both continuous variables $\{\ell_{k,n}\}$ and integer variables $\{\rho_{k,n}\}$. One common solution method is relaxation-and-rounding, which firstly relaxes the integer constraint $\rho_{k,n}\in \{0,1\}$ as the real-value constraint $0\le \rho_{k,n} \le 1$ \cite{wong1999multiuser}, and then determines the integer solution using rounding techniques. Note that the integer-relaxation problem is a convex problem which can be solved by powerful convex optimization techniques. An alternative method is using dual decomposition as in \cite{tao2008resource}, which has been proved to be optimal when the number of sub-channels goes to infinity. However, both algorithms performing extensive iterations shed little insight on the policy structure. 

To reduce the  computation complexity and characterize the  policy structure, a low-complexity sub-optimal algorithm is proposed below by a decomposition method, motivated by the following existing results and observations. First, for traditional OFDMA systems, low-complexity sub-channel allocation policy was designed in \cite{huang2009joint,kivanc2003computationally} via defining average channel gains, which was shown to  achieve close-to-optimal performance in simulation. Next, for the integer-relaxation resource allocation problem, applying KKT conditions directly can lead to its optimal solution. It can be observed that for each sub-channel, users with higher offloading priorities should be allocated with more radio resource. Therefore, in the proposed algorithm, the initial resource and offloaded data allocation is firstly determined by defining average channels gains and an \emph{average offloading priority function}. Then, the integer sub-channel assignment is performed according to the offloading priority order, followed by the adjustment of offloaded data allocation over assigned sub-channels for each user.  The main procedures of this  \emph{sequential} algorithm are  as follows. 
\begin{enumerate}
\item[--]{\textbf{Phase 1}  [Sub-Channel Reservation for Offloading-Required Users]: Consider the offloading-required users that have $m_k^+>0$. The offloading priorities for these users are ordered in the descending manner.  Based on this, the available sub-channels with high priorities are assigned to corresponding users sequentially and each user is allocated with  one sub-channel.}
\item[--]{\textbf{Phase 2}  [Initial Resource and Offloaded Data Allocation]:  For the unassigned sub-channels, using average channel gain over these sub-channels for each user, the OFDMA MECO problem is transformed into its TDMA counterpart. Then, by defining an average offloading priority function, the optimal total sub-channel number and offloaded data size for each user are derived. Note that the resultant sub-channel numbers may not be integer. }
\item[--]{\textbf{Phase 3} [Integer Sub-Channel Assignment]: Given constraints on the rounded total sub-channel numbers for each user derived in Phase $2$, specific integer sub-channel assignment is determined by the offloading priority order. Specifically, each sub-channel is assigned to the user that requires sub-channel assignment and has higher offloading priority than others.
}  
\item[--]{\textbf{Phase 4} [Adjustment of Offloaded Data Allocation]: For each user, based on the sub-channel assignment in Phase $3$, the specific offloaded data allocation is optimized. }
\end{enumerate}

Before stating the algorithm, let  $\phi_{k,n}$ define the offloading priority function for user $k$ at sub-channel $n$. It can be modified from the TDMA counterpart in \eqref{Eq:OffPriority} by replacing  $h_k$, $N_0$ and $\upsilon_k$ with $h_{k,n}$, $\bar{N}_0$ and $\upsilon_{k,n}=\frac{\bar{B} T C_k  P_k \bar{h}_{k,n}^2 }{ \bar{N}_0\ln 2 }$, respectively.  Let $\Phi$ reflect the offloading priority order, which is constituted by $\{\phi_{k,n}\}$, arranged in the descending manner, e.g., $\{ \phi_{2,3}\ge \phi_{1,4} \ge \cdots \phi_{5,2}\}$. The set of offloading-required users is denoted by $\mathcal{K}_1$, given as $\mathcal{K}_1=\{k, ~| m_k^+>0\}$. The sets of assigned and unassigned sub-channels  are denoted by $\mathcal{N}_1$ and $\mathcal{N}_2$, initialized as $\mathcal{N}_1=\emptyset$ and $\mathcal{N}_2=\mathcal{N}$. For each user, say user $k$, the assigned sub-channel set is represented by $\mathcal{S}_k$, initialized as  $\mathcal{S}_k=\emptyset$. In addition, sub-channel assignment indicators are set as $\{\rho_{k,n}=0\}$ at the beginning. 

Using these definitions, the detailed control policies are elaborated as follows. 

\subsubsection{Sub-Channel Reservation for Offloading-Required Users}
 The purpose of this phase is to guarantee that the computation latency constraints are satisfied for all users. This can be achieved  by reserving one sub-channel for each offloading-required user as presented in Algorithm~\ref{Alg:SubReser}.

Observe that Step~$1$ in the loop searches for the highest offloading priority $\phi_{k',n'}$ over unassigned sub-channels $\mathcal{N}_2$ for the remaining offloading-required users $\mathcal{K}_1$; and then allocates  sub-channel $n'$ to user $k'$. This sequential sub-channel assignment follows the descending offloading priority order. Moreover, the condition for the loop ensures that all offloading-required users will be allocated with one sub-channel. This phase only has a complexity of $\mathcal{O}(K)$ since it just performs the $\max$ operation for at most $K$ iterations.
\begin{algorithm}[t]
  \caption{Sub-Channel Reservation for Offloading-Required Users.}
  \label{Alg:SubReser}
  While  $\mathcal{K}_1 \neq \emptyset$, reserve sub-channels as follows.
  \begin{itemize}
  \item[(1)] Let $\rho_{k',n'}=1$ where $\{k',n'\}=\underset{k \in \mathcal{K}_1,n\in\mathcal{N}_2}{\arg\max} \phi_{k,n}$. ~~~
  \item[(2)] Update sets: ~$\mathcal{S}_{k'}=\mathcal{S}_{k'} \cup \{n'\}$; ~$\mathcal{K}_1=\mathcal{K}_1\setminus \{ k'\}$; ~~~
  $\mathcal{N}_1= \mathcal{N}_1 \cup \{n'\}$; ~~~$\mathcal{N}_2=\mathcal{N} \setminus \mathcal{N}_1$.
  \end{itemize}
  \end{algorithm}

\subsubsection{Initial Resource and Offloaded Data Allocation} 
This phase determines the \emph{total} allocated sub-channel number and offloaded data size for each user. Note that the integer constraint on sub-channel allocation makes Problem P4  challenging, which requires an exhaustive search. To reduce the computation complexity, we first derive the non-integer total number of sub-channels for each user as below.

Using a similar method in \cite{kivanc2003computationally}, for each user, say user $k$, let $H_k$ denote its average sub-channel gain, give by $H_k = \sqrt{(\sum_{n\in \mathcal{N}_2} \bar{h}_{k,n}^2)/|\mathcal{N}_2|}$ where $|\mathcal{N}_2|$ gives the cardinality of unassigned sub-channel set $\mathcal{N}_2$ resulted from Phase $1$. Then, the MECO OFDMA resource allocation Problem P4 is transformed into its TDMA counterpart Problem P5 as:
\begin{equation}\tag{$\textbf{P5}$} 
\begin{aligned}
\min_ {\{\ell_k, n_k\} }   ~ &\sum_{k=1}^{K} \beta_k \l[ \dfrac{n_k}{H_k^2}  \bar{f}\!\l(\frac{\ell_k}{n_k}\r) + (R_k-\ell_k)C_k P_k \r] \\
\text{s.t.}\quad 
& \sum_{k=1}^K n_k \le |\mathcal{N}_2|,\\ 
& n_k \ge 0,  \quad m_k^+\le \ell_k \le R_k, ~~~  k \in \mathcal{K} 
\end{aligned}
\end{equation}
where $\{\ell_k, n_k\}$ are the allocated total sub-channel numbers and offloaded data sizes.

 Define an \emph{average offloading priority function} as in \eqref{Eq:OffPriority} by replacing $h_k$ with $H_k$.  The optimal control policy, denoted by $\{\ell_k^*, n_k^*\}$, can be directly obtained following the same method as for Theorem~\ref{Theo:OptiPolicyP2}.  Note that  this phase only invokes the bisection search.  Similar to Section~\ref{Sec:SoptTDMA}, the computation complexity can be represented by $\mathcal{O}(K\log{(1/\varepsilon)})$.

\subsubsection{Integer Sub-Channel Assignment} Given the non-integer total sub-channel number allocation obtained in Phase $2$, in this phase, users are assigned with specific integer sub-channels based on offloading priority order.  Specifically, it includes the following  two steps as in Algorithm~\ref{Alg:SSA}.

 In the first step, to guarantee that sub-channels are enough for allocation, each user is allocated with  $\tilde{n}_k^*=\lfloor n_k^{*}\rfloor$ sub-channels.  However, allocating specific sub-channels to users given the rounded numbers is still hard, for which the optimal solution can be obtained  using the Hungarian Algorithm \cite{kuhn1955hungarian} that has the complexity of $\mathcal{O}(N^3)$. To further reduce the complexity, a priority-based sub-channel assignment is proposed as follows. Let $\tilde{\mathcal{K}}$ denote the set of users that require sub-channel assignment, which is initialized as $\tilde{\mathcal{K}}=\{k,~| \tilde{n}_k^{*}>0 \}$ and will be updated as in Step $1.(3)$, by deleting the user that has been allocated with the maximum sub-channels.  During the loop, for users in set $\tilde{\mathcal{K}}$ and available sub-channels $\mathcal{N}_2$, we search for the highest offloading priority function, indexed as $\phi_{k',n'}$, and assign sub-channel $n'$ to user $k'$.
 
 In the second step, all users compete for  remaining sub-channels since  $\tilde{n}_k^*$ is the lower-rounding of  $n_k^{*}$ in the first step.  In particular, each unassigned sub-channel in $\mathcal{N}_2$ is assigned to the user with highest offloading priority.  In total, the computation complexity of this phase is $\mathcal{O}(N)$.

  \begin{algorithm}[t!]
  \caption{Integer Sub-Channel Assignment.}
  \label{Alg:SSA}
  \textbf{Step 1}: While  $\tilde{\mathcal{K}} \neq \emptyset$, assign sub-channels as follows.
    \begin{itemize}
 \item[(1)]  Let $\rho_{k',n'}=1$ where $\{k',n'\}=\underset{k \in \tilde{\mathcal{K}}, n\in\mathcal{N}_2}{\arg\max} \phi_{k,n}$.
 \item[(2)]  Update sets: ~~~$\mathcal{S}_{k'}=\mathcal{S}_{k'} \cup \{n'\}$;~~~~ $\mathcal{N}_1= \mathcal{N}_1 \cup \{n'\}$; ~~~~$\mathcal{N}_2=\mathcal{N} \setminus \mathcal{N}_1$. 
 \item[(3)] If $|\mathcal{S}_{k'}| = \tilde{n}_{k^{'}}^*$, then $\tilde{\mathcal{K}}=\tilde{\mathcal{K}} \setminus \{k'\}$.
 \end{itemize}
 \textbf{Step 2}: If $\mathcal{N}_2 \neq \emptyset$, assign remaining sub-channels as follows.
 ~For each $n\in \mathcal{N}_2$,  let $\rho_{k',n}=1$ where $k'= \arg\max_{k \in \mathcal K} 
  \phi_{k,n}$.
  \end{algorithm}
  
\subsubsection{Adjustment of Offloaded Data Allocation}
Based on results from Phase  $1$--$3$, for each user, say $k$, this phase allocates the total offloaded data $\ell_k^*$ over assigned sub-channels $\mathcal{S}_k$ for minimizing the individual mobile energy consumption. The corresponding optimization problem is formulated as below with the solution given in Proposition~\ref{Pro:SubDataAll}.
\begin{equation}\tag{$\textbf{P6}$} 
\begin{aligned}
\min_ {\{\ell_{k,n}\} }   ~ &\sum_{n\in \mathcal{S}_k}  \dfrac{1}{\bar{h}_{k,n}^2}  \bar{f}\!\l(\ell_{k,n}\r) \\
\text{s.t.}\quad 
 &\sum_{n \in \mathcal{S}_k } \ell_{k,n}= \ell_k^*, \\
& \ell_{k,n} \ge0, ~\ n\in \mathcal{S}_k .
\end{aligned}
\end{equation}

\begin{proposition}\label{Pro:SubDataAll}\emph{For user $k$, the optimal offloaded data allocation solving Problem P6 is
$$\ell_{k,n}^*=\l[\bar{B} T \log_2 \l(\dfrac{ \xi_k \bar{B}T  \bar{h}_{k,n}^2}{\bar{N}_0 \ln 2}\r)\r]^+ \quad \text{for}~ n \in \mathcal{S}_k$$ where $\xi_k$ satisfies $\sum_{n \in \mathcal{S}_k } \ell_{k,n}^*=\ell_k^*$.}
\end{proposition}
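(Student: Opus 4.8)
The plan is to recognize Problem P6 as a standard water-filling problem and solve it through its KKT conditions. First I would note that P6 is convex: $\bar{f}(x)=\bar{N}_0(2^{x/(\bar{B}T)}-1)$ is strictly convex and increasing, so the objective $\sum_{n\in\mathcal{S}_k}\bar{h}_{k,n}^{-2}\bar{f}(\ell_{k,n})$ is strictly convex, while the feasible set is the polytope cut out by the single linear equality $\sum_{n\in\mathcal{S}_k}\ell_{k,n}=\ell_k^*$ and the nonnegativity constraints $\ell_{k,n}\ge0$. Slater's condition holds whenever $\ell_k^*>0$ (the trivial case $\ell_k^*=0$ forces $\ell_{k,n}^*=0$ for all $n$, consistent with the claimed formula), so the KKT conditions are necessary and sufficient, and strict convexity gives uniqueness of the optimum.

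Next I would form the Lagrangian $\widetilde{L}_k = \sum_{n\in\mathcal{S}_k}\bar{h}_{k,n}^{-2}\bar{f}(\ell_{k,n}) + \xi_k\bigl(\ell_k^* - \sum_{n\in\mathcal{S}_k}\ell_{k,n}\bigr) - \sum_{n\in\mathcal{S}_k}\nu_{k,n}\ell_{k,n}$, where $\xi_k$ is the multiplier of the sum constraint and $\nu_{k,n}\ge0$ those of $\ell_{k,n}\ge0$. Stationarity gives $\bar{h}_{k,n}^{-2}\bar{f}'(\ell_{k,n}^*) = \xi_k + \nu_{k,n}$ for each $n\in\mathcal{S}_k$, together with complementary slackness $\nu_{k,n}\ell_{k,n}^*=0$. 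Using $\bar{f}'(x)=\bar{N}_0\frac{\ln2}{\bar{B}T}2^{x/(\bar{B}T)}$: when $\ell_{k,n}^*>0$ we have $\nu_{k,n}=0$, and solving $\bar{h}_{k,n}^{-2}\bar{N}_0\frac{\ln2}{\bar{B}T}2^{\ell_{k,n}^*/(\bar{B}T)} = \xi_k$ yields $\ell_{k,n}^* = \bar{B}T\log_2\!\bigl(\xi_k\bar{B}T\bar{h}_{k,n}^2/(\bar{N}_0\ln2)\bigr)$, which is positive precisely when $\xi_k\bar{B}T\bar{h}_{k,n}^2/(\bar{N}_0\ln2)>1$; when $\ell_{k,n}^*=0$, the constraint $\nu_{k,n}\ge0$ becomes $\bar{h}_{k,n}^{-2}\bar{f}'(0)\ge\xi_k$, i.e. $\xi_k\bar{B}T\bar{h}_{k,n}^2/(\bar{N}_0\ln2)\le1$. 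The two branches are exactly unified by $\ell_{k,n}^* = \bigl[\bar{B}T\log_2(\xi_k\bar{B}T\bar{h}_{k,n}^2/(\bar{N}_0\ln2))\bigr]^+$.

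It remains to pin down $\xi_k$, which I would do via the primal feasibility constraint. Define $G_k(\xi)=\sum_{n\in\mathcal{S}_k}\bigl[\bar{B}T\log_2(\xi\bar{B}T\bar{h}_{k,n}^2/(\bar{N}_0\ln2))\bigr]^+$; this is continuous and nondecreasing in $\xi$, vanishes for all sufficiently small $\xi>0$, and tends to $+\infty$ as $\xi\to\infty$, so for any target $\ell_k^*>0$ there exists $\xi_k$ with $G_k(\xi_k)=\ell_k^*$, i.e. $\sum_{n\in\mathcal{S}_k}\ell_{k,n}^*=\ell_k^*$, which is the stated characterization. This $\xi_k$, the formula above, and $\nu_{k,n}$ recovered from stationarity jointly satisfy all KKT conditions, so by sufficiency they give the (unique) optimum. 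The only real care needed is bookkeeping: matching the $\ell_{k,n}^*=0$ branch of the KKT conditions precisely to the $[\cdot]^+$ truncation and checking that the range of $G_k$ covers every admissible $\ell_k^*>0$ so that $\xi_k$ is well defined; beyond this there is no genuine difficulty.
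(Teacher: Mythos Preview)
Your proposal is correct and is exactly the standard KKT/water-filling derivation one expects here; the paper itself states Proposition~\ref{Pro:SubDataAll} without proof (treating it as a routine consequence of convexity and the Lagrangian), so your argument fills in precisely the details the authors omit.
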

Note that it is possible that some sub-channels are allocated to user $k$ but without offloaded data allocation due to their poor sub-channel gains. For each user, the optimal solution is obtained by performing one-dimension search for $\xi_k$, whose computation complexity is $\mathcal{O}(N \log{(1/\varepsilon)})$ since $|\mathcal{S}_k|\le N$. Thus, the total complexity of this phase is  $\mathcal{O}(KN \log{(1/\varepsilon)})$, considering offloaded data allocation for all users. 

\begin{remark}[Low-Complexity Algorithm]\label{Rem:OFDMComplexity}\emph{Based on above discussion, the total complexity for the proposed sequential sub-optimal algorithm is up to $\mathcal{O}(K+N+KN \log{(1/\varepsilon)})$. It significantly reduces the computation complexity compared with that of relaxation-and-rounding  policy, which has complexity up to $\mathcal{O}((KN)^{3.5} \log{(1/\varepsilon)}+N)$  solved by CVX \cite{ben2001lectures}.}
\end{remark}

\begin{figure}[t]
\begin{center}
\includegraphics[width=7cm]{./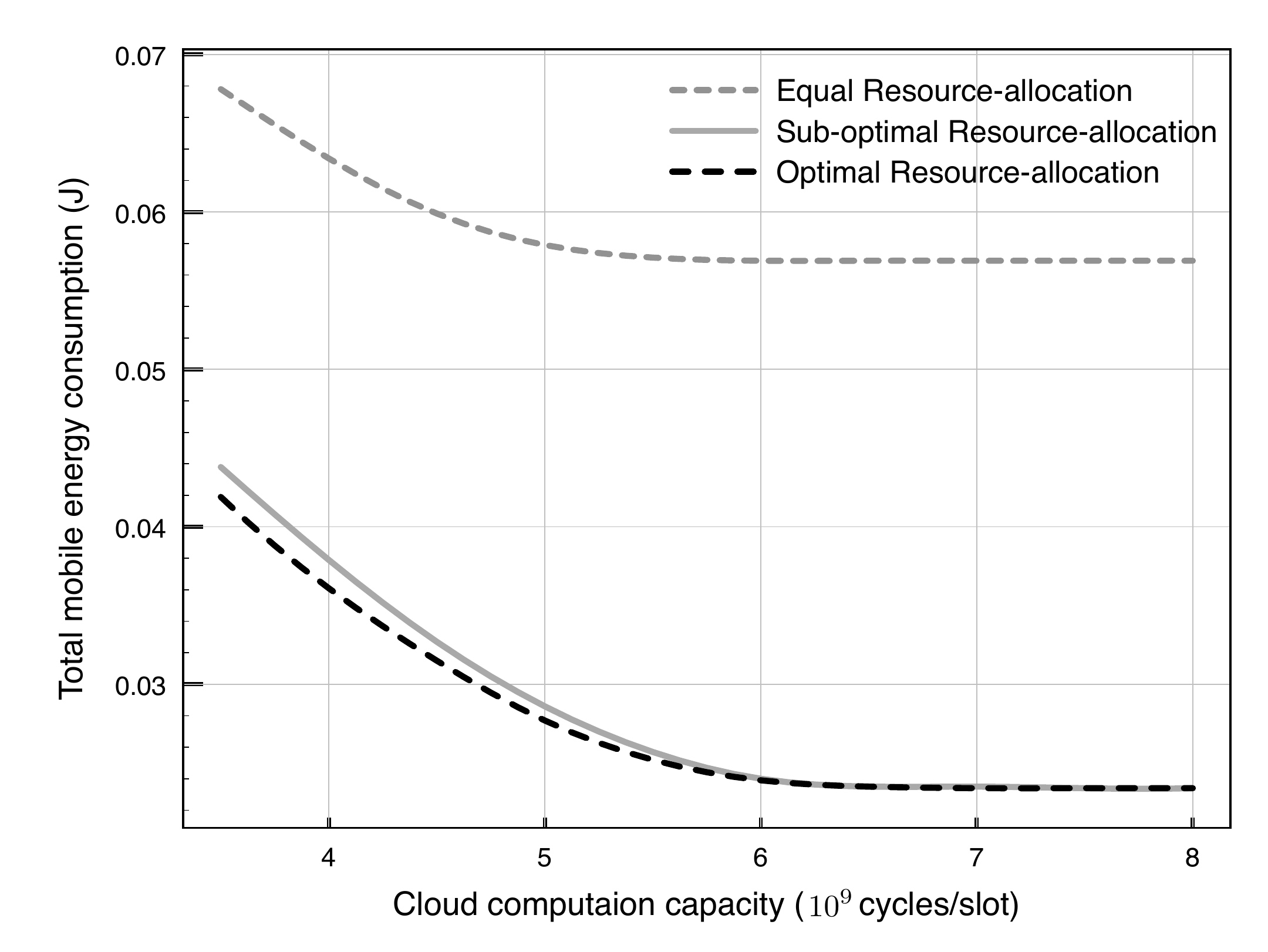}
\caption{Total mobile energy consumption vs. cloud computation capacity  for a TDMA system.}
\label{Fig:egy_vs_capacity}
\end{center}
\end{figure}

\subsection{Multiuser MECO for OFDMA: Finite Cloud Capacity}\label{Sec:OFDMAFinite}
For the case of finite-capacity  cloud based on OFDMA, the corresponding sub-optimal low-complexity algorithm can be derived by modifying that of infinite-capacity cloud  as follows. 

Recall that for TDMA MECO, modifying the offloading priority function of infinite-capacity cloud leads to the optimal resource allocation for the finite-capacity cloud. Therefore, by the similar method, modifying Phase $2$ to account for the finite computation capacity will give the new optimal initial resource and offloaded data allocation for all users. Other phases in Section~\ref{Sec:OFDMAInfinite} can be straightforwardly extended to the current case and are omitted for simplicity.
\section{Simulation Results}

\begin{figure*}[t!]
\centering
\subfigure[Effect of the time slot duration.]{\label{Fig:egy_vs_T}
\includegraphics[width=7cm]{./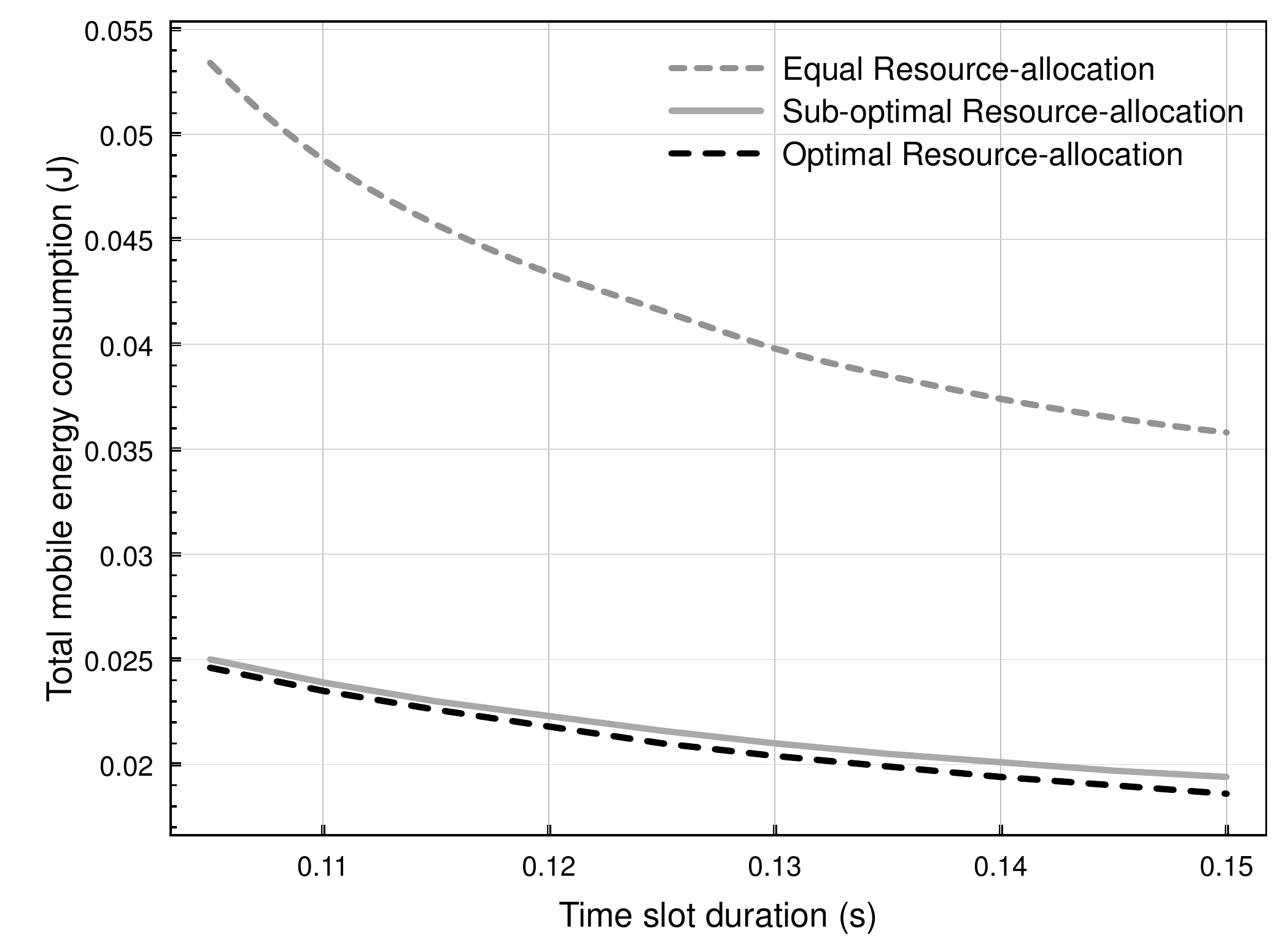}}
\hspace{15mm}
\subfigure[Effect of the number of users.]{\label{Fig:egy_vs_mobiles}
\includegraphics[width=7cm]{./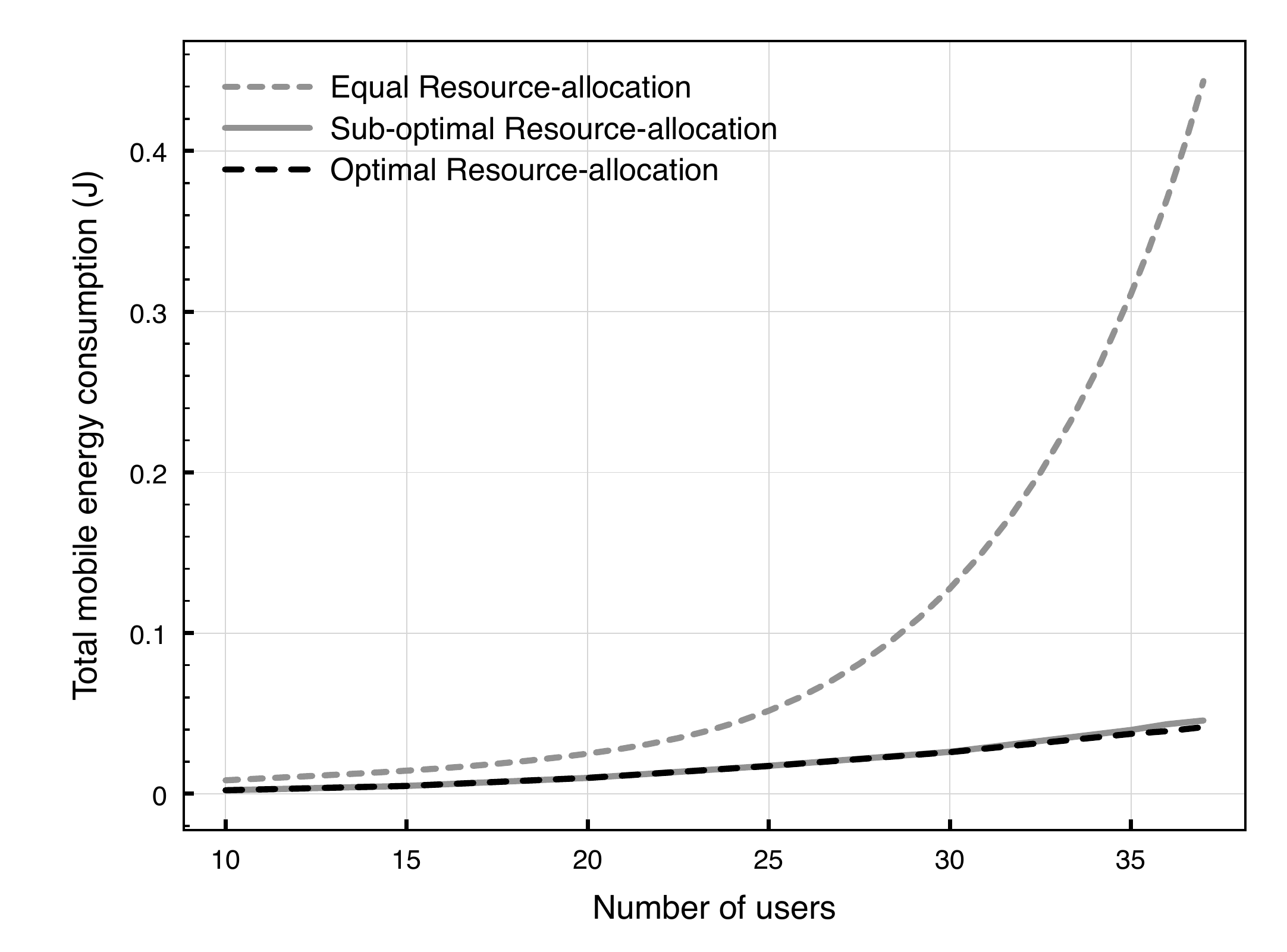}}
\caption{(a) Total mobile energy consumption vs. time slot duration for a TDMA system. (b) Total mobile energy consumption vs. number of users for a  TDMA system.}
\end{figure*}

\begin{figure*}[t!]
\centering
\subfigure[Effect of the number of sub-channels. ]{\label{Fig:egy_vs_sub_ofdm}
\includegraphics[width=7cm]{./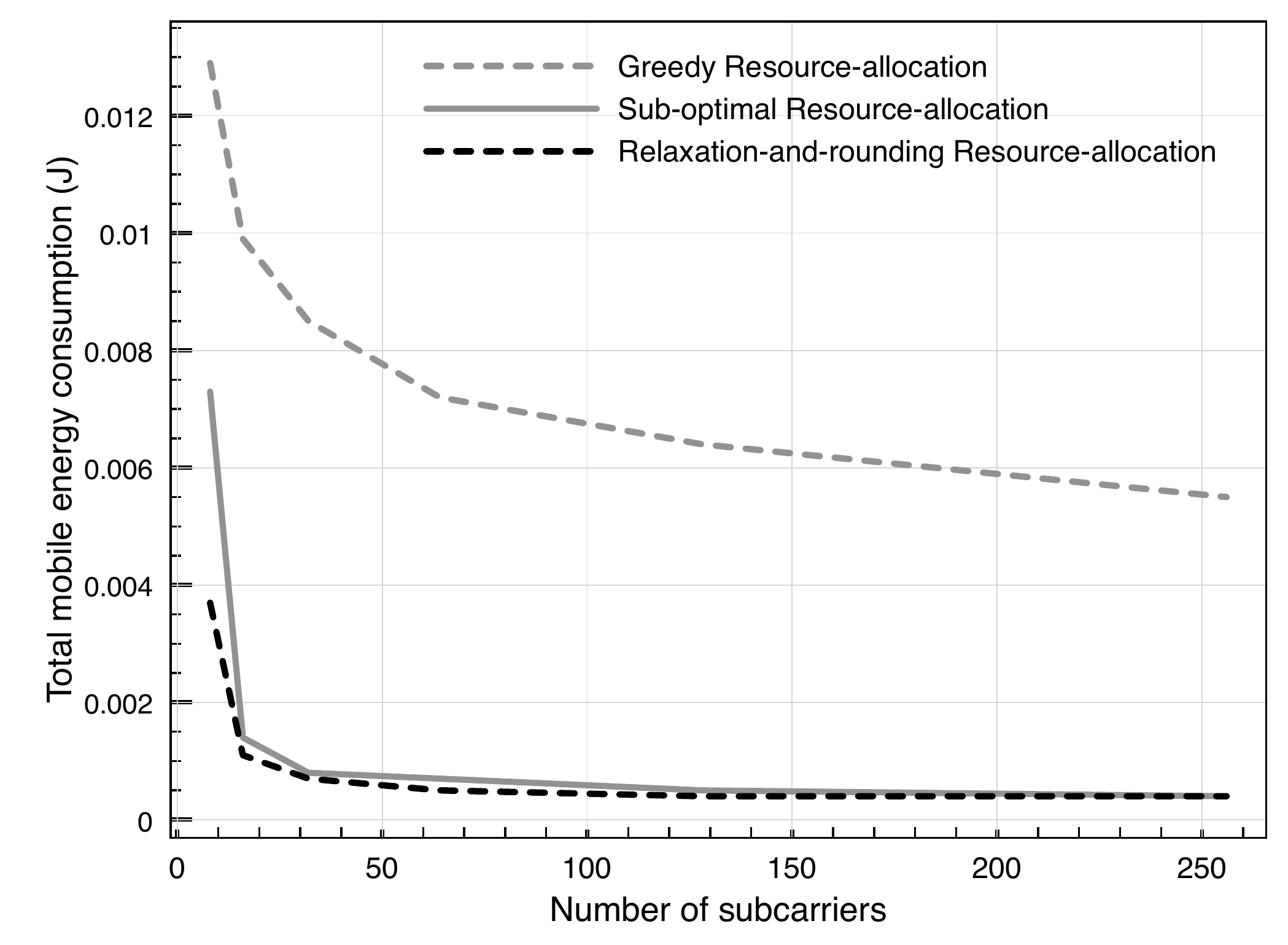}}
\hspace{15mm}
\subfigure[Effect of the number of users. ]{\label{Fig:egy_vs_mobiles_ofdm}
\includegraphics[width=7cm]{./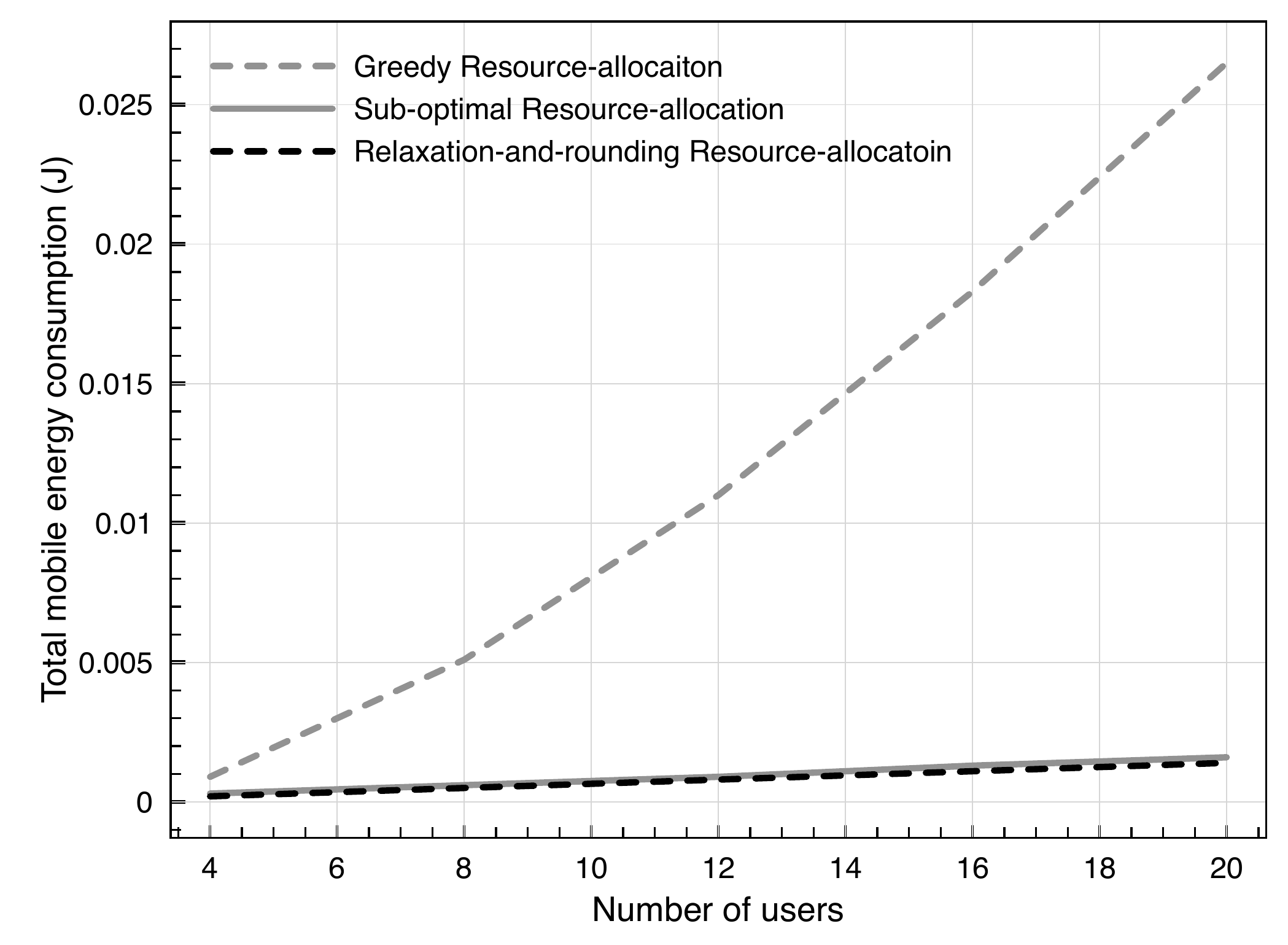}}
\caption{(a) Total mobile energy consumption vs. number of sub-channels for an OFDMA system. (b) Total mobile energy consumption vs. number of users for an OFDMA system.}
\label{Fig:Sim:Static:a}
\end{figure*}

In this section, the performance of the proposed resource-allocation algorithms for both TDMA and OFDMA systems is evaluated by simulation based on $200$ channel realizations. The simulation settings are  as follows unless specified  otherwise.  There are $30$ users  with equal fairness  factors, i.e., $\beta_k=1$ for all $ k$ such that the weighted sum mobile energy consumption represents the total mobile energy consumption. The time slot  $T=100$ ms.  Both channel $h_k$ in TDMA and sub-channel $h_{k,n}$ in OFDAM  are modeled as  independent Rayleigh fading   with average power loss set as $10^{-3}$.  The variance of complex white Gaussian channel noise  $N_0=10^{-9}$ W. Consider mobile $k$. The  computation capacity $F_k$ is  uniformly selected  from the set $\{0.1, 0.2, \cdots, 1.0\}$ GHz  and the local computing energy per cycle $P_k$ follows a uniform distribution in the range $(0, 20\times 10^{-11})$  J/cycle similar to \cite{chen2015efficient}. For the computing task, both the data size and required number of  CPU cycles per bit follow the uniform distribution with $R_k \in [100, 500]$ KB and $C_k \in [500, 1500]$ cycles/bit. All random variables are  independent for different mobiles, modeling heterogeneous mobile computing capabilities. Last, the finite-capacity cloud is modeled by the one with upper-bounded offloaded computation, set as $F=6\times 10^9$ cycles per slot. \footnote{The performance of finite-capacity cloud with non-negligible computing time has similar observations and is omitted due to limited space.}

\subsection{Multiuser MECO for TDMA}

Consider a  MECO system where the bandwidth $B=10$ MHz. For performance comparison, a  baseline \emph{equal resource-allocation} policy is considered, which  allocates equal offloading time duration for mobiles that satisfy $\upsilon_k > 1$ and based on this, the offloaded data sizes are optimized.

The curves of total mobile energy consumption versus the cloud computation capacity are displayed in Fig.~\ref{Fig:egy_vs_capacity}. It can be observed that the performance of the sub-optimal policy  approaches to that of the optimal one when the cloud computation capacity increases and achieves  substantial energy savings gains over the equal resource-allocation policy. Furthermore, the total mobile energy consumption is invariant after the cloud computation capacity exceeds some  threshold (about $6\times 10^9$). This suggests that   there exists some critical value for the cloud computation capacity,  above which   increasing the capacity yields no reduction on the total 
 mobile energy consumption. 

Fig.~\ref{Fig:egy_vs_T} shows the curves of total mobile energy consumption versus the time slot  duration $T$. Several observations can be made. First, the total mobile energy consumption reduces  as the time-slot duration grows. Next,  the sub-optimal policy computed using Algorithm~\ref{Alg:WLA:Sopt} is found to  have close-to-optimal performance and yields total mobile energy consumption  less than half of that for  the equal resource-allocation policy. The energy reduction is more significant for a shorter time slot duration. The reason is that without the optimization on time fractions, the offloading energy of baseline policy grows exponentially as the allocated time fractions decreases.

Next, Fig.~\ref{Fig:egy_vs_mobiles} plots the curves of total energy consumption versus the number of mobiles given fixed cloud computation capacity set as $F=6\times 10^9$ cycles per slot. It shows the total energy consumption of the proposed policy grows with the number of mobiles at a  much slower rate   than that of the equal-allocation policy.  Again, the designed sub-optimal policy is observed to have close-to-optimality.

\subsection{Multiuser MECO for OFDMA}
Consider an OFDMA system where $F=5\times 10^{15}$ cycles per slot (modeling large cloud capacity), $\bar{B}=1$ MHz and $\bar{N}_0=10^{-9}$ W. The proposed low-complexity sub-optimal resource allocation policy is compared with two baseline policies. One is the \emph{relaxation-and-rounding resource-allocation} policy, for which the integer-relaxation convex problem is computed by a convex problem solver, CVX in Matlab, and then the integer solution is determined by rounding technique. The other one is a \emph{greedy resource-allocation} policy. It assigns  each sub-channel  to the user that has highest offloading priority over this sub-channel, followed by the optimal data allocation over assigned sub-channels for each user. However, this policy does not consider the effect of heterogeneous computation loads.

Fig.~\ref{Fig:egy_vs_sub_ofdm} depicts the curves of total mobile energy consumption versus the number of sub-channels in an OFDMA MECO system with $8$ users. It can be observed that the performance of proposed sub-optimal resource allocation is close to that of relaxation-and-rounding policy, especially when the number of sub-channels is large (e.g., $256$). However,  the proposed sub-optimal policy has much smaller computation complexity as discussed in Remark~\ref{Rem:OFDMComplexity}. In addition, the proposed policy has significant energy-savings gain over the greedy policy. The reason is that it considers the varying computation loads over users and allocates more sub-channels to heavy-loaded users, while the greedy policy only offloads computation from users with high priorities. It also suggests that increasing the number of sub-channels has little effect on the total energy savings if this number is above a threshold (about $64$), but otherwise it decreases the total mobile energy consumption significantly. 

Fig.~\ref{Fig:egy_vs_mobiles_ofdm} gives the curves of total mobile energy consumption versus  number of users for an OFDMA system with $128$ sub-channels.  It shows that the energy consumptions for three policies increase with the number of users in the same trend that is almost linear. However, the proposed policy has much smaller increasing rate than the greedy one and approaches the performance of  the relaxation-and-rounding policy.

\section{conclusion}

This work studies resource allocation for a multiuser MECO system based on TDMA/OFDMA, accounting for both the cases of infinite and finite cloud computation capacities. For the TDMA MECO system, it shows that to achieve the minimum weighted sum mobile energy consumption, the optimal resource allocation policy  should have a threshold-based structure. Specifically, we derive an offloading priority function that depends on the local computing energy and channel gains. Based on this, the BS makes a binary offloading decision for each mobile, where users with priorities above and below a given threshold will perform complete and minimum offloading.  
Furthermore, a simple sub-optimal algorithm is proposed to reduce the complexity for computing the threshold for finite-capacity cloud.
 Then, we extend this threshold-based policy structure to the OFDMA system and design a low-complexity algorithm to solve the formulated  mixed-integer optimization problem, which has close-to-optimal performance in simulation.

\appendix
\subsection{Proof of Lemma~\ref{Lem:ConvexProblem}}\label{App:ConvexProblem}
Since $f(x)$ is a convex function, its perspective function \cite{boyd2004convex}, i.e., $t_k f(\frac{\ell_k}{t_k})$, is still convex. Using the same technique in \cite{wang2008power}, jointly considering the cases for $t_k=0$ and $t_k>0$,  $f(x)$ is still convex.
Thus, the objective function, the summation of a set of convex functions, preserves the convexity. Combining it with the linear convex constraints  leads to the result. \!\hfill $\blacksquare$

\subsection{Proof of Lemma~\ref{Lem:PriStrongLA}}\label{App:PriStrongLA}
First, we derive a general result that is the root of equation: $f^{'-1}\!\l(p \r)=g^{-1}\!\l( y \r)$ with respect to $y$ as follows. 
 According to the definitions of $f(x)$ and $g(x)$, it has 
\begin{equation}\label{Eq:Ffunction}
f^{'}(x)=\dfrac{N_0 \ln 2}{B} 2^{\frac{x}{B}} ~\text{and}~ f^{'-1}(y)=B\log_2\l(\dfrac{By}{N_0 \ln 2}\r).
\end{equation} Therefore, the solution for the general equation is
\begin{align}\label{Eq:general}
y&=g(f^{'-1}(p))=f(f^{'\!\!-1}(p))-f^{'\!-\!1}(p) \times f^{'}\!\!(f^{'\!-\!1}(p)) \nn \\
&=f(f^{'-1}(p))- f^{'-1}(p) \times p\nn \\
& =\dfrac{Bp}{\ln 2}-N_0-pB\log_2\l(\dfrac{Bp}{N_0 \ln 2}\r).
\end{align}
Note that to ensure $\ell_k^{*(2)}\ge 0$ in Problem P1, it requires  $f^{'-1}(C_kP_k h_k^2)\ge 0$ from  \eqref{Eq:LandT}. Combining this with \eqref{Eq:Ffunction}, it leads to $v_k \ge 1$ where $v_k$ is defined in \eqref{Eq:Const}. Then, substituting $p=C_k P_k h_k^2$ and $y=\frac{-h_k^2 x}{\beta_k}$ to \eqref{Eq:general} and making arithmetic operations gives the desired result as in \eqref{Eq:OffPriority}.
\hfill $\blacksquare$

\subsection{Proof of Lemma~\ref{Lem:Phi}}\label{App:Phi}
First, the monotone increasing property in terms of $\beta$ is straightforward, since the offloading priority function in \eqref{Eq:OffPriority} is linear to $\beta$. Next, by rewriting \eqref{Eq:OffPriority} as 
$$\phi(\beta, C, P, h)=\beta BC P \l[ \log_2\l(\frac{B C P h^2}{N_0 \ln 2}\r)  -\dfrac{1}{\ln2}\r]+\dfrac{\beta N_0}{h^2},$$ it is easy to conclude that $\phi(\beta, C, P, h)$ is monotone increasing with respect to $C$ and $P$. Last, the first derivative of $\phi(\beta, C, P, h)$ for $h$ can be derived as: 
\begin{align*}
\dfrac{\partial \phi(\beta, C, P, h)}{\partial h}=\frac{2\beta(BCPh^2-N_0 \ln2)}{h^3 \ln 2}.
\end{align*}
For $\upsilon=\frac{BCPh^2}{N_0 \ln2}\ge 1$, we have $\dfrac{\partial \phi(\beta, C, P, h)}{\partial h}\ge0$, leading to the desired results.
\hfill $\blacksquare$

\subsection{Proof of Theorem~\ref{Theo:OptiPolicyP2}}\label{App:OptiPolicyP2}
First, to prove this theorem, we need the following two lemmas which can be easily proved using the definition of Lambert function and its property.
\begin{lemma}\label{Lem:GInverse}\emph{The function $g^{-1}(y)$ can be expressed as $g^{-1}(y)=\dfrac{B \l[W_0(\frac{y+N_0}{-N_0 e})+1\r]}{\ln 2}.$
}
\end{lemma}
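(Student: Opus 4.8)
\textbf{Proof proposal for Lemma~\ref{Lem:GInverse}.}

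The plan is to invert the function $g$ explicitly using the definition of the Lambert $W$ function. Recall from the definitions that $f(x)=N_0(2^{x/B}-1)$ and $g(x)=f(x)-xf'(x)$, and from \eqref{Eq:Ffunction} that $f'(x)=\frac{N_0\ln 2}{B}2^{x/B}$. First I would substitute these into the definition of $g$ to obtain a closed form
\begin{equation*}
g(x)=N_0\!\left(2^{x/B}-1\right)-\frac{N_0\ln 2}{B}\,x\,2^{x/B}
 = -N_0 - N_0\,2^{x/B}\!\left(\frac{x\ln 2}{B}-1\right).
\end{equation*}
Setting $y=g(x)$, the goal is to solve for $x$. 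Rearranging gives $\frac{y+N_0}{-N_0}=2^{x/B}\!\left(\frac{x\ln 2}{B}-1\right)$, and the right-hand side should be massaged into the canonical form $w e^{w}$ so that $W_0$ applies.

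The key algebraic step is the substitution $w=\frac{x\ln 2}{B}-1$, equivalently $x=\frac{B(w+1)}{\ln 2}$; then $2^{x/B}=2^{(w+1)/\ln 2}=e^{w+1}$, so the equation becomes $\frac{y+N_0}{-N_0}=e^{w+1}\,w$, i.e. $\frac{y+N_0}{-N_0 e}=w e^{w}$. By the definition of the principal branch of the Lambert function, $w=W_0\!\left(\frac{y+N_0}{-N_0 e}\right)$, and hence
\begin{equation*}
x=g^{-1}(y)=\frac{B\!\left[W_0\!\left(\frac{y+N_0}{-N_0 e}\right)+1\right]}{\ln 2},
\end{equation*}
which is exactly the claimed expression.

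The only subtlety — and the point I would be most careful about — is branch selection for $W$: the argument $\frac{y+N_0}{-N_0 e}$ is negative for the relevant range of $y$ (since offloading corresponds to $g$ evaluated at positive arguments, where $g$ is negative, so $y+N_0>0$ and the argument lies in $(-1/e,0)$), and one must argue that the principal branch $W_0$ is the correct one. This follows because in the regime of interest $x\ge 0$, so $w=\frac{x\ln 2}{B}-1\ge -1$, which is precisely the range of $W_0$; the other branch $W_{-1}$ takes values $\le -1$ and is therefore excluded. I would state this monotonicity/range check briefly and then conclude. No genuine obstacle is expected here — the result is a direct computation once the right substitution is spotted.
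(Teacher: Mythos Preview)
Your proposal is correct and matches the paper's approach: the paper simply asserts that Lemma~\ref{Lem:GInverse} ``can be easily proved using the definition of Lambert function and its property'' without giving details, and your substitution $w=\frac{x\ln 2}{B}-1$ to bring $g(x)=y$ into the canonical form $we^{w}=\frac{y+N_0}{-N_0 e}$ is exactly the intended computation. Your additional remark on branch selection (that $x\ge 0$ forces $w\ge -1$, hence $W_0$ rather than $W_{-1}$) is a nice point the paper leaves implicit.
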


\begin{lemma}\label{Lem:IncreaseG}\emph{
The function $g^{-1}(y)$ is a monotone decreasing function for $y<0$.}
\end{lemma}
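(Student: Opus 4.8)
The plan is to argue directly from the closed form of $g^{-1}$ recorded in Lemma~\ref{Lem:GInverse}, namely $g^{-1}(y)=\frac{B}{\ln 2}\bigl[W_0\bigl(\frac{y+N_0}{-N_0 e}\bigr)+1\bigr]$, and show its $y$-derivative is negative on $y<0$. First I would pin down the range of the Lambert argument $z(y):=\frac{y+N_0}{-N_0 e}$ as $y$ runs over $(-\infty,0)$: since $N_0>0$, the map $y\mapsto z(y)$ is affine with negative slope $-\frac{1}{N_0 e}$, hence strictly decreasing; it equals $-1/e$ at $y=0$ and tends to $+\infty$ as $y\to-\infty$, so $z(y)$ sweeps precisely $(-1/e,+\infty)$, the domain on which the principal branch $W_0$ is real and strictly increasing.

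Next I would invoke the standard property $W_0'(z)=\frac{W_0(z)}{z(1+W_0(z))}>0$ for $z\in(-1/e,\infty)\setminus\{0\}$ (with $W_0'(0)=1$), so $W_0'>0$ throughout the relevant domain. The chain rule then gives $\frac{d}{dy}g^{-1}(y)=\frac{B}{\ln 2}\,W_0'\bigl(z(y)\bigr)\,z'(y)=-\frac{B}{N_0 e\ln 2}\,W_0'\bigl(z(y)\bigr)<0$, since every factor other than the explicit minus sign is positive. This yields the claimed strict monotone decrease of $g^{-1}$ on $(-\infty,0)$. As an independent consistency check (and an alternative proof avoiding the Lambert form), note $g(x)=f(x)-xf'(x)$ gives $g'(x)=-xf''(x)$, and since $f(x)=N_0(2^{x/B}-1)$ is strictly convex, $g$ is strictly decreasing on $x>0$ with $g(0)=0$; thus $g$ maps $[0,\infty)$ bijectively and strictly decreasingly onto $(-\infty,0]$, and its inverse is strictly decreasing on $(-\infty,0)$. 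I would present the Lambert-based computation for consistency with Lemma~\ref{Lem:GInverse} and mention the $g'(x)=-xf''(x)$ identity as corroboration.

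The only genuine subtlety — hence the step I would be most careful about — is the branch and domain bookkeeping for $W_0$: one must verify that for $y<0$ the argument $z(y)$ never leaves $(-1/e,\infty)$, so that $W_0$ stays single-valued, differentiable, and increasing, and in particular that the endpoint $z=-1/e$ (where $W_0'$ blows up) is approached only in the limit $y\uparrow 0$ and never attained on the open interval. Once that is settled, the conclusion is a one-line chain-rule calculation.
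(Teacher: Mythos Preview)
Your proposal is correct and aligns with the paper's own treatment: the paper does not spell out a proof but simply states that Lemmas~\ref{Lem:GInverse} and~\ref{Lem:IncreaseG} ``can be easily proved using the definition of Lambert function and its property,'' which is precisely the route your primary argument takes via the closed form $g^{-1}(y)=\frac{B}{\ln 2}\bigl[W_0\bigl(\frac{y+N_0}{-N_0 e}\bigr)+1\bigr]$ and the monotonicity of $W_0$. Your domain bookkeeping for the Lambert argument and the chain-rule sign computation are accurate, and the alternative $g'(x)=-xf''(x)$ argument you include is a clean self-contained corroboration that the paper does not mention.
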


Then, consider case 1) in Theorem~\ref{Theo:OptiPolicyP2}. Note that  for mobile $k$, if  $m_k^+=0$ and $\upsilon_k\le 1$, it results in $\ell_k^{*(2)}=0$ derived from \eqref{Eq:OptL}. Thus, if these two conditions are satisfied for all $k$, it leads to $\ell_k^{*(2)}=t_k^{*(2)}=0$. 

For case 2), if there exists  mobile $k$ such that $\upsilon_k >1$ or $m_k^+>0$, it leads to $\ell_k^{*(2)}>0$. And the time-sharing constraint should be active since remaining time can be used for extending offloading duration so as to reduce  transmission energy. 
Moreover, consider each user, say user $k$. If $\upsilon_k \ge 1$, then from \eqref{Eq:OptL} and \eqref{Eq:OptT}, $\{\ell_k^{*(2)}, t_k^{*(2)}\}$ should satisfy the following:
\begin{subequations}
\begin{align}
\dfrac{\ell_k^{*(2)}}{ t_k^{*(2)}}&\!=\!\min\l\{~ \max \l[\dfrac{m_k^+}{t_k^{*(2)}}, f^{'-1}\!\l(C_k P_k h_k^2 \r)\r], ~\dfrac{R_k}{t_k^{*(2)}}\r\} \label{Eq:TL1} \\
&\!=\!\max \l\{~ \dfrac{m_k^+}{t_k^{*(2)}}, ~\min \l[f^{'-1}\!\l(C_k P_k h_k^2 \r), \dfrac{R_k}{t_k^{*(2)}}\r]\r\} \label{Eq:TL2} \\
&=g^{-1}\!\l(\frac{-h_k^2 \lambda^*}{\beta_k} \r). \label{Eq:TL3}
\end{align}
\end{subequations}
Using Lemma~\ref{Lem:PriStrongLA} and Lemma~\ref{Lem:IncreaseG}, we have the following:
\begin{enumerate}
\item{If $\phi_k>\lambda^*\ge 0$, it has $-h_k^2 \phi_k<-h_k^2 \lambda^*\le 0$.  Then,  from \eqref{Eq:TL1}, it gives 
 \begin{align}\label{Eq:HighPri}
 \max & \l[\dfrac{m_k^+}{t_k^{*(2)}}, f^{'-1}\!\l(C_k P_k h_k^2 \r)\r] \ge f^{'-1}\!\l(C_k P_k h_k^2\r) \nn \\ 
 &=g^{-1}\!\l(\frac{-h_k^2 \phi_k}{\beta_k} \r)>g^{-1}\!\l(\frac{-h_k^2 \lambda^*}{\beta_k} \r). 
 \end{align} 
 From \eqref{Eq:TL1}, \eqref{Eq:TL3} and \eqref{Eq:HighPri}, it follows that
$\ell_k^{*(2)}=R_k$.}
\item{If $\phi_k=\lambda^*$, it has $f^{'-1}\!\l(C_k P_k h_k^2\r)=g^{-1}\!\l(\frac{-h_k^2 \lambda^*}{\beta_k} \r)$.} 
\item{If $0\le \phi_k<\lambda^*$, it has $-h_k^2 \phi_k>-h_k^2 \lambda^*$. Combining it with \eqref{Eq:TL2} leads to 
 \begin{align}\label{Eq:LowPri}
 \min &\l[ f^{'-1}\!\l(C_k P_k h_k^2 \r),\dfrac{R_k}{t_k^{*(2)}},\r] \le f^{'-1}\!\l(C_k P_k h_k^2\r) \nn \\
 &=g^{-1}\!\l(\frac{-h_k^2 \phi_k}{\beta_k} \r)<g^{-1}\!\l(\frac{-h_k^2 \lambda^*}{\beta_k} \r).
 \end{align}
From \eqref{Eq:TL2}, \eqref{Eq:TL3} and \eqref{Eq:LowPri}, it follows that
$\ell_k^{*(2)}=m_k^+$.}
\end{enumerate}
Furthermore, if $\upsilon_k\!<\!1$, it has $\ell_k^{*(2)}\!=\!m_k^+$. Note that this case can be included in the scenario of $\phi_k\!<\!\lambda^*$ with the definition of $\phi_k$ in \eqref{Eq:OffPriority}. Last, from \eqref{Eq:TL3}, it follows that 
\begin{equation} \label{Eq:time}
t_k^{*(2)}=\dfrac{\ell_k^{*(2)}}{g^{-1}\!\l(\frac{-h_k^2 \lambda^*}{\beta_k} \r)}=\dfrac{\ell_k^{*(2)} \ln 2}{B\l[W_0(\frac{\lambda^* h_k^2/\beta_k-N_0}{N_0 e})+1\r]}
\end{equation} where \eqref{Eq:time} is obtained using Lemma~\ref{Lem:GInverse}, ending the proof. \hfill $\blacksquare$
\subsection{Proof of Lemma~\ref{Lem:OffPrioNLA}}\label{App:OffPrioNLA} 
First, by arithmetic operations with the Lambert function, it can be proved that  the solution for a general equation $x \ln x+ p x=q$ is $x=\dfrac{q}{W_0(q \times e^p )}.$ 

Next,  to solve equation \eqref{Eq:FunPriCloud2}, let $y_k=C_k P_k h_k^2-\dfrac{x C_k h_k^2}{\beta_k F^{'} }$ and use the derivation method in Lemma~\ref{Lem:PriStrongLA}, it has
\begin{equation}\label{Eq:y}
\dfrac{F^{'} y_k}{C_k}-F^{'} P_k h_k^2=\dfrac{By_k}{\ln 2}-N_0-y_kB\log_2\l(\dfrac{By_k}{N_0 \ln 2}\r).
\end{equation}
Defining $z_k=\dfrac{By_k}{N_0 \ln 2}$, \eqref{Eq:y} can be rewritten as
\begin{equation}\label{Eq:z}
z_k \ln z_k +(a_k-1)z_k=b_k-1,
\end{equation}
where $a_k$ and $b_k$ are defined in Lemma~\ref{Lem:OffPrioNLA}.
Using Lambert function, the solution for \eqref{Eq:z} can be obtained: $z_k=\widehat{\upsilon}_k$ where $\widehat{\upsilon}_k$ is defined in \eqref{Eq:Upsilon3}.
Then, it follows  that
\begin{align}
x&\overset{(a)}{=} \beta_k F^{'} P_k(1-\dfrac{N_0 \ln 2}{B P_k C_k h_k^2} z_k) \nn \\&\overset{(b)}{=} \beta_k F^{'} P_k(1-\dfrac{a_k}{b_k} z_k) \nn \\&\overset{(c)}{=}\dfrac{\beta_k N_0}{h_k^2} (z_k \ln z_k -z_k+1) \label{Eq:x}
\end{align}
where $(a)$ comes from the relationship among $x, y_k$ and $z_k$; $(b)$ follows the definition of $a_k$ and $b_k$; $(c)$ is derived from \eqref{Eq:z}.
This leads to  the desired result.\hfill $\blacksquare$

\subsection{Proof of Lemma~\ref{Lem:WandAB}}\label{App:WandAB} 
It is equivalent to proved as below that when $\widehat{\upsilon}>1$, it has $b \ge a$.
According to the definition of Lambert function, it has $b-1= W_0((b-1)e^{(b-1)}).$
Then, it leads to \begin{align}\label{Eq:WandAB}
\widehat{\upsilon}\!=\!\dfrac{b-1}{ W_0((b-1)e^{(a-1)})}=\dfrac{W_0((b-1)e^{(b-1)})}{ W_0((b-1)e^{(a-1)})} \ge 1.
\end{align}

Using the monotone increasing property of Lambert function, \eqref{Eq:WandAB} is equivalent to  $b \ge a$. \hfill $\blacksquare$


\begin{IEEEbiography}
[{\includegraphics[width=1in,clip,keepaspectratio]{./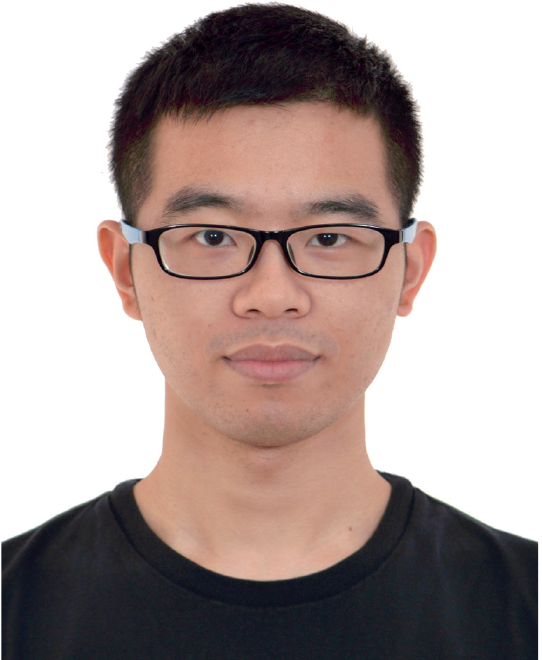}}]
{Changsheng You}(S'15) received the B.S. degree in electronic engineering and information science from the University of Science and Technology of China (USTC) in 2014. He is currently working towards the Ph.D. degree in electrical and electronic engineering at The University of Hong Kong (HKU). His research interests include mobile-edge computing, fog computing, wireless power transfer, energy harvesting systems and convex optimization.
\end{IEEEbiography}

\begin{IEEEbiography}
[{\includegraphics[width=1in,clip,keepaspectratio]{./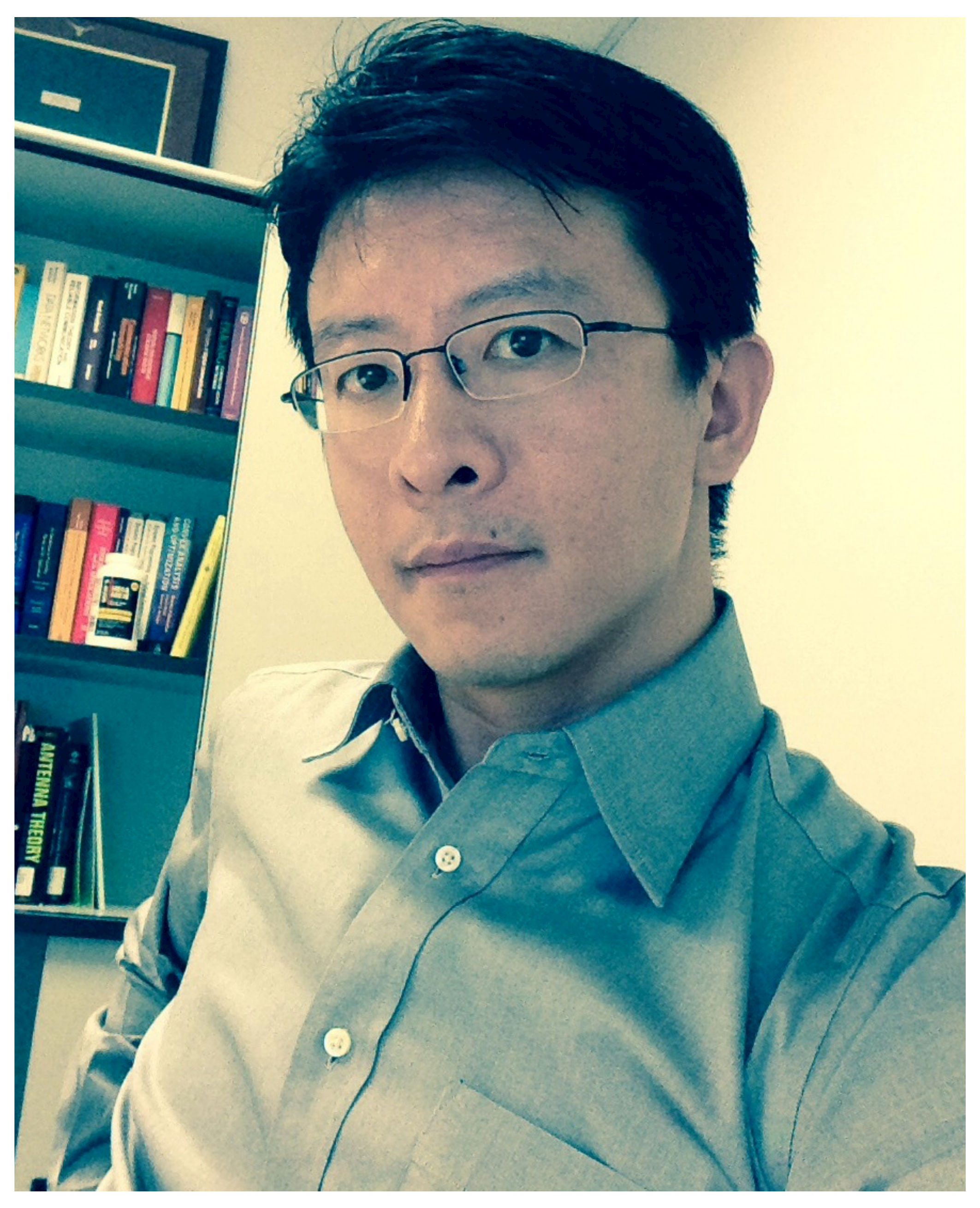}}]
{Kaibin Huang} (M'08-SM'13) received the B.Eng. (first-class hons.) and the M.Eng. from the National University of Singapore, respectively, and the Ph.D. degree from The University of Texas at Austin (UT Austin), all in electrical engineering.

Since Jan. 2014, he has been an assistant professor in the Dept. of Electrical and Electronic Engineering (EEE) at The University of Hong Kong. He is an adjunct professor in the School of EEE at Yonsei University in S. Korea. He used to be a faculty member in the Dept. of Applied Mathematics (AMA) at the Hong Kong Polytechnic University (PolyU) and the Dept. of EEE at Yonsei University. He had been a Postdoctoral Research Fellow in the Department of Electrical and Computer Engineering at the Hong Kong University of Science and Technology from Jun. 2008 to Feb. 2009 and an Associate Scientist at the Institute for Infocomm Research in Singapore from Nov. 1999 to Jul. 2004. His research interests focus on the analysis and design of wireless networks using stochastic geometry and multi-antenna techniques.

He frequently serves on the technical program committees of major IEEE conferences in wireless communications. He has been the technical chair/co-chair for the IEEE CTW 2013, the Comm. Theory Symp. of IEEE GLOBECOM 2014, and the Adv. Topics in Wireless Comm. Symp. of IEEE/CIC ICCC 2014 and has been the track chair/co-chair for IEEE PIMRC 2015, IEE VTC Spring 2013, Asilomar 2011 and IEEE WCNC 2011. Currently, he is an editor for IEEE Journal on Selected Areas in Communications (JSAC) series on Green Communications and Networking, IEEE Transactions on Wireless Communications, IEEE Wireless Communications Letters. He was also a guest editor for the JSAC special issues on communications powered by energy harvesting and an editor for IEEE/KICS Journal of Communication and Networks (2009-2015). He is an elected member of the SPCOM Technical Committee of the IEEE Signal Processing Society. Dr. Huang received the 2015 IEEE ComSoc Asia Pacific Outstanding Paper Award, Outstanding Teaching Award from Yonsei, Motorola Partnerships in Research Grant, the University Continuing Fellowship from UT Austin, and a Best Paper Award from IEEE GLOBECOM 2006 and PolyU AMA in 2013. 
\end{IEEEbiography}

\begin{IEEEbiography}
[{\includegraphics[width=1in,clip,keepaspectratio]{./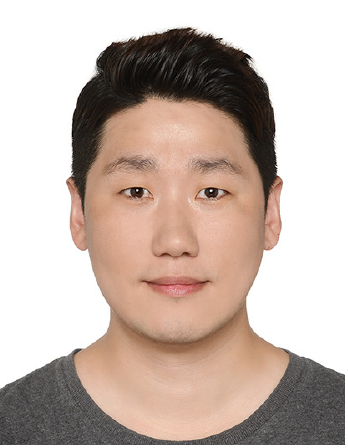}}]
{Hyukjin Chae} received the B.S and Ph.D degree in electrical and electronic
engineering from Yonsei University, Seoul, Korea. He joined LG Electronics,
Korea, as a Senior Research Engineer in 2012. His research interests
include interference channels, multiuser MIMO, D2D, V2X, and full duplex
radio. From Sep. 2012, he has contributed and participated as a delegate in
3GPP RAN1 with interests in ePDCCH, eIMTA, FD MIMO, Indoor positioning,
D2D, and V2X communications. He is an inventor of more than 100 patents.
\end{IEEEbiography}

\begin{IEEEbiography}
[{\includegraphics[width=1in,clip,keepaspectratio]{./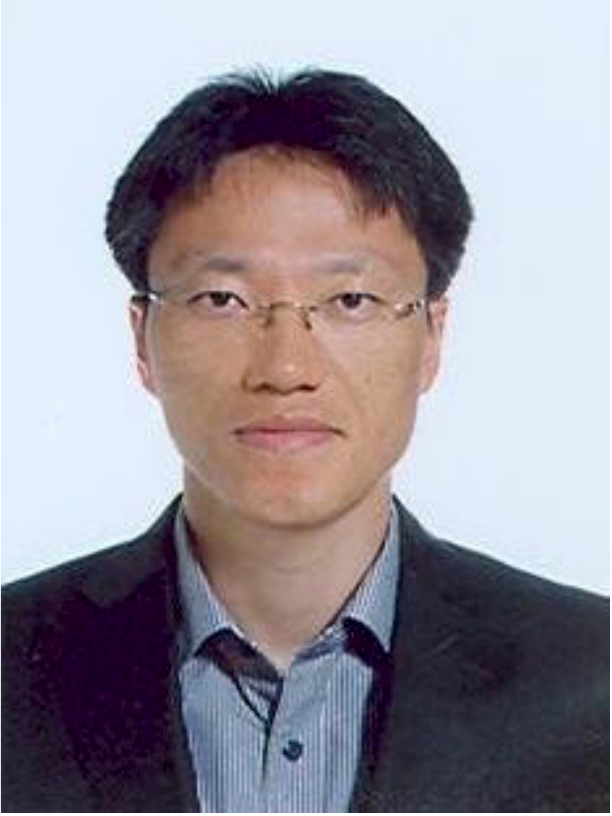}}]
{Byoung-Hoon Kim}  (S'95-M'01) received the B.S. and M.E. degrees in electronics engineering, and the Ph.D. degree in electrical engineering and computer science from Seoul National University, Seoul, Korea, in 1994, 1996, and 2000, respectively. From 2000 to 2003, he was with GCT Semiconductor, Seoul, developing W-CDMA and WLAN chip sets. From 2003 to 2008, he was with QUALCOMM Incorporated, San Diego, CA, where he was responsible for MIMO technology development and 3GPP LTE standard and design works. Since 2008, he has been with LG Electronics as Vice President and Research Fellow, developing advanced wireless technologies including 5G mobile communications and 3GPP LTE-Advanced/5G standards. He was also involved in IEEE 802.11 standard works and assumed the role of a member of board of directors of Wi-Fi Alliance from 2011 to 2012. His current research interest includes advanced channel coding, multiple access, V2X, massive MIMO, flexible and full duplex radio, and mmWave technologies.
Dr. Kim is co-author of Scrambling Techniques for CDMA Communications (Springer, 2001) and was elected as the 1st IEEE Communications Society Asia-Pacific Best Young Researcher in 2001.
\end{IEEEbiography}

\end{document}